\newcommand{\R}{\mathbb{R}}
\newcommand{\N}{{\mathbb N}}
\newcommand{\Z}{{\mathbb Z}}
\newcommand{\newton}[1]{\ensuremath{\operatorname{N}(#1)}}
\newcommand{\vertices}[1]{\ensuremath{\operatorname{Vert}(#1)}}
\DeclareMathOperator*{\sign}{sign}
\renewcommand{\k}{{\kappa}}
\newcommand{\st}{\mid}
\setlist[itemize]{leftmargin=*}
\setlist[enumerate]{leftmargin=*}
\newtheorem{theorem}{Theorem}
\numberwithin{theorem}{section}
\newtheorem{lemma}[theorem]{Lemma}
\newtheorem{proposition}[theorem]{Proposition}
\newtheorem{corollary}[theorem]{Corollary}
\theoremstyle{definition}
\newtheorem{definition}[theorem]{Definition}
\newtheorem{example}[theorem]{Example}
\newtheorem{remark}[theorem]{Remark}
\definecolor{NiceBlue}{rgb}{0.2,0.2,0.75}
\begin{document}

\title{The kinetic space of multistationarity in dual phosphorylation}

\author[E. Feliu, N. Kaihnsa, T. de Wolff, O. Y\"ur\"uk]{Elisenda Feliu$^{1,4}$, Nidhi Kaihnsa$^{2}$, Timo de Wolff$^3$, O\u{g}uzhan Y\"ur\"uk$^3$}

\subjclass[2010]{92Bxx, 14Pxx, 37N25, 52B20, 90C26}

\keywords{Two-site phosphorylation, Multistationarity, Chemical reaction networks, Real algebraic geometry, Cylindrical algebraic decomposition Circuit polynomials}

\date{\today}

\footnotetext[1]{Department of Mathematical Sciences, University of Copenhagen, Universitetsparken 5, 2100 Copenhagen, Denmark. efeliu@math.ku.dk}
 
 \footnotetext[2]{Division of Applied Mathematics, Brown University, 182 George Street, Providence, RI, 02912. nidhi\_kaihnsa@brown.edu}

\footnotetext[3]{Institute of Analysis and Algebra, TU Braunschweig, Universit\"atsplatz 2, 38106 Braunschweig, Germany. t.de-wolff@tu-braunschweig.de, oguyueru@tu-braunschweig.de}

\footnotetext[4]{\emph{Corresponding author: } efeliu@math.ku.dk}

 \tikzset{every node/.style={auto}}
 \tikzset{every state/.style={rectangle, minimum size=0pt, draw=none, font=\normalsize}}
 \tikzset{bend angle=15}

\begin{abstract}
Multistationarity in molecular systems underlies switch-like responses in cellular decision making. Determining whether and when a system displays multistationarity is in general a difficult problem. In this work we completely determine the set of kinetic parameters that enable multistationarity in a ubiquitous motif involved in cell signaling, namely a dual phosphorylation cycle. In addition we show that the regions of multistationarity and monostationarity are both path connected.

We model the dynamics of the concentrations of the proteins over time by means of a parametrized polynomial ordinary differential equation (ODE) system arising from the mass-action assumption. Since this system has three linear first integrals defined by the total amounts of the substrate and the two enzymes, we study for what parameter values the ODE system has at least two positive steady states after suitably choosing the total amounts. 
We employ a suite of techniques from (real) algebraic geometry, which in particular concern the study of the signs of a  multivariate polynomial over the positive orthant and sums of nonnegative circuit polynomials. 
\end{abstract}

\maketitle
 
 \section{\bf Introduction}
\emph{Multistationarity},  that is  the existence of multiple steady states in a system,  has been linked to cellular
decision making    and switch-like responses to graded input \cite{laurent1999,ozbudak2004,Xiong:2003jt}.    
In the context of chemical reaction networks, there exist numerous methods to decide whether multistationarity arises for some choice of parameter values \cite{Feinbergss,feliu_newinj,wiuf-feliu,PerezMillan,conradi-PNAS,craciun2008,control,crnttoolbox}. However, determining for \emph{which} parameter values this is the case, is a very difficult problem with complicated answers. 
Some recent progress in understanding the parameter region of multistationarity has eased the problem by focusing on subsets of parameters, and providing regions that guarantee or exclude that the other parameters can be chosen in such a way that multistationarity arises \cite{FeliuPlos,dickenstein:regions}. 

Here, we completely characterize the region of multistationarity in terms of kinetic parameters for a simple model of phosphorylation and dephosphorylation, which is a building block of the MAPK cascade involved ubiquitously in  cell signaling  \cite{Huang-Ferrell,qiao:oscillations,rendall-MAPK}. Phosphorylation processes are central in the modulation of cell communication, activities and responses, as, for example, phosphorylation affects about $30\%$ of all  proteins in human body \cite{cohen}.  

The reaction network we consider consists of a substrate $S$ that has two phosphorylation sites. Phosphorylation occurs distributively in an ordered manner, such that one of the sites is always phosphorylated first. We denote the three phosphoforms of $S$ with $0,1,2$ phosphorylated sites by $S_0,S_1,S_2$ respectively, and assume that a kinase $E$ and a phosphatase $F$ mediate the phosphorylation  and dephosphorylation of $S$ respectively.
 This gives rise to the following mechanism \cite{Wang:2008dc,conradi-mincheva}:
 \begin{align}\label{eq:network}
 \begin{split}
 S_0 + E \ce{<=>[\k_1][\k_2]} ES_0 \ce{->[\k_3]} S_1+E   \ce{<=>[\k_7][\k_8]} ES_1 \ce{->[\k_9]} S_2+E \\
 S_2 + F  \ce{<=>[\k_{10}][\k_{11}]} FS_2 \ce{->[\k_{12}]} S_1+F  \ce{<=>[\k_4][\k_5]} FS_1 \ce{->[\k_6]} S_0+F.
 \end{split}
 \end{align}
 Under the assumption of mass-action kinetics, the evolution of the concentration of the species of the network over time is modeled by a system of autonomous  ODEs in $\R^9_{\geq 0}$, see equation \eqref{eq:ode}. The system consists of polynomial equations, whose coefficients are scalar multiples of one of  $12$ positive parameters $\k_1,\dots,\k_{12}$. Furthermore, the dynamics are constrained to   linear invariant subspaces of dimension six, characterized  by the total amounts of kinase, phosphatase and substrate, which then enter the study as parameters.

In addition to the biological relevance of this system, this network  has become the \emph{model model} (like the model organisms in biology), where new techniques, strategies, and approaches are tested. We expect that the strategies employed to answer mathematical questions about this model can be used to approach similar systems arising in molecular biology.
This system is large enough for hands-on approaches to fail, but small enough to challenge the development of new mathematics. 
 Furthermore, dynamical properties of the ODE system of this network might be lifted to more complex networks related to it. For example, \eqref{eq:network} is an example of an $n$-site phosphorylation cycle \cite{Wang:2008dc,TG-Nature,FHC14}, a post-translational modification network \cite{TG-rational,fwptm,conradi-shiu-review}, a MESSI system \cite{Dickenstein-MESSI}, and a network with toric steady states \cite{PerezMillan}, to name a few.
 
 Currently, it is known 
 that the number of positive steady states within a linear invariant subspace is either one or three, if all positive steady states are  nondegenerate \cite{Wang:2008dc,Markevich-mapk}. It has also been shown that there are choices of parameters for which there are two asymptotically stable steady states and one unstable steady state \cite{rendall-2site}, see also \cite{torres:stability}. It is currently unknown whether it admits Hopf bifurcations or periodic solutions \cite{conradi-feliu-mincheva}. 
 
 Some recent progress has shed some light on how these qualitative properties depend on the choice of parameters. In \cite{conradi-mincheva} the authors give two rational functions $a(\k)$ and $b(\k)$ on the parameters  $\k_1,\dots,\k_{12}$ (see \eqref{eq:ab} below),  with the following properties: The system has one positive steady state in each invariant linear subspace if  $a(\k)\geq 0$ and $b(\k)\geq 0$, and has at least two in some invariant linear  subspace if $a(\k)<0$, see Subsection~\ref{sec:ineq}. Furthermore, in \cite{Feliu:royal,dickenstein:regions} conditions for the existence of three positive steady states involving the parameters  $\k_1,\dots,\k_{12}$ and some of the total amounts are given, see also \cite{conradi-erk}.

 The difficulties in understanding the number of steady states  arise from the high number of parameters and variables combined with the difficulties in studying polynomials over the positive real numbers. 
 This is what left  the scenario $a(\k)\geq 0$ and $b(\k)<0$ open in \cite{conradi-mincheva}. In this work, we focus on this open case. We give necessary conditions and sufficient conditions for multistationarity to arise in this case, and give an explicit parametrization of the boundary between the region of monostationarity and multistationarity. 
Specifically, our approach to the study of the regions of mono- and multistationarity gives rise to the following contributions:
\begin{itemize}
\item \textbf{Sufficient conditions for monostationarity.} We provide two such conditions of the form $H(\k)\geq 0$. First, we obtain a polynomial inequality in $\k$  using the theory of discriminants, see Theorem~\ref{prop:apos_disc} in Subsection~\ref{sec:pol}. This inequality completely characterizes the region of multistationarity when $a(\k)=0$. Second, we provide an inequality  where $H$ is a generalized polynomial with rational exponents. This is obtained by decomposing a relevant polynomial into a sum of nonnegative circuit polynomials (SONC), see Theorem~\ref{prop:apos_circuit} in Subsection~\ref{sec:circuit}. Although these inequalities are not necessary for monostationarity, the latter inequality gives preliminary information on the shape of the multistationarity region (Corollary~\ref{cor:apos_circuit}), which is critical to its characterization in Section~\ref{sec:enabled}.

\smallskip
\item \textbf{Existence of multistationarity when $a(\k)\geq 0$ and $b(\k)<0$. } Proposition~\ref{prop:multi:greenline} in Subsection~\ref{sec:multi:greenline} shows that in this case, multistationarity occurs for suitable $\k$.

\smallskip
\item \textbf{Parametric description of the regions of mono- and multistationarity. } In Theorem~\ref{thm:parametric} in Subsection~\ref{sec:param} we provide a full parametric description of the two regions, by giving an explicit parametric representation of the boundary between the two regions.

\smallskip
\item \textbf{Connectivity. } In Theorem~\ref{thm:connected} in Section~\ref{sec:connected} we conclude that the region of multistationarity in the parameters $\k_1,\dots,\k_{12}$ is an open and connected set, and the region of monostationarity is closed in $\R^{12}_{>0}$ and connected as well.
\end{itemize}

We will repeatedly employ the Descartes' rule of signs, and the study of the Newton polytope associated with several polynomials, the relevant properties of which are reviewed in Subsection~\ref{sec:newton}. 
Furthermore, some proofs rely on the use of symbolic algorithms from real algebraic geometry as implemented in {\tt Maple 2019}. 
These include the   selection of a point in each connected component of a semi-algebraic set, and the verification that a semi-algebraic set is empty. These computations are presented in the accompanying supplementary file {\it SupplInfo.mw}. Computations have also been performed in {\tt Mathematica}, to reassess the validity of the proofs. 

 We hope that the techniques used here, targeting the study of the signs of a parametric multivariate polynomial on the positive orthant, can be employed for other systems. For instance, the allosteric kinase model given in \cite{feng:allosteric} presents difficulties analogous to those encountered here.
Furthermore, the study of signs plays a key role when analyzing the stability of steady states or the presence of Hopf bifurcations via the Routh-Hurwitz criterion (see for example \cite{torres:stability,shiu:hopf}).

\section{\bf Preliminaries}\label{sec:prelim}
We start by introducing the notation, the ODE system and the mathematical techniques used in later sections, namely the Newton polytope and circuit polynomials. We elaborate on the problem we are interested in, and on the previous work. 

\subsection{The ODE system and a polynomial}\label{sec:system} 
We introduce the ODE system describing the dynamics of the reaction network \eqref{eq:network}, its linear first integrals, and a polynomial whose signs determine whether multiple positive steady states exist in some linear invariant subspace.

We consider the reaction network \eqref{eq:network} and 
denote the concentrations of the species by
$x_1=[E], x_2=[F]$, $x_3=[S_0]$, $x_4=[S_1]$, $x_5=[S_2]$, $x_6=[ES_0]$, $x_7=[FS_1]$, $x_8=[ES_1]$, $x_9=[FS_2]$. Under mass-action kinetics, the ODE system modelling the concentrations of the nine species in the network \eqref{eq:network}  over time $t$  is
{\small \begin{align}
\tfrac{dx_1}{dt} &= -\k_{1}x_{1}x_{3}-\k_{7}x_{1}x_{4}+\k_{2}x_{6}+\k_{3}x_{6}+\k_{8}x_{8}+\k_{9}x_{8} & \tfrac{dx_6}{dt} &= \k_{1}x_{1}x_{3}-\k_{2}x_{6}-\k_{3}x_{6} \nonumber
\\ 
\tfrac{dx_2}{dt} &= -\k_{4}x_{2}x_{4}-\k_{10}x_{2}x_{5}+\k_{5}x_{7}+\k_{6}x_{7}+\k_{11}x_{9}+\k_{12}x_{9} & \tfrac{dx_7}{dt} &= \k_{4}x_{2}x_{4}-\k_{5}x_{7}-\k_{6}x_{7} \nonumber
\\ 
\tfrac{dx_3}{dt} &=-\k_{1}x_{1}x_{3}+\k_{2}x_{6}+\k_{6}x_{7} & \tfrac{dx_8}{dt} &= \k_{7}x_{1}x_{4}-\k_{8}x_{8}-\k_{9}x_{8}  \label{eq:ode} \\
\tfrac{dx_4}{dt} &= -\k_{4}x_{2}x_{4}-\k_{7}x_{1}x_{4}+\k_{3}x_{6}+\k_{5}x_{7}+\k_{8}x_{8}+\k_{12}x_{9} & \tfrac{dx_9}{dt} &= \k_{10}x_{2}x_{5}-\k_{11}x_{9}-\k_{12}x_{9}   \nonumber
\\
\tfrac{dx_5}{dt} &=-\k_{10}x_{2}x_{5}+\k_{9}x_{8}+\k_{11}x_{9}, \nonumber
\end{align}}%
where $x_i=x_i(t)$, \cite{conradi-mincheva}.   This is a polynomial ODE system with coefficients $\k_1,\dots,\k_{12}>0$. These coefficients are treated as parameters, and referred to as \emph{reaction rate constants}. 
The positive and nonnegative orthants of $\R^9$ are forward invariant by the trajectories of this system (as it is the case for all mass-action systems \cite{volpert}). 
Furthermore,   the system admits exactly three independent linear first integrals, $x_1+ x_6 + x_8$, $x_2+x_7+x_9$ and $ x_3+x_4+x_5+x_6+x_7+x_8+x_9$. Note that these are independent of $\k_i$. 
It follows that the dynamics  take place in linear invariant subspaces of dimension six, defined by the equations
\begin{equation}\label{eq:cons_laws}
x_1+ x_6 + x_8 = E_{\rm tot},\quad x_2+x_7+x_9=F_{\rm tot},\quad  x_3+x_4+x_5+x_6+x_7+x_8+x_9= S_{\rm tot},
\end{equation}
subject to $x_i\geq 0$ for $i=1,\dots,9$. Here $E_{\rm tot}, F_{\rm tot}, S_{\rm tot}$ stand for the total amounts of kinase $E$, phosphatase $F$ and substrate $S$.
In the chemistry literature, the equations in \eqref{eq:cons_laws} are referred to as conservation laws and they define the so-called stoichiometric compatibility classes. 

\medskip
The steady states of the network are the solutions to the  system of polynomial equations given by setting the right-hand side of \eqref{eq:ode} to zero.  Three of these equations are redundant,  and for example the ones for $x_1,x_2,x_3$ can be removed.
The remaining six equations together with the equations in \eqref{eq:cons_laws} form the \emph{steady state system}, which  has variables $x_1,\dots,x_{9}$ and parameters $\k_1,\dots,\k_{12}$, $E_{\rm tot}, F_{\rm tot}, S_{\rm tot}$, all of which are assumed to be positive. The nonnegative solutions of the steady state equations determine the nonnegative steady states within the corresponding linear invariant subspace. This system has at least one positive solution for any choice of parameters, but it can have up to three. This gives rise to the following definition.

\begin{definition}\label{def:enable}
A vector of reaction rate constants  $\k=(\k_1,\dots,\k_{12})\in \R^{12}_{>0}$ \emph{enables} multistationarity if there exist $E_{\rm tot}, F_{\rm tot}, S_{\rm tot}$  such that the steady state system has at least two positive solutions, that is, with all coordinates positive. In this case we say that the network is multistationary in the linear invariant subspace with total amounts  $E_{\rm tot}, F_{\rm tot}, S_{\rm tot}$. 
The vector $\k$ is said to \emph{preclude} multistationarity, if it does not enable~it.
\end{definition}

In \cite{conradi-mincheva}, see also \cite{FeliuPlos}, sufficient conditions on the reaction rate constants for enabling or precluding multistationarity were given. These are reviewed in Subsection~\ref{sec:ineq}, after introducing a key polynomial and some background on signs of polynomials. 
Consider the Michaelis-Menten constants of each phosphorylation/dephosphorylation event:
\[K_1 = \tfrac{\k_2+\k_3}{\k_1},\quad  K_2 = \tfrac{\k_5+\k_6}{\k_4},\quad K_3 = \tfrac{\k_8+\k_9}{\k_7},\quad K_4 = \tfrac{\k_{11}+\k_{12}}{\k_{10}}.\]
The map
$\pi\colon \R^{12}_{>0}  \rightarrow  \R^{8}_{>0}$ sending $\k=(\k_1,\dots,\k_{12})$ to   $\eta=(K_1,K_2,K_3,K_4,\k_3,\k_6,\k_9,\k_{12})$
is continuous and surjective. 
Consider the following polynomial in $x_1,x_2,x_3$ with coefficients depending on~$\eta$:
	{\small \begin{align}\label{eq:mypolynomial}
	\begin{split}
	p_\eta(x)& =K_2\k_3 ( \k_{3}\k_{12}-\k_{6}\k_{9})  \Big( K_{2}K_{4}  \k_{3}\k_{9} x_{1}^{4}x_{3}^{2}
	 + K_{1}K_{3} \k_{6} \k_{12} (x_{1}^{3}x_{2}^{2}x_{3}  +  x_{1}^{2}x_{2}^{3}x_{3} 
+  x_{1}^{2}x_{2}^{2}x_{3}^{2}) 		 \\ &     +K_{2}K_{3} \k_{3}\k_{12}  x_{1}^{3}x_{2}x_{3}^{2} \Big)     
 +K_{1}K_{2}K_{3}\k_{3}\k_{6}\k_{12} ( (K_{2}+K_3)\k_{3}\k_{12}-(K_{1}+K_4)\k_{6}\k_{9}) x_{1}^{2}x_{2}^{2}x_{3} \\ & 
	+ K_1\k_6 \Big(  K_{2}^{2}K_{4}\k_{3}^{2}\k_{9}^{2}\, x_{1}^{4}x_{3}
	+2K_{2}K_{3}K_{4}\k_{3}^{2}\k_{9}\k_{12} \, x_{1}^{3}x_{2}x_{3} 
		 +K_{1}K_{2}K_{3}\k_{3}\k_{6}\k_{12}( \k_{9}+\k_{12}) x_{1}^{2}x_{2}^{3}  \\ &
	+K_{1}K_{2}K_{3}K_{4}\k_{3}\k_{6}\k_{9}\k_{12}\, x_{1}^{2}x_{2}^{2}
	+K_{1}K_{3}^{2}\k_{6}\k_{12}^{2} (\k_{3}+\k_{6})\,  x_{1}x_{2}^{4} 
	+2\,K_{1}K_{2}K_{3}\k_{3}\k_{6}\k_{12}^{2}\, x_{1}x_{2}^{3}x_{3} \\ &+K_{1}K_{2}K_{3}^{2}\k_{3}\k_{6}\k_{12}^{2}\, x_{1}x_{2}^{3}
	+K_{1}K_{3}^{2}\k_{6}^{2}\k_{12}^{2}\, x_{2}^{4}x_{3}+K_{1}^{2}K_{3}^{2}\k_{6}^{2}\k_{12}^{2}\,  x_{2}^{4}\Big).
	\end{split}
	\end{align}}

\begin{proposition}[\cite{conradi-mincheva,FeliuPlos}]\label{prop:multi}
With $p_\eta$ as in \eqref{eq:mypolynomial}, it holds:
\begin{itemize}[leftmargin=1.5cm]
	\item[(Mono)]  If $p_\eta(x)$ is positive for all $x_1,x_2,x_3>0$, then any $\k\in \pi^{-1}(\eta)$ does not enable multistationarity, and there is exactly one positive steady state in each invariant linear subspace.
	\item[(Mult)] If $p_\eta(x)$  is negative for some $x_1,x_2,x_3>0$, then any $\k\in \pi^{-1}(\eta)$ enables multistationarity in the invariant linear subspace containing the  point
	{\small \begin{align*} 
		\varphi(x_1,x_2,x_3) & = \left(x_1,x_2,x_3, \frac{K_2\k_3 x_1x_3}{K_1\k_6 x_2}, \frac{K_2K_4\k_3\k_9x_1^{2} x_3}{K_1K_3\k_6\k_{12}x_2^{2}},
		\frac{x_1x_3}{K_1}, \frac{\k_3x_1x_3}{K_1\k_6},  \frac{K_2\k_3x_1^{2}x_3}{K_1K_3\k_6 x_2}, \frac{K_2 K_3\k_3 \k_9 x_1^{2} x_3}{K_1\k_6\k_{12}x_2}\right).
		\end{align*}}%
\end{itemize}
\end{proposition}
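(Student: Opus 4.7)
The plan is to combine the parametrization $\varphi$ of positive steady states with the sign-based multistationarity criterion of \cite{FeliuPlos}. First, I would check that $\varphi$ is a bijection from $\R^3_{>0}$ onto the positive steady state locus. Since the three linear first integrals make only six of the nine equations $\dot x_i = 0$ independent, one may restrict to those for $x_4,\ldots,x_9$. The equations $\dot x_6 = \dot x_7 = \dot x_8 = \dot x_9 = 0$ give $x_6,x_7,x_8,x_9$ as monomials in $x_1,\ldots,x_5$ and the Michaelis--Menten constants $K_i$, while the two remaining equations for $\dot x_4 = \dot x_5 = 0$ then determine $x_4$ and $x_5$ as the rational functions of $x_1,x_2,x_3$ appearing in $\varphi$.

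Next, form $\Phi := W \circ \varphi$, where $W$ is the linear map $(x_1,\ldots,x_9) \mapsto (x_1+x_6+x_8,\ x_2+x_7+x_9,\ x_3+x_4+x_5+x_6+x_7+x_8+x_9)$ sending $x$ to the three conservation quantities in \eqref{eq:cons_laws}. Positive steady states in the compatibility class with totals $(E_{\rm tot}, F_{\rm tot}, S_{\rm tot})$ correspond bijectively to preimages of this triple under $\Phi$, so counting positive steady states in a given invariant linear subspace reduces to analyzing $\Phi$. The key symbolic step is to verify that, on $\R^3_{>0}$, the Jacobian determinant $\det D\Phi(x_1,x_2,x_3)$ equals $p_\eta(x_1,x_2,x_3)$ up to a positive rational factor in $x_1, x_2, x_3$; I would carry this out in a computer algebra system.

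With this identification in hand, the ``Mono'' case follows from the Gale--Nikaido-type injectivity theorem used in \cite{FeliuPlos}: constant nonzero sign of $\det D\Phi$ on $\R^3_{>0}$ forces $\Phi$ to be injective there, and combined with the standard existence of at least one positive steady state per compatibility class with positive totals, this yields exactly one. For the ``Mult'' case, if $p_\eta(x^*) < 0$ then $\det D\Phi(x^*) < 0$, and the degree-theoretic multistationarity criterion of \cite{FeliuPlos, conradi-mincheva} produces a second positive steady state in the compatibility class of $\varphi(x^*)$, so $\k$ enables multistationarity with totals $\Phi(x^*)$.

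The main obstacle is the identity $\det D\Phi = c(x)\, p_\eta(x)$ with a positive rational factor $c(x)$: it brings in the specific coefficients of \eqref{eq:mypolynomial}, and is the step where a sign error would propagate through the whole argument. Once this identity is established, the rest of the proof is the standard sign-based machinery from \cite{FeliuPlos}.
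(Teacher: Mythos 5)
Your overall architecture is reasonable and close in spirit to what the paper (following \cite{FeliuPlos}) actually does, but with one genuine error and one shifted object. The paper's route is degree-theoretic throughout: it identifies $p_\eta$ with the $9\times 9$ determinant $\det J_F(\varphi(x))$, where $F$ stacks the three conservation laws on top of the right-hand sides of $\dot x_4,\dots,\dot x_9$, and then uses that the Brouwer degree of the relevant map is $1$ (which requires checking dissipativity and absence of boundary steady states, two hypotheses you never mention). With degree $1$, the signed count of nondegenerate positive steady states in each compatibility class is $1$; if $p_\eta>0$ everywhere all signs are $+1$ and there is exactly one steady state, and if $p_\eta(x^*)<0$ the class of $\varphi(x^*)$ contains a steady state of sign $-1$ and hence at least two further positive steady states. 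Your (Mult) argument matches this. Your (Mono) argument does not: the claim that ``constant nonzero sign of $\det D\Phi$ on $\R^3_{>0}$ forces $\Phi$ to be injective'' is false as a general principle. Gale--Nikaido requires the full P-matrix condition on a rectangular domain, not just a nonvanishing determinant, and there are everywhere-orientation-preserving non-injective maps already in dimension two (e.g.\ $(u,v)\mapsto(e^{u}\cos v,\,e^{u}\sin v)$). \cite{FeliuPlos} does not prove (Mono) by injectivity; it proves it by the same degree count you use for (Mult). You should replace the injectivity step by that argument.

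The second, smaller divergence is that you work with the $3\times 3$ Jacobian of $\Phi=W\circ\varphi$ rather than the $9\times 9$ Jacobian $J_F(\varphi(x))$ that the paper states is \emph{equal} to $p_\eta$. These two determinants are indeed proportional along the parametrization (differentiating $f_i(\varphi)\equiv 0$ shows the columns of $J_\varphi$ span $\ker J_f(\varphi)$, from which a block computation relates the two), but the proportionality factor is not automatically of constant sign; your plan to verify positivity of $c(x)$ symbolically is the right instinct, and without that verification the reduction is incomplete. So: keep the parametrization and the identification of $p_\eta$ with a Jacobian determinant, drop the Gale--Nikaido step in favour of the Brouwer degree argument, and add the verification that the network is dissipative and has no boundary steady states in any class with positive total amounts, since these are the hypotheses under which the degree equals $1$.
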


\medskip
Explicitly, the polynomial $p_\eta$ equals $\det( J_F(\varphi(x_1,x_2,x_3))$, where $F\colon \R^9 \rightarrow \R^9$ is the function with first three components being the left-hand side of the equations in \eqref{eq:cons_laws}, and last $6$ components being the right-hand side of $\tfrac{dx_4}{dt},\dots,\tfrac{dx_9}{dt}$ in \eqref{eq:ode}, and $J_F$ denotes the corresponding Jacobian. 
The Brouwer degree of $p_\eta$ at zero is $1$, and this is used to derive conditions (Mono) and (Mult) above (see \cite{FeliuPlos}). Proposition~\ref{prop:multi} is a specific instance of a general theorem to identify multistationarity for networks satisfying three conditions, namely dissipativity, absence of boundary steady states, and existence of an algebraic parametrization of the steady states \cite{FeliuPlos}. Therefore, the approaches we use in this paper will likely be applicable to other relevant networks. 

In view of Proposition~\ref{prop:multi}, 
in order to determine what  reaction rate constants $\k$ enable multistationarity, we need to study what signs $p_\eta$ attains over $\R^3_{>0}$, as a function of $\eta$. 
To this end, we
study the relation between the coefficients of $p_\eta$ and the signs the polynomial attains using the Newton polytope of $p_\eta$ and a SONC decomposition, reviewed in the next subsection.

\subsection{The Newton Polytope, circuit polynomials, and signs}\label{sec:newton}
Key results on the relation between the coefficients of a polynomial and the signs the polynomial attains, build on a geometric object, namely the Newton polytope. 
Consider a polynomial  $p(x)=p(x_1,\ldots, x_n)=\sum_
\alpha c_{\alpha} x_1^{\alpha_1}\cdots x_n^{\alpha_n}$ in $\R[x_1, \ldots, x_n]$, where $\alpha=(\alpha_1, \ldots, \alpha_n)\in \Z^n_{\geq 0}$. 
The \emph{exponent set} of $p$ is the set of points $\alpha$  in $\Z^n_{\geq 0}$ such that $c_\alpha\neq 0$. 
The \emph{Newton polytope} $\newton{p}\subseteq \R^n$ associated with $p$ is the convex hull of the exponent set. Given a face $F$ of $\newton{p}$, we define the restriction  of $p$  to the monomials supported on $F$ as
\[p_F(x):=\sum_{\alpha \in F} c_{\alpha} x_1^{\alpha_1}\cdots x_n^{\alpha_n}.\]

The first main property of the Newton polytope is that any nonzero sign attained by $p_F(x)$ also is attained by $p(x)$. The following proposition is folklore in real algebraic geometry;
Remark~\ref{rk:outer_normal} sketches the proof by explicitly constructing the relevant points.

\begin{proposition}\label{prop:newton}
Let $p \in \R[x_1, \ldots, x_n]$. 
Given a nonempty face $ F$ of $\newton{p}$, consider the restriction $p_F$ of $p$ to the monomials supported on $F$. 
For any $x\in \R^n_{>0}$ such that $p_F(x)\neq 0$, there exists $y\in \R^n_{>0}$ such that
\[ \sign(p(y))= \sign(p_F(x)).\]
In particular, if the coefficient of one of the monomials supported on a vertex of $\newton p$ is negative, then there exists $x\in \R^n_{> 0}$ such that $p(x)<0$.
\end{proposition}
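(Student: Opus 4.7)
The plan is to use the classical deformation argument: pick a vector in the relative interior of the outer normal cone of $F$, and scale coordinates by a one-parameter family of positive reals, so that the terms in $p_F$ dominate asymptotically.

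First, since $F$ is a nonempty face of $\newton{p}$, there exists a vector $v\in\R^n$ together with a scalar $M\in\R$ such that $\langle v,\alpha\rangle=M$ for every exponent $\alpha$ lying in $F$, and $\langle v,\alpha\rangle<M$ for every exponent $\alpha$ in the exponent set of $p$ that does not lie in $F$. This is just the standard fact that every face of a polytope is cut out by a supporting hyperplane, combined with the relative interior of the corresponding normal cone being nonempty. Let $\varepsilon>0$ be the gap between $M$ and the next largest value of $\langle v,\alpha\rangle$ taken over exponents of $p$ outside $F$; this is well defined since the exponent set is finite.

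Next, fix $x\in\R^n_{>0}$ with $p_F(x)\neq 0$, and for $t>0$ consider the point
\[
y(t)\;=\;\bigl(x_1 t^{v_1},\,x_2 t^{v_2},\,\ldots,\,x_n t^{v_n}\bigr)\in\R^n_{>0}.
\]
For any exponent $\alpha$ in the exponent set, the corresponding monomial evaluates to $c_\alpha x^\alpha t^{\langle v,\alpha\rangle}$. Splitting the sum according to whether $\alpha\in F$ or not gives
\[
t^{-M}p\bigl(y(t)\bigr)\;=\;p_F(x)\;+\;\sum_{\alpha\notin F}c_\alpha x^\alpha\,t^{\langle v,\alpha\rangle-M},
\]
and every exponent of $t$ in the remainder sum is at most $-\varepsilon<0$. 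Hence the remainder tends to $0$ as $t\to\infty$, so $t^{-M}p(y(t))\to p_F(x)\neq 0$, and consequently $\sign(p(y(t)))=\sign(p_F(x))$ for all sufficiently large $t$. Any such $y(t)$ witnesses the claim.

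For the ``in particular'' statement, take $F$ to be the vertex at which a monomial with negative coefficient is supported; then $p_F(x)$ is that single monomial evaluated on $\R^n_{>0}$, which is strictly negative for every positive $x$, and the first part produces $y$ with $p(y)<0$. The only nontrivial step is the first one, namely exhibiting the linear functional $v$; this is really a matter of invoking standard polytope theory, and is the only place where one needs to be careful to ensure that $v$ lies in the \emph{relative interior} of the normal cone of $F$ so that the strict inequality $\langle v,\alpha\rangle<M$ holds for every exponent outside~$F$ (and not merely for exponents outside a containing face).
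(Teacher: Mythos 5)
Your proof is correct and is essentially the argument the paper itself gives (in Remark~\ref{rk:outer_normal}): choose $v$ in the relative interior of the outer normal cone at $F$ and let the substitution $y(t)=(x_1t^{v_1},\dots,x_nt^{v_n})$ make the $F$-supported terms dominate as $t\to\infty$. Your version is in fact marginally more careful, since you do not assume $\newton{p}$ is full-dimensional and you make the strictness of $\langle v,\alpha\rangle<M$ for exponents outside $F$ explicit.
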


\begin{remark}\label{rk:outer_normal}
In the context of Proposition~\ref{prop:newton}, we find explicit values of $y$ where the sign of $p(y)$ agrees with the sign of $p_F(x)$ as follows.  
For $p(x)=\sum_
\alpha c_{\alpha} x_1^{\alpha_1}\cdots x_n^{\alpha_n} \in \R[x_1, \ldots, x_n]$, consider a $d$-dimensional face $F$ of $\newton p$ and assume $\newton p$ has dimension $n$. 
The \emph{outer normal cone $\mathcal{N}_F$ at the face $F$} is the cone generated by the outer normal vectors of the supporting hyperplanes of all the facets of $\newton p$ containing $F$. 
Then for any vector $v=(v_1,\ldots, v_n)$  in the interior of $\mathcal{N}_F$  (relative to the affine subspace of dimension $n-d$ containing it),  the scalar product $v\cdot x$
 for $x\in \newton p$ is maximized when $x$ belongs to the face $F$, where the value is a constant $c$ \cite{ziegler:polytopes}.
Hence, given $x\in \R^n_{>0}$, we have 
\[ p(x_1t^{v_1}, \ldots, x_nt^{v_n}) =   \sum_{\alpha} c_\alpha\, x^\alpha  \, t^{v_1 \alpha_1+\dots+v_n\alpha_n} =     p_{F}(x)\, t^{c} +\text{lower order terms in }t. \]
Hence, the sign of $p(x_1t^{v_1}, \ldots, x_nt^{v_n})$ agrees with the sign of $p_F(x)$  for $t\in \R_{>0}$ large enough.
\end{remark}

\begin{example}
	\label{eg:NP1}
	Consider the polynomial $p(x,y)= y-4 x y^3+x^2y^4+8 x^3 y^4$. The Newton polytope $\newton p$ is a quadrilateral in the plane, see left panel in Figure~\ref{Figure:NewtonPolytopes}. 
As $(1,3)$ is a vertex, $p(x,y)$ attains negative values over $\R^2_{>0}$ by Proposition~\ref{prop:newton}. 
To find a point where $p$ is negative, consider the outer normal cone at $(1,3)$, which is generated by the outer normal vectors $v_1:= (-2,1)$ and $v_2:=(-1,1)$.  The vector $u=v_1+v_2=(-3,2)$ belongs to its interior. Evaluation of $p$ at $( t^{-3}, t^2)$ is $-4t^3  +2t^2 +8 t^{-1}$, which is negative for $t$  larger than $\approx 1.34$. 
\end{example}

In what follows, a  point $\alpha$ in the exponent set of a polynomial $p\in \R[x_1,\dots,x_n]$  is said to be positive (negative) if the coefficient of the  monomial  $x^\alpha$ is positive (negative).  
A useful consequence of Proposition~\ref{prop:newton} is the following result.

\begin{corollary}\label{cor:mult}
Let $p \in \R[x_1, \ldots, x_n]$. Assume $\newton p$ has dimension $n$ and that all negative points of the exponent set of $p$ belong to some proper face of $N(p)$ (of dimension smaller than $n$). Then  the following  equivalence of statements holds:
\begin{center}
 $p(x)\geq 0$ for all $x\in \R^n_{>0}$ \qquad if and only if \qquad  $p(x)> 0$ for all $x\in \R^n_{>0}$.
\end{center}
\end{corollary}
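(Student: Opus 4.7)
The ``if'' direction is immediate, so the plan is to establish the ``only if'' direction: assuming $p(x) \geq 0$ for all $x \in \R^n_{>0}$, I want to conclude $p(x) > 0$ on $\R^n_{>0}$. The strategy is a clean decomposition. Let $F_-$ denote the proper face of $\newton p$ containing all negative points of the exponent set, let $p_{F_-}$ be the restriction of $p$ to the monomials supported on $F_-$, and set $g := p - p_{F_-}$. By the hypothesis on $F_-$, every coefficient appearing in $g$ is strictly positive. Moreover, since $\dim \newton{p} = n$ while $\dim F_- < n$, the exponent set of $p$ cannot be contained in $F_-$, so at least one monomial actually appears in $g$. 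Consequently $g(x) > 0$ for all $x \in \R^n_{>0}$.

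The core of the proof is to show that $p_{F_-}(x) \geq 0$ on $\R^n_{>0}$. For this I would apply the asymptotic computation from Remark~\ref{rk:outer_normal}: choose a vector $v = (v_1, \dots, v_n)$ in the relative interior of the outer normal cone $\mathcal{N}_{F_-}$, which exists because $F_-$ is a proper face of the $n$-dimensional polytope $\newton{p}$, so $\dim \mathcal{N}_{F_-} \geq 1$. For such a $v$, the scalar product $v\cdot\alpha$ is maximized on $\newton{p}$ precisely on $F_-$, with common maximum value $c$, and is strictly smaller on every exponent of $p$ lying outside $F_-$. Thus for every fixed $x \in \R^n_{>0}$,
\[
p(x_1 t^{v_1}, \dots, x_n t^{v_n}) = p_{F_-}(x)\, t^c + \text{(terms of strictly lower order in $t$)}.
\]
Since the left-hand side is nonnegative for all $t > 0$ by assumption, dividing by $t^c$ and letting $t \to \infty$ forces $p_{F_-}(x) \geq 0$.

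Combining the two pieces yields $p(x) = g(x) + p_{F_-}(x) \geq g(x) > 0$ on the positive orthant, which is the desired strict positivity. The step I expect to require the most care is the asymptotic expansion above: one must verify that the leading-order term really is $p_{F_-}(x)\, t^c$, which hinges on the strict maximality of $v\cdot\alpha$ on $F_-$ when $v$ is taken in the \emph{relative} interior of $\mathcal{N}_{F_-}$ (merely picking $v$ in $\mathcal{N}_{F_-}$ itself would not suffice, since then $v$ could lie on a sub-cone associated to a larger face and the leading term could involve additional monomials). That strict maximality is a standard fact from polytope theory already implicit in Remark~\ref{rk:outer_normal}, so no genuinely new ingredient is needed; the rest of the argument is bookkeeping.
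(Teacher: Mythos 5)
Your proof is correct, and it takes a mildly but genuinely different route from the paper's. The paper decomposes $p$ into its boundary part plus its interior part, supposes $p$ vanishes at some positive point, deduces that the boundary part is negative there, and then derives a contradiction with $p\geq 0$ via Proposition~\ref{prop:newton}; your argument is direct: you isolate the single proper face $F_-$ carrying all the negative exponents, prove $p_{F_-}\geq 0$ on $\R^n_{>0}$ by the scaling limit of Remark~\ref{rk:outer_normal} (the same mechanism that underlies Proposition~\ref{prop:newton}), and conclude $p=p_{F_-}+g\geq g>0$. Your version buys two things. First, it makes explicit that the hypothesis must be read as \emph{one} proper face containing \emph{all} negative exponents; under the weaker reading that merely no negative exponent is an interior point of $\newton{p}$, the statement fails: $p(x,y)=(x-1)^2(y+1)$ has full-dimensional Newton polytope $[0,2]\times[0,1]$, each of its negative exponents $(1,0)$ and $(1,1)$ lies on a proper face (but they lie on no common proper face), $p\geq 0$ on $\R^2_{>0}$, and yet $p(1,1)=0$. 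Second, your decomposition sidesteps the paper's tacit assumption that the interior part is nonempty (needed for ``the second summand \dots\ is positive''), since your $g$ is guaranteed to be a nonzero sum of positive monomials simply because $F_-$ is a proper face of a full-dimensional polytope. The one delicate point, which you flag correctly, is that $v$ must be taken in the relative interior of the outer normal cone so that $v\cdot\alpha$ is strictly sub-maximal for exponents off $F_-$; with that in place the asymptotic expansion and the limit $t\to\infty$ are sound.
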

\begin{proof}
The reverse implication is clear. To prove the forward implication, decompose  $p(x)$ as 
\[ p(x)= \sum_{\substack{\alpha \textrm{ in the boundary of }\newton p \\ c_\alpha\neq 0}}
  c_{\alpha} x_1^{\alpha_1}\cdots x_n^{\alpha_n} +  \sum_{\substack{\alpha \textrm{ in the interior of }\newton p \\ c_\alpha\neq 0}}
  c_{\alpha} x_1^{\alpha_1}\cdots x_n^{\alpha_n}.   \] 
By assumption, the second summand has only positive coefficients and hence is positive over $\R^n_{>0}$. 
If $p(x)=0$ for some $x\in \R^n_{>0}$, then necessarily the first summand is negative at this point $x$, and it follows that the restriction of $p$ to some proper face attains negative values. By Proposition~\ref{prop:newton}, the same holds for $p$, contradicting that $p(x)\geq 0$ for all $x\in \R^n_{>0}$. 
\end{proof}

\begin{figure}[t]
	\scalebox{0.6}{\hspace*{-1cm}
\begin{minipage}{0.3\textwidth}
\begin{tikzpicture}
\coordinate (Origin)   at (0,0);
\coordinate (XAxisMin) at (-2,0);
\coordinate (XAxisMax) at (4,0);
\coordinate (YAxisMin) at (0,-2);
\coordinate (YAxisMax) at (0,4);
\clip (-2,-2) rectangle (4cm,4cm);
\pgftransformcm{0.75}{0}{0}{0.75}{\pgfpoint{0cm}{0cm}}  
\draw [ultra thick,gray,-latex] (XAxisMin) -- (XAxisMax);
\draw [ultra thick,gray,-latex] (XAxisMax) -- (XAxisMin);
\draw [ultra thick,gray,-latex] (YAxisMin) -- (YAxisMax);
\draw [ultra thick,gray,-latex] (YAxisMax) -- (YAxisMin);
\draw[style=help lines,dashed] (-14,-14) grid[step=1cm] (14,14);
\node[inner sep=1pt, label={[label distance=-0.05cm]0:\textcolor{black}{\textbf{y}}}] at (0,5) {};
\node[inner sep=1pt, label={[label distance=-0.05cm]90:\textcolor{black}{\textbf{x}}}] at (5.1,0) {};
\node[draw,circle,blue,inner sep=2.5pt,fill] at (0,1) {};
\node[draw,circle,red,inner sep=2.5pt,fill] at (1,3) {};
\node[draw,circle,blue,inner sep=2.5pt,fill] at (2,4) {};
\node[draw,circle,blue,inner sep=2.5pt,fill] at (3,4) {};
\draw[line width=1pt] (0,1) -- (1,3) -- (2,4) -- (3,4) -- cycle;
\draw[line width=1.5pt,fill=blue,very nearly transparent] (0,1) -- (1,3) -- (2,4) -- (3,4) -- cycle;
\draw[line width=1.5pt,-latex,dashed,red] (1,3)--(-2,5);
\draw[line width=1.5pt,fill=red,very nearly transparent] (1,3) -- (-2,6) -- (-5,6) -- cycle;
\node[inner sep=1pt, label={[label distance=-0.1cm]45:\textcolor{red}{\small$\mathbf{u=(-3,2)}$}}] at (-2.7,3) {};
\end{tikzpicture}	
\end{minipage}
\hspace*{3cm}
\begin{minipage}{0.3\textwidth}
	\begin{tikzpicture}
	\coordinate (Origin)   at (0,0);
	\coordinate (XAxisMin) at (-2,0);
	\coordinate (XAxisMax) at (4,0);
	\coordinate (YAxisMin) at (0,-2);
	\coordinate (YAxisMax) at (0,4);
	\clip (-2,-2) rectangle (4cm,4cm);
	\pgftransformcm{0.75}{0}{0}{0.75}{\pgfpoint{0cm}{0cm}}  
	\draw [ultra thick,gray,-latex] (XAxisMin) -- (XAxisMax);
	\draw [ultra thick,gray,-latex] (XAxisMax) -- (XAxisMin);
	\draw [ultra thick,gray,-latex] (YAxisMin) -- (YAxisMax);
	\draw [ultra thick,gray,-latex] (YAxisMax) -- (YAxisMin);
	\draw[style=help lines,dashed] (-14,-14) grid[step=1cm] (14,14);
	\node[inner sep=1pt, label={[label distance=-0.05cm]0:\textcolor{black}{\textbf{y}}}] at (0,5) {};
	\node[inner sep=1pt, label={[label distance=-0.05cm]90:\textcolor{black}{\textbf{x}}}] at (5.1,0) {};
	\node[draw,circle,blue,inner sep=2.5pt,fill] at (2,4) {};
	\node[draw,circle,blue,inner sep=2.5pt,fill] at (4,2) {};
	\node[draw,circle,blue,inner sep=2.5pt,fill] at (0,0) {};
	\node[draw,circle,red,inner sep=2.5pt,fill] at (2,2) {};
	\draw[line width=1pt] (0,0) -- (2,4) -- (4,2) -- (0,0) -- cycle;
	\draw[line width=1.5pt,fill=blue,very nearly transparent] (0,0) -- (2,4) -- (4,2) -- (0,0) -- cycle;
	\end{tikzpicture}	
\end{minipage}}
	\caption[]{\small (Left) The quadrilateral corresponds is $\newton{p}$ for $p$ in Example~\ref{eg:NP1}, the shaded region is the outer normal cone at the vertex, and dashed vector is the chosen $u$. 
(Right) The triangle is the Newton polytope of the Motzkin polynomial  in Example~\ref{Example:Motzkin}.}	\label{Figure:NewtonPolytopes}
\end{figure}
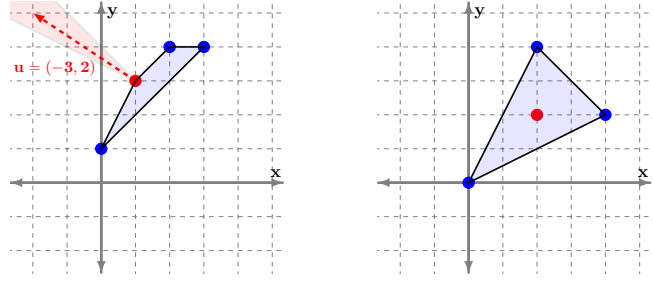

We review next \textbf{circuit polynomials}, an important tool to  derive conditions that guarantee a polynomial is nonnegative, that is, it does not attain negative values. Iliman and de Wolff introduced circuit polynomials in \cite{Iliman:deWolff:Circuits}, extending earlier work by Reznick \cite{Reznick:AGI}.

\begin{definition}
	\label{Definition:CircuitPolynomial}
	A polynomial $p\in \R[x_1,\dots,x_n]$ is  a \emph{circuit polynomial} if it is of the form
	\begin{align*}
	p({x}): = c_{{\beta}} {x}^{{\beta}} + \sum_{j=0}^{r} c_{{\alpha(j)}} {x}^{{\alpha(j)}}
	\end{align*} 
	with $r \leq n$, coefficients $c_{{\alpha(j)}} \in \R_{>0}$, $c_{{\beta}} \in \R$, and exponents ${\alpha(j)}, {\beta} \in \N^n$ such that $\newton{p}$ is a simplex with vertices ${\alpha(0)}, \dots, {\alpha(r)}$ containing ${\beta}$ in its interior.
	
	Every circuit polynomial $p$ has an associated \emph{circuit number}, $\Theta_p$, defined as
	\begin{align*}
	\Theta_p : = \prod_{j = 0}^{r} \left( \frac{c_{{\alpha(j)}}}{\lambda_j}\right)^{\lambda_j}
	\end{align*}
	where $\lambda_0, \dots ,\lambda_n$ are the unique barycentric coordinates of ${\beta}$ with respect to ${\alpha(0)}, \dots ,{\alpha(r)}$. That is, $\beta = \sum_{j=0}^r \lambda_j \alpha(j)$ with $0<\lambda_j \leq 1$ for $j=0,\dots,r$.
\end{definition} 

In contrast to the original definition of circuit polynomials given in \cite{Iliman:deWolff:Circuits}, we also allow $\alpha(j)$ to contain noneven entries in Definition~\ref{Definition:CircuitPolynomial}. 
The two definitions coincide when $x$ is restricted to the positive orthant, since one can consider $q(x_1,\dots,x_n)= p(x_1^2,\dots,x_n^2)$; for further details see e.g., the discussion in \cite[Section 3.1]{Iliman:deWolff:Circuits}.
With these considerations, the  theorem that follows is a straightforward consequence of \cite[Theorem 3.8]{Iliman:deWolff:Circuits}.   It gives a way to check the nonnegativity of a circuit polynomial $p$ over $\R^n_{>0}$ using the circuit number $\Theta_p$.

\begin{theorem}[\cite{Iliman:deWolff:Circuits}, Theorem 3.8]
	\label{Theorem:MainCircuitNonnegativity}
	A circuit polynomial $p$ given as in Definition~\ref{Definition:CircuitPolynomial} is nonnegative over $\R^n_{\geq 0}$ if and only if 
	\[ -c_{{\beta}} \leq  \Theta_p. \]
\end{theorem}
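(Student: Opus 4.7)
The plan is to deduce the theorem from a single application of the weighted arithmetic--geometric mean inequality, taking as weights the barycentric coordinates $\lambda_0,\dots,\lambda_r$ of $\beta$ with respect to the simplex vertices $\alpha(0),\dots,\alpha(r)$. These weights are strictly positive and sum to one because $\beta$ lies in the interior of $\newton{p}$, so weighted AM--GM applies unconditionally to any nonnegative tuple of quantities indexed by $j=0,\dots,r$.

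For the sufficient direction, I would rewrite each positive monomial as $c_{\alpha(j)} x^{\alpha(j)} = \lambda_j\cdot(c_{\alpha(j)}/\lambda_j)\,x^{\alpha(j)}$ and apply weighted AM--GM to obtain
\[ \sum_{j=0}^{r} c_{\alpha(j)}\, x^{\alpha(j)} \;\geq\; \prod_{j=0}^{r} \left(\frac{c_{\alpha(j)}}{\lambda_j}\right)^{\lambda_j}\! x^{\sum_{j}\lambda_j \alpha(j)} \;=\; \Theta_p\,x^{\beta}, \]
where the exponent collapses to $\beta$ by the barycentric identity $\beta=\sum_j\lambda_j\alpha(j)$. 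Adding $c_\beta x^\beta$ yields $p(x)\geq (c_\beta+\Theta_p)\,x^\beta$ on $\R^n_{>0}$, which is nonnegative precisely when $-c_\beta \leq \Theta_p$. The statement over the closed orthant $\R^n_{\geq 0}$ then follows by continuity, since $\R^n_{>0}$ is dense in $\R^n_{\geq 0}$.

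For the necessary direction, I would exhibit a point $x^\star \in \R^n_{>0}$ at which the AM--GM step is tight. Equality forces the quantities $(c_{\alpha(j)}/\lambda_j)(x^\star)^{\alpha(j)}$ to share a common value $C>0$, which, after the change of variables $y_i = \log x^\star_i$, amounts to the linear system $\alpha(j)\cdot y = \log(C\lambda_j/c_{\alpha(j)})$ for $j=0,\dots,r$. Since the vertices $\alpha(0),\dots,\alpha(r)$ are affinely independent, this system is consistent once $C$ is selected so that the compatibility $\beta\cdot y=\sum_j\lambda_j\log(C\lambda_j/c_{\alpha(j)})$ forced by the barycentric relation is satisfied. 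At any such $x^\star$ we have $p(x^\star)=(c_\beta+\Theta_p)(x^\star)^\beta$ with $(x^\star)^\beta>0$, so the assumed nonnegativity yields $c_\beta+\Theta_p\geq 0$. The main obstacle I expect lies exactly here: carefully tracking the affine-independence argument to confirm solvability when $r<n$ (where there are extra free coordinates in $y$), and then handling the reduction between the positive-orthant setting used in the paper and the even-exponent formulation of \cite{Iliman:deWolff:Circuits} through the substitution $q(x)=p(x_1^2,\dots,x_n^2)$ indicated in the paragraph before the theorem.
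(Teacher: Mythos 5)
The paper does not prove this statement at all: it is quoted verbatim from Iliman--de Wolff \cite{Iliman:deWolff:Circuits} (Theorem 3.8), with only a remark that the positive-orthant version with possibly non-even exponents follows from the even-exponent version via $q(x)=p(x_1^2,\dots,x_n^2)$. Your argument is correct and is essentially the classical proof underlying that cited theorem (it is the ``arithmetic--geometric inequality form'' mechanism going back to Reznick \cite{Reznick:AGI}): weighted AM--GM with the barycentric coordinates as weights gives $\sum_j c_{\alpha(j)}x^{\alpha(j)}\ge \Theta_p\,x^{\beta}$ on $\R^n_{>0}$, hence sufficiency, and exhibiting an equality point gives necessity. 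One small imprecision in the necessity step: the relation $\beta\cdot y=\sum_j\lambda_j\log(C\lambda_j/c_{\alpha(j)})$ is not the compatibility condition — it holds automatically for any solution $y$, since $\beta=\sum_j\lambda_j\alpha(j)$. The correct linear-algebra statement is that the rows $(\alpha(j),-1)$ of the augmented system in $(y,\log C)$ are linearly independent precisely because the $\alpha(j)$ are affinely independent, so the system $\alpha(j)\cdot y-\log C=\log(\lambda_j/c_{\alpha(j)})$ is solvable for any right-hand side (with $n-r$ degrees of freedom left over when $r<n$, which is harmless). With that repair your proof is complete and self-contained, which is arguably a small gain over the paper, where the reader must chase the reduction to the even-exponent setting in the reference.
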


We conclude this subsection with an example to illustrate Theorem~\ref{Theorem:MainCircuitNonnegativity}. 

\begin{example}
	\label{Example:Motzkin}
	Consider the polynomial $p(x,y) = 1 +x^2y^4+x^4y^2-c\, x^2y^2$. Its Newton polytope is the triangle with the exponents $\{{\alpha(0)},{\alpha(1)},{\alpha(2)}\} = \{(0,0),(2,4),(4,2)\}$ as vertices, all of which have positive coefficients, see right panel of Figure~\ref{Figure:NewtonPolytopes}.
	The exponent ${\beta} = (2 , 2)$ is in the interior of $\newton{p}$, and its barycentric coordinates with respect to ${\alpha(0)}, {\alpha(1)}, {\alpha(2)}$ are $\tfrac{1}{3}, \tfrac{1}{3}, \tfrac{1}{3}$. We compute the circuit number:
	\begin{align*}
	\Theta_p = (3)^{\frac{1}{3}} \cdot (3)^{\frac{1}{3}} \cdot (3)^{\frac{1}{3}} = 3.
	\end{align*}
	Therefore, by Theorem~\ref{Theorem:MainCircuitNonnegativity},  $p$ is nonnegative over $\R^2_{\geq0}$ if and only if $c\leq 3$.
\end{example}

For $c = 3$ in Example~\ref{Example:Motzkin}, $p(x,y)$ is known as the Motzkin polynomial, which is a prominent example of nonnegative circuit polynomials. It is the first published example of a nonnegative polynomial that cannot be represented as a sum of squares of polynomials \cite{Motzkin:AMGMIneq}. 
For further details on nonnegative circuit polynomials see \cite{Iliman:deWolff:Circuits}, and e.g., \cite{Dressler:Iliman:deWolff:Positivstellensatz,Kurpisz:deWolff:NewDependenciesPolyOpt}.
See also \cite{pantea-jac}, where conditions for the positivity of multivariate polynomials were derived.

\begin{remark}\label{rk:homogeneous}
	In what follows we will  repeatedly  encounter homogeneous polynomials. Recall that a polynomial $p \in \R[x_1,\dots,x_n]$ is homogeneous if the total degree of all monomials is the same, say $d$.
	In this case, $p(\lambda x)=\lambda^d p(x)$ for any $\lambda\in \R$. Hence, the set of signs $p$ attains over $\R^n_{>0}$ agrees with the set of signs the polynomial $p(\lambda x)$ attains over $\R^n_{>0}$ for any choice of $\lambda>0$. In particular, we can set one of the variables to $1$, and study the signs of the resulting polynomial in the remaining $n-1$ variables. 
\end{remark}

\subsection{Back to our system}\label{sec:ineq}
We have now the ingredients to re-derive the conditions on the reaction rate constants that enable or preclude multistationarity given in \cite{conradi-mincheva} and to formulate the strategy to study the open cases. 
Recall the map $\pi$ from Subsection~\ref{sec:system} and that we write $\eta=(K_1,K_2,K_3,K_4,\k_3,\k_6,\k_9,\k_{12})$. Let
\begin{equation}\label{eq:ab}
 a(\eta)= \k_3\k_{12}-\k_6\k_9, \qquad b(\eta)=(K_{{2}}+K_{{3}})\k_{{3}}\k_{{12}}-(K_1+K_{{4}})\k_{{6}}\k_{{9}}.
\end{equation}
The coefficients of the polynomial $p_\eta$ given in \eqref{eq:mypolynomial}  in the variables $x=(x_1,x_2,x_3)$ are polynomials in the eight parameters $ K_1,K_2$, $K_3$, $K_4$, $\k_3$, $\k_6$, $\k_9,\k_{12}$. 
Five of these coefficients are positive multiples of $a(\eta)$, one is a positive multiple of $b(\eta)$, and the rest of the coefficients are positive. 

Of relevance is the monomial whose coefficient is multiple of $b(\eta)$, namely $x_1^2x_2^2x_3$, with exponent vector
\[m :=(2, 2, 1). \]
The Newton polytope of $p_\eta$ depends on whether $a(\eta)$ vanishes or not. 
If $a(\eta)\neq 0$, then  
$\newton{p_\eta}$  is depicted in the left and middle panels of Figure~\ref{fig:newton} and has $10$ vertices:
{\small \begin{align*}
\vertices{\newton{p_\eta}} = \ \big\{ &
(4, 0, 2 ),
(2, 2, 2 ), 
(4, 0, 1 ),
(3, 2, 1 ),
(2, 3, 1 ), 
(0, 4, 1 ), 
(2, 3, 0 ),
(2, 2, 0 ),
(1, 4, 0 ),
(0, 4, 0 )
\big\}.
\end{align*}}%
The  point $m=(2,2,1)$ is in the relative interior of the hexagonal face of $\newton{p_\eta}$ depicted in the middle panel of Figure~\ref{fig:newton}. The monomials with coefficient multiple of $a(\eta)$ are supported on the boundary of $\newton{p_\eta}$.

For $a(\eta)=0$, the corresponding Newton polytope is shown on the right panel of Figure~\ref{fig:newton}. 
Now $m$ is an interior point of an edge of $\newton{p_\eta}$. All other monomials have positive coefficient. 
The vertices of this Newton polytope are 
$
(4, 0, 1 ),
(2, 3, 0 ),
(2, 2, 0 ),
(1, 4, 0 ),
(0, 4, 1 ),
(0, 4, 0 ).
$

Let $H$ be the face of $\newton{p_\eta}$ containing $ {m}$: $H$ is a hexagonal 2-dimensional face of $\newton{p_\eta}$ if $a(\eta)\neq 0$, and a $1$-dimensional face  if $a(\eta)= 0$. Let $p_{\eta,H}$ be the polynomial supported on the face $H.$

\begin{proposition}\label{prop:summary}
Let $p_\eta$ be as in \eqref{eq:mypolynomial} and $a(\eta),b(\eta)$ as in \eqref{eq:ab}. 
\begin{enumerate}[label=(\roman*)]
\item $p_\eta(x)$ is either positive for all $x\in \R^3_{>0}$ or attains negative values over  $\R^3_{>0}$. Hence, $\k$ enables multistationarity if and only if $p_\eta$ attains negative values in $\R^3_{>0}$, where $\eta=\pi(\k)$.
\item Assume $a(\eta)\geq 0$. Then  $\k$ enables multistationarity if and only if $p_{\pi(\k),H}$ attains negative values over $\R^3_{>0}$. 
\item If $a(\eta)\geq 0$ and $b(\eta) \geq 0$, then any $\k\in \pi^{-1}(\eta)$ precludes multistationarity and there is one positive steady state in each invariant linear subspace defined by the equations \eqref{eq:cons_laws}. 
\item If $a(\eta)<0$, then any $\k\in \pi^{-1}(\eta)$ enables multistationarity. 
\end{enumerate}
\end{proposition}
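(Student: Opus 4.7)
The plan is to reduce each item to a sign analysis of $p_\eta$ on $\R^3_{>0}$ and then invoke the (Mono)/(Mult) dichotomy of Proposition~\ref{prop:multi}. Two structural observations drive everything: when $a(\eta)\geq 0$, every monomial of $p_\eta$ whose exponent is not in the face $H$ carries a nonnegative coefficient (and, with the exception of possibly one $a$-multiple when $a(\eta)=0$, a strictly positive one); when $a(\eta)<0$, the vertex $(4,0,2)$ of $\newton{p_\eta}$ carries a negative coefficient (up to a positive factor, $K_2^2K_4\k_3\k_9\,a(\eta)$).

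I would first dispatch \textbf{(iv)}. With $a(\eta)<0$, the monomial $x_1^4x_3^2$ at the vertex $(4,0,2)$ is negative, so Proposition~\ref{prop:newton} applied to this singleton face produces some $y\in\R^3_{>0}$ with $p_\eta(y)<0$, and (Mult) of Proposition~\ref{prop:multi} yields multistationarity. For \textbf{(i)} I would split on the sign of $a(\eta)$: when $a(\eta)<0$, part (iv) supplies a point with $p_\eta<0$; when $a(\eta)\geq 0$, the only potentially negative coefficient is that of $x^m$ (with $m=(2,2,1)$), and $m$ belongs to the proper face $H$ of the three-dimensional polytope $\newton{p_\eta}$. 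Corollary~\ref{cor:mult} then upgrades $p_\eta\geq 0$ to $p_\eta>0$ on $\R^3_{>0}$, establishing the dichotomy ``$p_\eta$ is strictly positive or attains negative values.'' Combining this dichotomy with (Mono) and (Mult) yields the equivalence between $\k$ enabling multistationarity and $p_\eta$ attaining negative values on $\R^3_{>0}$.

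For \textbf{(ii)} (under $a(\eta)\geq 0$), the reverse implication is immediate from Proposition~\ref{prop:newton}: a negative value of $p_{\eta,H}$ transfers to a negative value of $p_\eta$, and (Mult) produces multistationarity. The forward direction is contrapositive: if $p_{\eta,H}(x)\geq 0$ for all $x\in\R^3_{>0}$, then the decomposition $p_\eta=p_{\eta,H}+(p_\eta-p_{\eta,H})$ writes $p_\eta$ on $\R^3_{>0}$ as the sum of $p_{\eta,H}$ and a polynomial whose coefficients (the coefficients of $p_\eta$ off $H$) are all nonnegative when $a(\eta)\geq 0$. Hence $p_\eta\geq 0$ on $\R^3_{>0}$; part (i) upgrades this to $p_\eta>0$, and (Mono) precludes multistationarity. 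Part \textbf{(iii)} then follows at once: under $a(\eta)\geq 0$ and $b(\eta)\geq 0$ every coefficient of $p_{\eta,H}$ is nonnegative, so $p_{\eta,H}\geq 0$ on $\R^3_{\geq 0}$, and (ii) together with (Mono) of Proposition~\ref{prop:multi} gives exactly one positive steady state in each invariant linear subspace.

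The main obstacle is combinatorial bookkeeping rather than any hard analysis. One must verify that $\newton{p_\eta}$ is three-dimensional (so that Corollary~\ref{cor:mult} applies), that $H$ is a proper face containing $m$ in its relative interior (a hexagonal $2$-face when $a(\eta)>0$, an edge when $a(\eta)=0$), and that $(4,0,2)$ is indeed a vertex of $\newton{p_\eta}$. All of these are direct inspections from the vertex lists stated just before the proposition and from Figure~\ref{fig:newton}, and no delicate positivity certificate is needed at this stage.
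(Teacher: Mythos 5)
Your proof is correct and takes essentially the same route as the paper's: Corollary~\ref{cor:mult} (all negative exponent points lie on proper faces) for the dichotomy in (i), Proposition~\ref{prop:newton} applied to the face $H$ containing $m$, respectively to a negative vertex, for (ii) and (iv), and nonnegativity of all coefficients for (iii). The paper's own proof is a four-line compression of exactly these observations, so your more detailed write-up, including the check that the off-$H$ coefficients are nonnegative when $a(\eta)\geq 0$, adds nothing contradictory and needs no correction.
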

\begin{proof}
(i) Follows from  Corollary~\ref{cor:mult} as coefficients of monomials supported on the interior of $\newton{p_\eta}$ are positive; (ii) Follows from (i) and Proposition~\ref{prop:newton}, as only $m\in H$ can be a negative point; 
(iii) As $p_\eta$ has only positive coefficients, the statement follows from (Mono) in  Proposition~\ref{prop:multi}; 
(iv)  In this case four of the vertices are negative. From Proposition~\ref{prop:newton} we conclude that    (Mult)  in Proposition~\ref{prop:multi} holds.
\end{proof}

\begin{figure}[t!]
\begin{center}
\begin{minipage}[h]{0.3\textwidth}
\includegraphics[scale=0.3]{./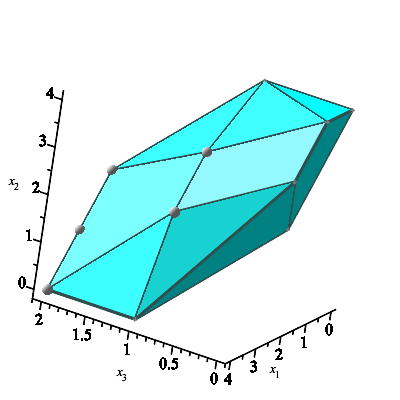}
\end{minipage}\quad
\begin{minipage}[h]{0.3\textwidth}
\includegraphics[scale=0.3]{./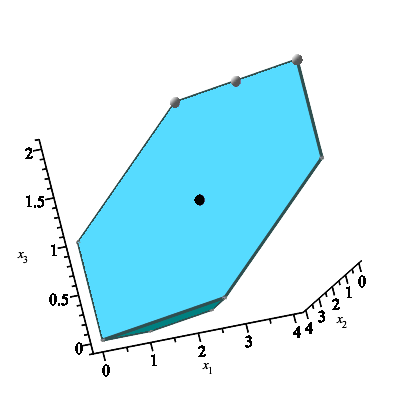}
\end{minipage}
\begin{minipage}[h]{0.3\textwidth}
\includegraphics[scale=0.3]{./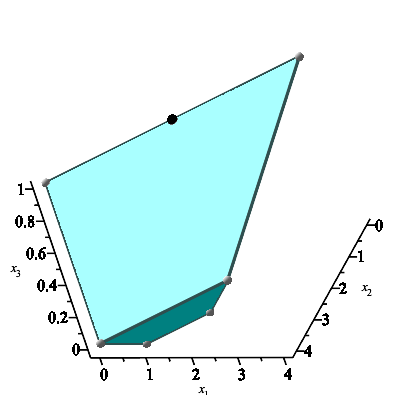}
\end{minipage}
\end{center}
\caption{{\small (Left and Middle) Newton polytope of the polynomial $p_\eta$ in \eqref{eq:mypolynomial} for $a(\eta)\neq 0$. The gray circles correspond to the monomials whose coefficient is a multiple of $a(\eta)$, and the black point to the monomial with coefficient a multiple of $b(\eta)$.  (Right) Newton polytope of $p_\eta$  when $a(\eta)=0$. The black point has coefficient a multiple of $b(\eta)$. }} \label{fig:newton}
\end{figure}

Statements (iii) and (iv) in Proposition~\ref{prop:summary} cover   the two known cases from \cite{conradi-mincheva}. 
As $m$ is not a vertex, $b(\eta)<0$ does not immediately guarantee that multistationarity is enabled. 

In view of Proposition~\ref{prop:summary}(i), whether $\k$ enables multistationarity or not only depends on $\pi(\k)$. Hence, we say that $\eta\in \R^8_{>0}$ enables multistationarity if this is the case for any $\k\in \pi^{-1}(\eta)$, or equivalently, if $p_\eta(x)$ attains negative values over $\R^3_{>0}$. 

\begin{corollary}
The set $X\subseteq \R^8_{>0}$ of parameter points $\eta$ that enable multistationarity is open with the Euclidian topology in $\R^8_{>0}$.
\end{corollary}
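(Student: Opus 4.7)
The plan is to use Proposition~\ref{prop:summary}(i), which characterises membership in $X$ as a pointwise negativity condition on the polynomial $p_\eta$, and then invoke continuity of the evaluation map $\eta\mapsto p_\eta(x_0)$ at a suitably chosen witness $x_0$.

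Concretely, let $\eta_0\in X$. By Proposition~\ref{prop:summary}(i), there exists $x_0\in \R^3_{>0}$ with $p_{\eta_0}(x_0)<0$. Inspecting \eqref{eq:mypolynomial}, every coefficient of $p_\eta$ is a polynomial expression in the eight entries of $\eta=(K_1,K_2,K_3,K_4,\k_3,\k_6,\k_9,\k_{12})$. Consequently, for each fixed $x_0$, the map
\[ \Phi_{x_0}\colon \R^8_{>0} \longrightarrow \R, \qquad \eta\longmapsto p_\eta(x_0), \]
is a polynomial in $\eta$, hence continuous. Since $\Phi_{x_0}(\eta_0)<0$, the preimage $\Phi_{x_0}^{-1}(\R_{<0})$ is an open neighbourhood $U$ of $\eta_0$ in $\R^8_{>0}$.

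For every $\eta\in U$, the point $x_0\in \R^3_{>0}$ witnesses that $p_\eta$ attains a negative value on the positive orthant, so $\eta\in X$ by Proposition~\ref{prop:summary}(i). Thus $U\subseteq X$, which shows $X$ is open at the arbitrary point $\eta_0$. There is no real obstacle here beyond invoking continuity; the only subtlety worth flagging is that openness is proved in $\R^8_{>0}$, not in $\R^{12}_{>0}$, which is the natural setting once Proposition~\ref{prop:summary}(i) has reduced the question to the image parameters $\eta=\pi(\k)$.
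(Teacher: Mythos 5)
Your argument is correct and is essentially identical to the paper's own proof: both fix a witness $x_0$ with $p_{\eta_0}(x_0)<0$ via Proposition~\ref{prop:summary}(i) and use continuity of $\eta\mapsto p_\eta(x_0)$ to obtain an open neighbourhood of $\eta_0$ contained in $X$. No issues.
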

\begin{proof}
By Proposition~\ref{prop:summary}(i),  $\eta\in X$ if and only if $p_\eta(x^*)<0$ for some $x^*\in \R^3_{>0}$.  As $p_\eta$ is continuous in the coefficients, there exists an open ball centered at $\eta $ for which $p_{\eta'}(x^*)<0$ for any $\eta'$ in the ball.  Hence $X$ is open. 
\end{proof}
 
\begin{example}\label{ex:lifting}
Consider $\eta=(K_1,K_2,K_3,K_4,\k_3,\k_6,\k_9,\k_{12})= (343,1,1,1,2,1,1,1)$, for which $a(\eta)>0$ and
$p_{{\eta},H}(7,1,49)=-24706290<0$. By Proposition~\ref{prop:summary}(ii), $\eta$  enables multistationarity.
As   $\k=(1, 341, 2, 2,1,1,2,1,1,2,1,1)$ belongs to  $\pi^{-1}(\eta)$, it enables multistationarity.
 
In order to find a linear invariant subspace with multiple steady states, we use Remark~\ref{rk:outer_normal} 
to find a point where $p_\eta(x)<0$. 
To this end, we note that $(-1,-1,0)$ is an outer normal vector to $H$ and 
consider 
\[ p_\eta(7t^{-1},t^{-1},49)  = -\tfrac{24706290}{t^4} + \tfrac{38706521}{t^5}. \]
This expression is negative provided   $t>\tfrac{47}{30}$. With $t=2$, $p_\eta$ takes the value $-\tfrac{10706059}{32}<0$. 
Hence, the steady state defined by $(x_1,x_2,x_3)= (\tfrac{7}{2},\tfrac{1}{2},49)$ satisfies (Mult) in Proposition~\ref{prop:multi}.
This steady state is 
$x^*=\varphi(\tfrac{7}{2},\tfrac{1}{2},49)=(\tfrac{7}{2},\tfrac{1}{2},49,2,14,\tfrac{1}{2},1,7,7)$ and  belongs to the linear invariant    subspace defined by $E_{\rm tot } = 11$, $F_{\rm tot}=\tfrac{17}{2}$, $S_{\rm tot}= \tfrac{161}{2}$. We solve the equations for the positive steady states in this linear invariant subspace, and obtain $x^*$ together 
with   two other positive steady states, given approximately by: 
\begin{align*} 
(4.11, 0.91, 57.73,  1.51,  6.78, 0.7,  1.38,  6.2,  6.2), \ (3.43,  0.46, 47.21, 2.07, 15.6,  0.47, 0.94, 7.1,  7.1).
 \end{align*}
There are two other solutions with negative components.
We will see later in Example~\ref{ex:lifting2}, how the initial parameter value  and point $(7,1,49)$ were chosen. 

\end{example}

In what follows we study the open scenario  $a(\eta)\geq 0$ and $b(\eta)< 0$ by focusing on $p_{\eta,H}$, c.f. Proposition~\ref{prop:summary}(ii). 
We start by considering  two strategies to certify that $p_{\eta,H}(x)\geq 0$ for all $x\in \R^3_{>0}$, which imply that multistationarity is precluded. 
Afterwards, we show that  the polynomial $p_{\eta,H}(x)$ attains negative values for some $\eta$, and finally, we provide an explicit parametrization of the boundary between the region in the parameter space where multistationarity is enabled and the region where it is precluded. 
In particular, given any vector of parameters,  this gives a means to certify whether multistationarity is enabled.

\begin{remark}\label{rk:symmetry}
The ODE system in \eqref{eq:ode} is invariant under the map
	\begin{align*}
	(\k_1,\dots,\k_{12})  & \mapsto  (\k_{10},\k_{11},\k_{12},\k_7,\k_8,\k_9,\k_4,\k_5,\k_6,\k_1,\k_2,\k_3) \\
	(x_1,\dots,x_9)  & \mapsto  ( x_2,x_1,x_5,x_4,x_3,x_9,x_8,x_7,x_6).
	\end{align*}
	The reason is that the reaction network~\eqref{eq:network} remains invariant after interchanging $E$ with $F$, $S_0$ with $S_2$, the intermediate complexes accordingly, and relabeling the reactions as the map above indicates. Under this map, we have 
	\[ (K_1,K_2,K_3,K_4, \k_3,\k_{6},\k_9,\k_{12}) \xrightarrow{\ \sigma\ }  (K_4,K_3,K_2,K_1,\k_{12},\k_9,\k_6,\k_3). \]
	It follows that $\eta$ enables multistationarity if and only if $\sigma(\eta)$ does. 
	In particular, any relation on the parameters that guarantees or precludes multistationarity, 
	gives rise to a new relation after applying $\sigma$ to all parameters. In many cases though, the relations are already invariant by $\sigma$.
\end{remark}

\begin{remark}
Observe that $a(\eta)$ only depends on $\k_3,\k_6,\k_9,\k_{12}$. 
By letting 
\[ K=(K_1,K_2,K_3,K_4),\qquad \overline{\k} = (\k_3,\k_6,\k_9,\k_{12}),\]
it will be convenient sometimes to write 
$a(\overline{\k})$ instead of $a(\eta)$.
\end{remark}

\section{\bf The Case $a(\eta)\geq 0$ and $b(\eta)<0$: Monostationarity} \label{sec:apos}
We assume in this section that $a(\eta)\geq 0$ and $b(\eta)<0$ and recall the face $H$ of $\newton{p_\eta}$ defined in Subsection~\ref{sec:ineq}. By Proposition~\ref{prop:summary}(ii), $\eta$ enables multistationarity if and only if $p_{\eta,H}$ attains negative values over $\R^3_{>0}$. 
The face $H$ belongs to the hyperplane $x_1+x_2=4$, and hence $p_{\eta,H}$ is homogeneous of degree $4$ in $x_1,x_2$. 
Therefore, by Remark~\ref{rk:homogeneous}, it suffices to study the signs of $p_{\eta,H}$
after setting $x_2=1$. By abuse of notation, we denote the restricted polynomial by $p_{{\eta},H}(x_1,x_3)$. 
When $a(\eta)\neq 0$, we have
\begin{equation}\label{eq:pH}
\begin{aligned}
  p_{{\eta},H}(x_1,x_3) &  =  K_2\k_3 a(\eta) \Big(  K_{2}K_{4}  \k_{3}\k_{9} x_{1}^{4}x_{3}^{2}
  +K_{2}K_{3} \k_{3}\k_{12} x_{1}^{3}x_{3}^{2}+K_{1} K_{3}\k_{6}\k_{12} x_{1}^{2}x_{3}^{2}\Big) \\ & \hspace{-1cm}  +K_{1}K_{2}K_{3}\k_{3}\k_{6}\k_{12}\, b({\eta})\,  x_{1}^{2}x_{3}
    +K_{1} \k_6 \Big(  K_2 K_4\k_{3}\k_{9}  \big(  K_{2} \k_{3}\k_{9}\, x_{1}^{4}x_{3}   +2K_{3} \k_{3}\k_{12} \, x_{1}^{3}x_{3}\\ &
\hspace{-1cm} +K_{1}K_{3}\k_{6}\k_{12}\, x_{1}^{2} \big)
+ K_1K_3\k_{6}\k_{12}^{2} \big(K_{1}K_{3}\k_{6}  
 +2\, K_{2} \k_{3} \, x_{1}x_{3} + K_{2}K_{3}\k_{3}\, x_{1}
 + K_{3}\k_{6}\, x_{3} \big) \Big).
\end{aligned}
\end{equation}
When $a(\eta)=0$, the polynomial of interest is:
  \begin{align}\label{eq:pH2}
\begin{split}
p_{\eta,H}(x_1,x_3)& =K_{1} \k_6 \Big(   K_{2}K_{3}\k_{3}^2\k_{12}^2 ( (K_{2}+K_3)-(K_{1}+K_4)) x_{1}^{2} x_{3}  \\  & \hspace{-1cm} +
K_2K_4\k_{3}^{2}\k_{9} \big(  K_{2} \k_{9}\, x_{1}^{4}x_{3}    
 +2K_{3}\k_{12} \, x_{1}^{3} x_{3} \big) + K_1K_3 \k_{6}\k_{12}^{2} \big(  2\, K_{2}\k_{3}\, x_{1} x_{3} 
 +K_{3}\k_{6}\,  x_{3} \big) \Big).
\end{split}
\end{align}%
We derive two sufficient conditions for the nonnegativity of $p_{\eta,H}$: first, we consider the discriminant of a suitable polynomial (Subsection~\ref{sec:pol}), and then, circuit numbers (Subsection~\ref{sec:circuit}). The first strategy completely characterizes when $p_{\eta,H}$ is nonnegative when $a(\eta)=0$.

\subsection{Necessary polynomial condition for multistationarity via cylindrical algebraic decomposition.} \label{sec:pol}

The study of the discriminant of $p_{\eta,H}$ leads to the following theorem, whose proof relies on  symbolic algorithms from real algebraic geometry based on \cite{xiao:regularchain}. All computations are presented in the supplementary file {\it SupplInfo.mw}.

\begin{theorem}\label{prop:apos_disc}
Let $\eta\in \R^8_{>0}$ such that $a(\eta)\geq 0$ and $b(\eta)<0$. 
\begin{enumerate}[label=(\roman*)]
\item Consider the following polynomial:
\begin{footnotesize}
\begin{align*}
f(\eta):=& K_{2}^{2}K_{3}^{2}b(\eta)^{4}-K_{2}K_{3}\k_{3}\k_{12}( K_{1}K_{2}^{2}+K_{3}^{2}K_{4} ) b(\eta)^{3}+K_{1}K_{2}^{2}K_{3}^{2}K_{4} ( \k_{3}^{2}\k_{12}^{2}-20\k_{3}\k_{6}\k_{9}\k_{12}-8\k_{6}^{2}\k_{9}^{2}) b(\eta)^{2} \\
& + 18K_{1}K_{2}K_{3}K_{4}\k_{3}\k_{6}\k_{9}\k_{12} ( \k_{3}\k_{12}+2\k_{6}\k_{9} )  ( K_{1}K_{2}^{2}+K_{3}^{2}K_{4} ) b(\eta)   \\ &- K_{1}K_{4}\k_{6}\k_{9} \Big( 27\k_{3}^{2}\k_{6}\k_{9}\k_{12}^{2} (K_{1}^{2}K_{2}^{4}+K_{3}^{4}K_{4}^{2} )
+16K_{1}K_{2}^{2}K_{3}^{2}K_{4}   ( \k_{3}^{3}\k_{12}^{3}- \k_{6}^{3}\k_{9}^{3} )  \\ & +6K_{1}K_{2}^{2}K_{3}^{2}K_{4}\k_{3}\k_{6}\k_{9}\k_{12} ( \k_{3}\k_{12}+8\k_{6}\k_{9} )  \Big).
\end{align*}
\end{footnotesize}%
If  $f(\eta)\leq 0$, then $p_{\eta,H}$ is nonnegative over $\R^2_{>0}$, and  $\eta$  does not enable multistationarity.
\item Assume additionally that $a(\eta)=0$ and consider
 \begin{align*}
	g(K) & :=  K_2K_3 ( K_1+K_4 - K_2-K_3)^3 - 27 K_1K_4(K_2+K_3) (K_1  K_2- K_2  K_3+ K_3  K_4).
	\end{align*}
Then $p_{\eta,H}(x)$ is nonnegative over $\R^2_{>0}$  (and hence multistationarity is precluded) if and only if  $g(K)\leq 0$. 
Furthermore, $a(\eta)=0$ and $b(\eta)<0$ imply $K_1  K_2- K_2  K_3+ K_3  K_4>0$.
\end{enumerate}
\end{theorem}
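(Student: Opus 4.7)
The plan is to exploit the fact that $p_{\eta,H}$ is quadratic in $x_3$. Writing $p_{\eta,H}(x_1,x_3) = A(x_1)\,x_3^2 + B(x_1)\,x_3 + C(x_1)$ and inspecting \eqref{eq:pH}, one sees that under $a(\eta)\geq 0$ both $A$ and $C$ have only nonnegative coefficients and are strictly positive for $x_1>0$, while the single potentially negative contribution lives in the $b(\eta)$-term of $B(x_1)$. Hence $p_{\eta,H}$ attains a negative value on $\R^2_{>0}$ if and only if some $x_1>0$ satisfies both $B(x_1)<0$ and $\Delta(x_1):=B(x_1)^2-4A(x_1)C(x_1)>0$, which turns the bivariate sign question into a univariate existential problem in $x_1$ parametrized by $\eta$.

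For part (i), the next step would be to feed the semialgebraic system $\{x_1>0,\ B(x_1)<0,\ \Delta(x_1)>0\}$ into the regular-chains/CAD machinery in \texttt{Maple 2019} to obtain an elimination formula in $\eta$. The polynomial $f(\eta)$ should arise from this projection phase, essentially as a subresultant/discriminant between $B$ and $\Delta$ with respect to $x_1$, and the task is to confirm cell-by-cell that $f(\eta)\leq 0$ forces the incompatibility of the two sign conditions for every positive $x_1$. This sign verification would be the content of the supplementary file \emph{SupplInfo.mw}.

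For part (ii), the hypothesis $a(\eta)=0$ collapses $A(x_1)\equiv 0$ and makes $p_{\eta,H}$ linear in $x_3$; from \eqref{eq:pH2} every monomial is divisible by $x_3$. The first move is to factor $x_3$, eliminate $\k_9$ via $\k_9=\k_3\k_{12}/\k_6$, and rescale $x_1=(\k_6/\k_3)\,y$; a routine computation then shows that the remaining univariate polynomial equals a positive scalar multiple of
\begin{align*}
q(y) \;=\; K_2^2K_4\,y^4 + 2K_2K_3K_4\,y^3 + K_2K_3(K_2+K_3-K_1-K_4)\,y^2 + 2K_1K_2K_3\,y + K_1K_3^2,
\end{align*}
which depends only on $K$. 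Under our hypotheses the $y^2$ coefficient has the same sign as $b(\eta)$, hence is negative, so $q$ has the sign pattern $(+,+,-,+,+)$ and by Descartes' rule exactly $0$ or $2$ positive real roots. Since $q(0)>0$ and $q(y)\to+\infty$, $q$ fails to be nonnegative on $\R_{>0}$ precisely when it has two positive simple roots, with the transition occurring at a positive double root. Computing the discriminant of $q$ in $y$ symbolically should produce a factorization of the form (positive factor)$\cdot\, g(K)$, yielding the equivalence $q\geq 0$ on $\R_{>0}$ iff $g(K)\leq 0$.

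The auxiliary sign claim is elementary: $a(\eta)=0$ and $b(\eta)<0$ force $K_1+K_4>K_2+K_3$, and regardless of whether $K_2\leq K_3$ or $K_2\geq K_3$, one has $K_1K_2+K_3K_4\geq \min(K_2,K_3)(K_1+K_4) > \min(K_2,K_3)(K_2+K_3)\geq K_2K_3$. The main obstacle will be part (i): $f(\eta)$ is a high-degree polynomial in eight parameters and the bivariate problem projects onto a possibly disconnected semialgebraic set in $\eta$, so neither the explicit form of $f$ nor the implication $f(\eta)\leq 0\Rightarrow p_{\eta,H}\geq 0$ admits a transparent hand proof; identifying the right projection polynomial and verifying the sign implication across all CAD cells are inherently symbolic-computational tasks.
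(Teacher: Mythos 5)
Your reduction to the quadratic-in-$x_3$ structure $p_{\eta,H}=A(x_1)x_3^2+B(x_1)x_3+C(x_1)$ is the right starting point, and part (ii) of your proposal is essentially the paper's argument: the paper also factors out $x_3$ when $a(\eta)=0$, applies Descartes' rule to the resulting quartic in $x_1$ (whose sign pattern is $(+,+,-,+,+)$), and identifies $g(K)$ inside the factorization of the discriminant; your normalization $\k_9=\k_3\k_{12}/\k_6$, $x_1=(\k_6/\k_3)y$ reducing to a quartic in $K$ alone is a clean packaging of the same computation, and your proof of the auxiliary inequality $K_1K_2-K_2K_3+K_3K_4>0$ is the one in the paper. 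The only loose end there is that the discriminant vanishing does not by itself tell you which side of $\{g=0\}$ carries two positive roots; you still need a sample-point/connectedness check to orient the equivalence, which is exactly what the paper's computation supplies.

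Part (i), however, has a genuine gap: you have misidentified both the reduction and the provenance of $f$. The paper never uses $\Delta(x_1)=B^2-4AC$. Since $A,C\geq 0$ termwise when $a(\eta)\geq 0$, it simply observes that $B(x_1)\geq 0$ for all $x_1>0$ is \emph{sufficient} for $p_{\eta,H}\geq 0$, discards the discriminant condition entirely, and studies the single quartic $B(x_1)=\k_6K_1q_\eta(x_1)$, which by Descartes has zero or two positive roots. The polynomial $f$ is the relevant factor of the discriminant of this quartic $q_\eta$ with respect to $x_1$ alone — not a resultant or subresultant between $B$ and $\Delta$ — and the verification is a sample-point check that $q_\eta$ has two positive roots exactly when $f>0$ on each connected component of $\{a\geq 0,\,b<0,\,f\neq 0\}$ (made tractable by homogeneity reductions to four effective parameters). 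This matters for two reasons. First, your proposed projection of $\{x_1>0,\ B(x_1)<0,\ \Delta(x_1)>0\}$ would compute the \emph{exact} multistationarity region for $a(\eta)>0$, whose boundary is the parametrized hypersurface of Theorem~\ref{thm:parametric}, not $\{f=0\}$; Figure~\ref{fig:regions} shows the $f\leq 0$ region is strictly smaller than the monostationarity region, so your projection cannot output $f$, and the authors report that even implicitizing that true boundary is computationally out of reach. Second, statement (i) is deliberately one-directional ("if $f\leq 0$ then nonnegative"), precisely because the $\Delta>0$ condition is dropped; aiming for an iff via full CAD in nine variables is both infeasible and not what needs to be proved. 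You need the observation that controlling the sign of the coefficient of $x_3$ alone suffices, and that $f$ is its discriminant factor.
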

\begin{proof}
We observe that the coefficient of $x_3$ in $p_{\eta,H}$ in \eqref{eq:pH} and \eqref{eq:pH2} 
is exactly $\k_6 K_1 q_{\eta}(x_1)$ with
\begin{align*}
q_{\eta}(x_1) & :=  K_2^2K_4\k_3^2\k_9^2\, x_1^4 + 2K_2K_3K_4\k_3^2\k_9\k_{12}\, x_1^3 \\& + K_2K_3\, b(\eta)\, \k_3\k_{12}\, x_1^2  +2K_1K_2K_3\k_3\k_6\k_{12}^2\, x_1 +K_1K_3^2\k_6^2\k_{12}^2.
\end{align*} 
When $a(\eta)=0$, $p_{\eta,H}(x_1,x_3)$ is exactly $\k_6 K_1 q_{\eta}(x_1)x_3$
and it follows that $q_\eta$ is nonnegative over $\R_{>0}$ \emph{if and only if} 
$p_{\eta,H}$ is nonnegative over $\R^2_{>0}$.
When $a(\eta)>0$, 
$p_{\eta,H}$ in \eqref{eq:pH} is a quadratic polynomial in $x_3$ with positive leading and constant terms.
Therefore, if $q_\eta$ is nonnegative over $\R_{>0}$, then $p_{\eta,H}$ is nonnegative over $\R^2_{>0}$.

Consequently, the theorem is proven if we show that:
(1) Assuming $a(\eta)\geq 0,b(\eta)<0$, $q_\eta$ is nonnegative over $\R_{>0}$ if and only if $f(\eta)\leq 0$, and (2) that this condition is equivalent to $g(K)\leq 0$ when additionally $a(\eta)=0$.

\smallskip
We prove (1). The polynomial $q_\eta$ has degree $4$ in $x_1$, and only the coefficient of $x_1^2$ is negative  (under the assumption $b(\eta)<0$).
By Descartes' rule of signs, $q_{\eta}$ has either two or zero positive  roots and either two or zero negative roots (counted with multiplicity). Therefore, $q_{\eta}$ attains negative values in $\R_{>0}$ if and only if $q_{\eta}$ has two distinct positive roots. 
 
Let $\Delta_{x_1}$ be the discriminant of $q_{\eta}$; it is a polynomial in $\eta$ and vanishes whenever $q_{\eta}$ has a multiple root. We restrict the parameter space to  the points where $b(\eta)<0$ and $a(\eta)\geq 0$ and define:
\begin{align*}
\Omega \ := \ \{\eta \in \R^8_{>0} \ : \ b(\eta)<0, a(\eta)\geq 0 \text{ and } \Delta_{x_1}(\eta) \neq 0\}.
\end{align*}
In each connected component  of $\Omega$,  the number of \emph{real roots} of $q_{\eta}$ is constant, and these are all simple roots. Since complex roots occur in pairs, the discriminant partitions $\R^8_{>0}$ into regions with four, two, or zero real roots. 
Now note that if $q_{\eta}$  has four real roots,  then necessarily two are positive and two are negative. Furthermore, in any connected component of $\Omega$ where $q_{\eta}$ has two real roots, these are either both positive or both negative for all $\eta\in \Omega$. This follows by continuity of the roots as a function of $\eta$ in each connnected component of $\Omega$, together with the fact that  $q_\eta$ cannot have a positive and a negative root with multiplicity $1$.
We conclude that in every connected component of $\Omega$, the number of \emph{positive real roots} of $q_{\eta}$ is also constant, and our goal is to  determine the components where this number is $2$. 

We compute    $\Delta_{x_1}$  and find that its zero set in $\Omega$ agrees with the zero set of one factor, $f$ in the statement. Hence the sign of $f(\eta)$ in each connected component of $\Omega$ is constant. 
So the strategy to prove (1) is to show that $q_\eta$ has two positive real roots if and only if $f(\eta)> 0$, by checking that this is the case for at least one point in each connected component of $\Omega$. 

To select such points, we will use the command {\tt SamplePoints} of the package {\tt RegularChains} in  {\tt Maple}, which builds upon the algorithms developed in \cite{xiao:regularchain}.
To reduce the computational cost to effectively find the points, we make some simplifications. 
We note first that $b(\eta),a(\eta)$ and $f(\eta)$ can be seen as polynomials in $K_1,K_2,K_3,K_4$ and the products $\k_3\k_{12}$ and $\k_6\k_9$, such that $f$ is homogeneous of degree $8$ in $K_1,K_2,K_3,K_4$ and homogeneous of degree $4$ in $\k_3\k_{12}$ and $\k_6\k_9$; $a(\eta)$ and $b(\eta)$ are both homogeneous of degree $1$ in $\k_3\k_{12}$ and $\k_6\k_9$; and $b(\eta)$ is homogeneous of degree $1$ in $K_1,K_2,K_3,K_4$.  Hence, given $\eta=(K_1,K_2,K_3,K_4,\k_3,\k_6,\k_9,\k_{12})$ and any $\lambda_1,\lambda_2,\lambda_3,\lambda_4>0$, the point
\[ \eta'=\Big( \lambda_1 K_1  , \lambda_1 K_2 , \lambda_1 K_3 , \lambda_1 K_4, \tfrac{\lambda_2\lambda_3}{\lambda_4}\k_3, \lambda_2\k_6, \lambda_3\k_9,\lambda_4\k_{12}\Big) \]
satisfies 
$f(\eta')=\lambda_1^8  \lambda_2^4\lambda_3^4 f(\eta)$, $a(\eta')= \lambda_2\lambda_3 a(\eta)$ and $b(\eta')= \lambda_1 \lambda_2\lambda_3 b(\eta)$. 
In particular, the signs of these three polynomials evaluated at $\eta$ and $\eta'$ agree, and $\eta$ belongs to $\Omega$, if and only if $\eta'$ does, in which case both belong to the same connected component. 
As a consequence, it is enough to consider points of the form 
$ \big( K_1,K_2, 1 , K_4, \k_3, 1, 1, 1\big) \in \Omega.$
The condition $a(\eta)\geq 0$ becomes $\k_3\geq 1$, and hence it is advantageous to reparametrize these points 
as $\big( K_1,K_2, 1 , K_4, a+1, 1, 1, 1\big)$ with $a\geq 0$. 

We have reduced the problem to selecting one point in each connected component of 
\[ \Omega' \ := \ \{ \eta=\big( K_1,K_2, 1, K_4, a+1, 1, 1, 1\big)   \ : \  K_1>0,K_2>0, K_4>0,a\geq 0, b(\eta)<0,  f(\eta) \neq 0
\}. \]
To this end, we consider $f(\eta)$ for $\eta\in \Omega'$ as a polynomial $f'_{v}(a)$ of degree 4 in $a$ and coefficients  in
$\R[K_1,K_2,K_4]$, where $v=(K_1,K_2,K_4)$. We  compute
 the discriminant $\Delta_a$ of $f'_{v}$ with respect to $a$, which is a polynomial in $K_1,K_2,K_4$.
The roots of the polynomial $f$ with variable $a$ deform continuously   in each connected component $C\subseteq \R^3_{>0}$ in the complement of $\Delta_a=0$. Specifically, for a given point $v$ in $C$, suppose $f_{v}$ has $r$ real roots $\{a_1, \ldots, a_r\}$ for $r\leq 4$ such that $a_i \leq a_{i+1}$ for all $i.$ For another point $v'$ in $C$, $f_{v'}$ also has $r$ roots $\{a'_1, \ldots, a'_r\}$ such that $a'_i \leq a'_{i+1}$ for all $i.$ In $C$ there exists a continuous path from $v$ to $v'$ such that $a_{i}$ deforms continuously to $a'_i.$ Therefore, there exists a continuous path in $\Omega'$ that takes a point from $v \times (a_i,a_{i+1})$ to $v' \times (a'_i,a'_{i+1}).$ 
 
Hence, in order to select at least one parameter point for each connected component of $\Omega'$, we consider  first (at least) one choice of $K_1,K_2,K_4>0$ in  each connected component $C$ of the complement of $\Delta_a=0$ with  the command {\tt SamplePoints}. We obtain a total of $22$ points. For each of them, we find the   nonnegative roots of $f$ as a polynomial in $a$, and then extend   $K_1,K_2,K_4$  to several parameter points in $ \Omega'$ by selecting one value of $a$ in each of the intervals the nonnegative roots define. 
This results in a list of points containing at least one point per connected component of $\Omega'$, and hence of $\Omega$.
Finally, for every such point $\eta$, we find the number of positive roots of $q_{\eta}$ (symbolically using the command {\tt RealRootCounting}) and determine the sign of $f(\eta)$. We conclude that  $q_{\eta}$ has two distinct positive real roots if and only if $f(\eta)>0$. 
It follows that $q_{\eta}$ is nonnegative in $\R_{>0}$ if and only if $f(\eta)\leq 0$, and in this case $p_{\eta,H}$ is nonnegative as well. This completes the proof of (1).

\smallskip
To prove (2), assume $a(\eta)=0$. It follows that $\k_3\k_{12} = \k_6\k_9$ and the condition $b(\eta)<0$ 
becomes $K_2+K_3 < K_1+K_4$. In this case, 
\[ f(\eta) = \k_6^4 \k_9^4(K_2 + K_3)(K_1K_2 - K_2K_3 + K_3 K_4)g(K). \]
Observe that under the assumption $b(\eta)<0$, we have
\[ K_1  K_2 + K_3  K_4>(K_1+K_4) \cdot \min \{K_2,K_3\} > (K_2+K_3) \cdot \min\{ K_2,K_3\} > K_2K_3.\] 
Hence $f(\eta)>0$ for $\eta\in\Omega$ such that $a(\eta)=0$ if and only if $g(K)>0$. 
This concludes the proof.
\end{proof} 
 	
	\begin{figure}[b!]
		\begin{center}
			\begin{minipage}[h]{0.4\textwidth}
		\includegraphics[scale=0.2]{./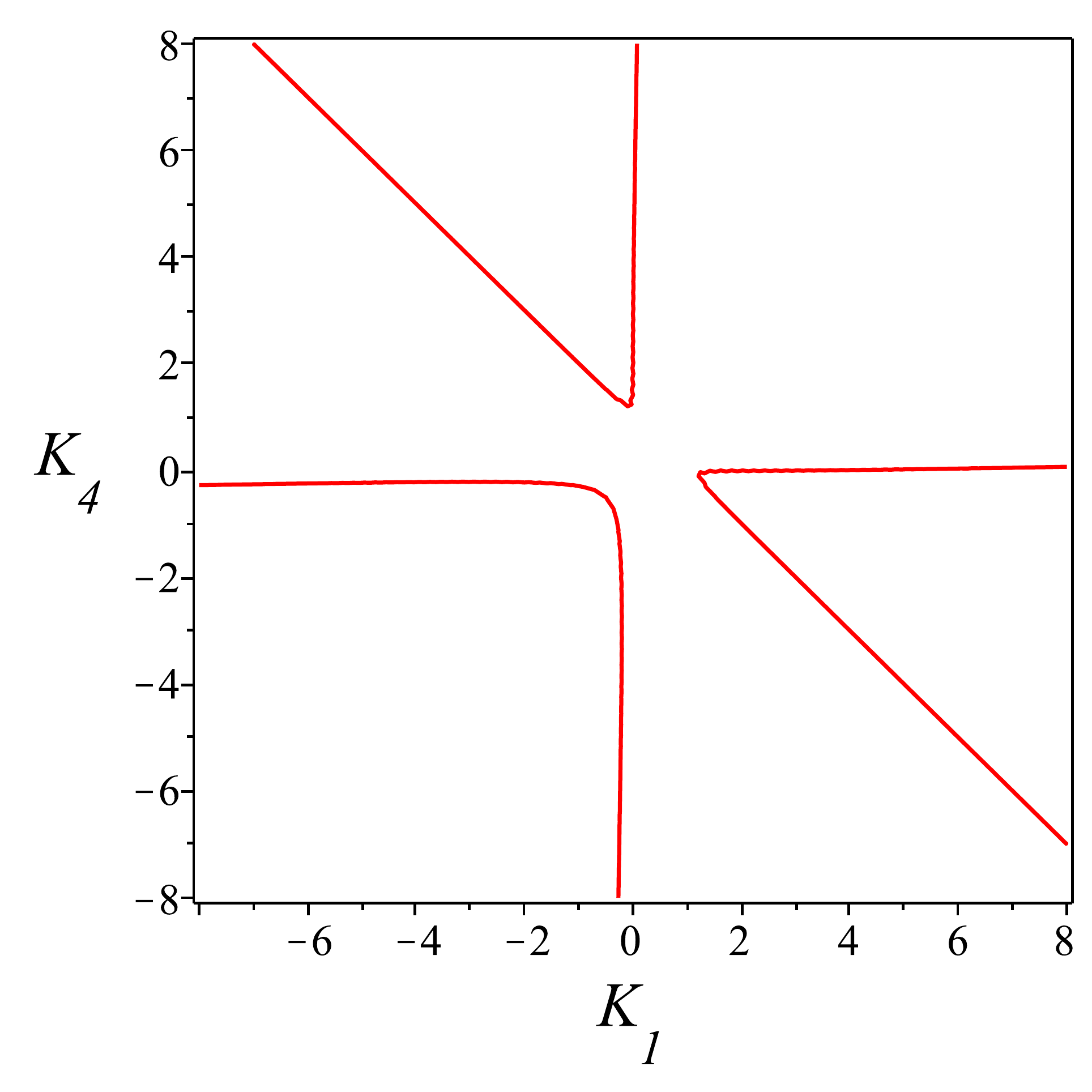}
			\end{minipage}
			\quad
			\begin{minipage}[h]{0.3\textwidth}
				\includegraphics[scale=0.18]{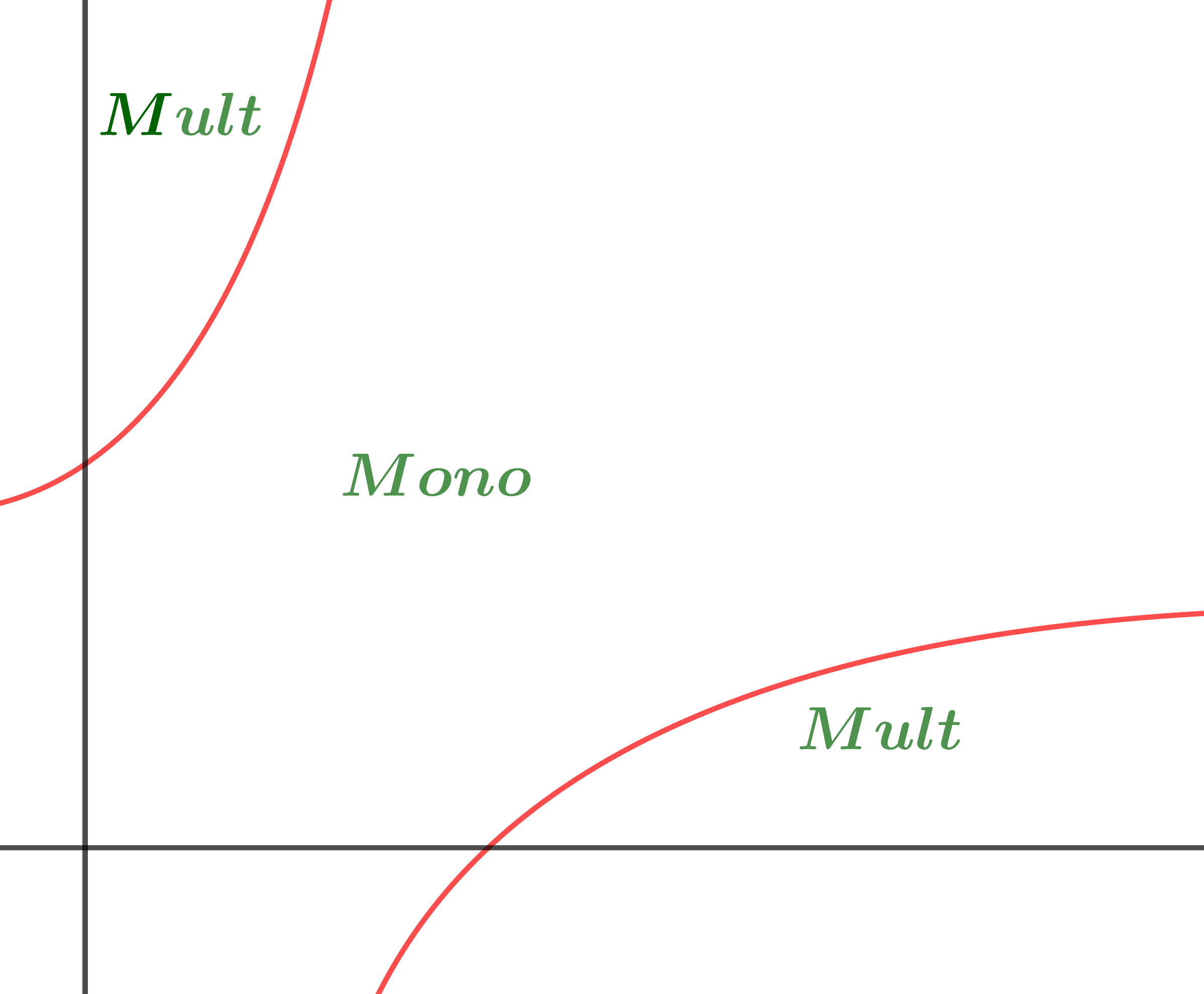}
			\end{minipage}
		\end{center}
		\caption{{\small (Left) 2-dimensional section of the zero set of the polynomial $g$ in Theorem~\ref{prop:apos_disc}. (Right) Cartoon picture of the partition of the positive orthant into the regions of mono- and multistationarity.}} \label{fig:cubic}
			\end{figure}

\begin{example}[Case $a(\eta)=0$] 
According to Theorem~\ref{prop:apos_disc}(ii), if $a(\eta)=0$, then multistationarity is characterized by the inequality $g>0$, which can be written as:
\[ K_2K_3 ( (K_1+K_4) - ( K_2+K_3))^3 > 27 K_1K_4(K_2+K_3) (K_1  K_2+ K_3  K_4- K_2  K_3).\]
The expressions at each side of the inequality are positive when $b(\eta)<0$.
We have $g(K_2+K_3,K_2,K_3,0)=g(0,K_2,K_3,K_2+K_3)=0$, meaning that $g=0$ intersects the two axes $K_1$ and $K_4$ at the given points. 

For example, let $K_2=K_3=1$. Then the zero set of the polynomial $g(K_1,1,1,K_4)$ in the $(K_1,K_4)$-plane is shown in Figure~\ref{fig:cubic}. The point $(K_2,K_3)=(1,1)$ gives a 2-dimensional slice of the zero set of the polynomial $g$ and its complement. By checking whether $g$ is positive or negative on points in the connected components of the complement of $g$, we find the cartoon depiction of the regions of multistationarity and monostationarity  illustrated in the right panel of Figure~\ref{fig:cubic}. 
	\end{example}

\begin{remark}
After setting $K_3=1$ as in the proof of Theorem~\ref{prop:apos_disc},  $g$  becomes a polynomial in $K_1,K_2$ and $K_4$. The degree of $g$ in $K_1$ and $K_4$ is $3$. The discriminant of $g$ with variables $K_1$ and $K_4$ is a polynomial in $K_2$, which does not vanish for any   $K_2>0$.  Therefore, for any $K_2>0$, the zero set of $g$ in the $(K_1,K_4)$-plane is as depicted in the left panel of Figure~\ref{fig:cubic}. 
\end{remark}

 \begin{example}\label{ex:apos1}
For any $\eta$ of the form $\eta=(K_1,1,1,K_4,2,1,1,1)$, we have $a(\eta)>0$ and 
\begin{small}
\begin{align*} 
f(\eta) &=3K_1^4 - 284K_1^3K_4 - 590K_1^2K_4^2 - 284K_1K_4^3 + 3K_4^4 - 40K_1^3 + 808K_1^2K_4 + 808K_1K_4^2\\ & - 40K_4^3 + 192K_1^2 - 320K_1K_4 + 192K_4^2 - 384K_1 - 384K_4 + 256. 
\end{align*}
\end{small}%
The solution set of $f=0$ in the $(K_1,K_4)$-plane is depicted in Figure~\ref{fig:apositive}, together with the monostationarity region given in Theorem~\ref{prop:apos_disc}. 
\end{example}

\begin{figure}[b!]
	\begin{center}
	
	\begin{minipage}[h]{0.4\textwidth}
	\includegraphics[scale=0.23]{./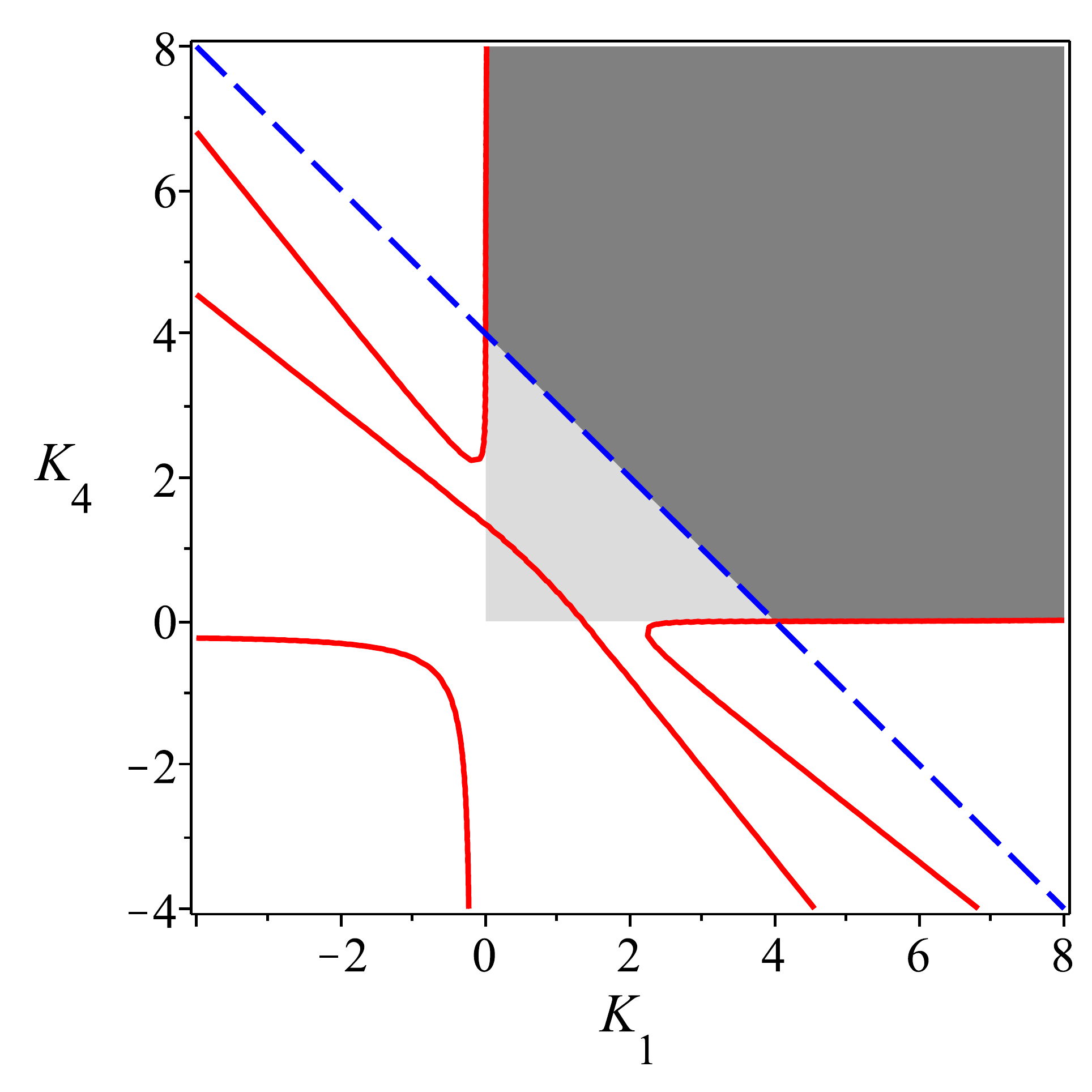}
\end{minipage}
\qquad
\begin{minipage}[h]{0.4\textwidth}
\includegraphics[scale=0.23]{./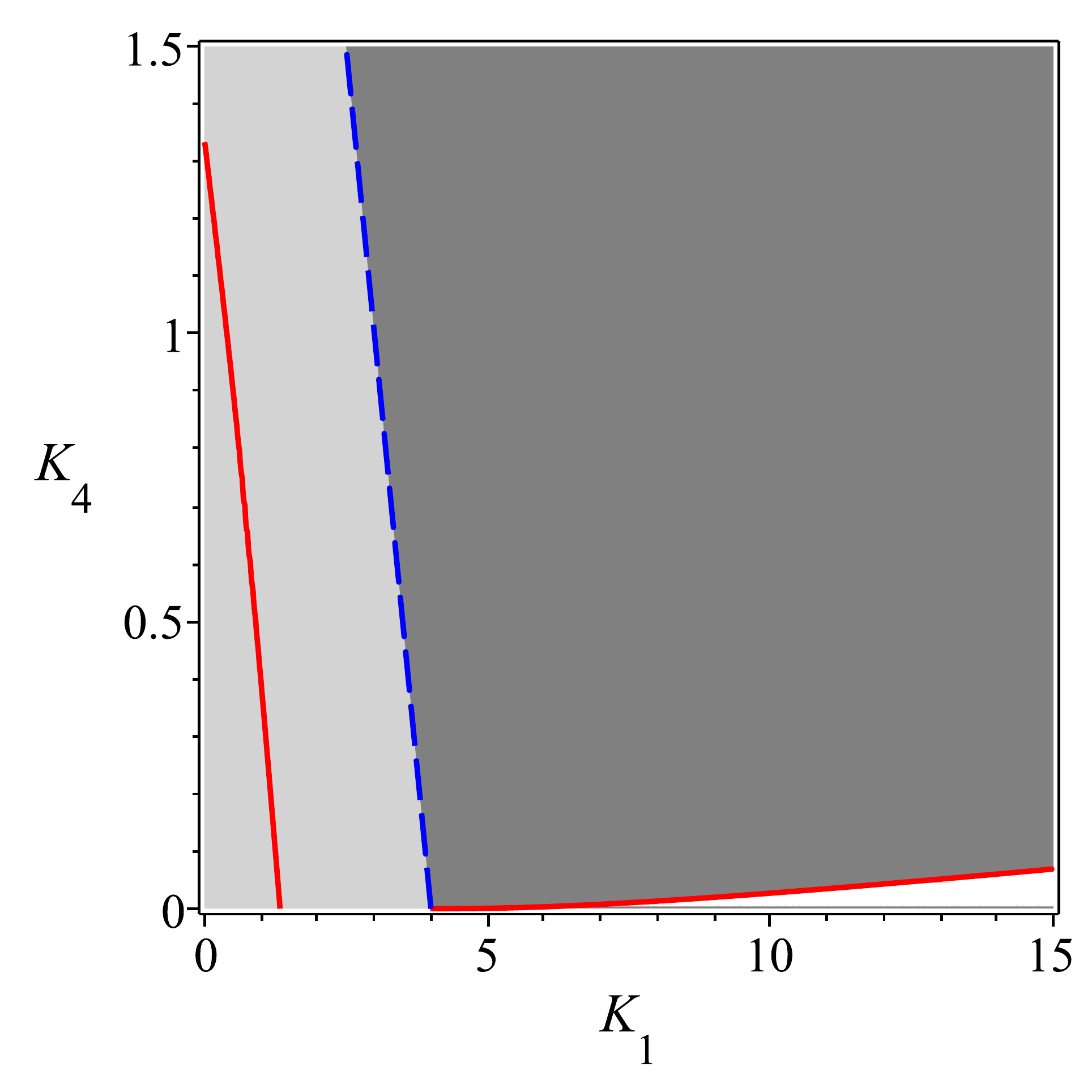}
\end{minipage}
		
	\end{center}
	
	\caption{{\small  $(K_2,K_3, \k_3,\k_6,\k_9,\k_{12})=(1,1,2,1,1,1)$, such that $a(\overline{\k})>0$. 
(Left) The solid-red curve is the solution set of $f=0$ in the $(K_1,K_4)$-plane, and the blue-dashed curve shows $b(\eta)=0$. In the gray region multistationarity is not enabled. The dark gray region is the one given in Theorem~\ref{prop:apos_disc}, where  $f<0$ and $b<0$. The light gray region shows $b\geq 0$ in the positive orthant. (Right) Zoom of the left panel for small $K_4$.
}} \label{fig:apositive}
	\end{figure}

\subsection{Necessary condition for multistationarity via circuit numbers. }\label{sec:circuit}
We now derive a necessary condition for multistationarity utilizing circuit polynomials. This new inequality, given in Theorem~\ref{prop:apos_circuit},  allows for an easier inspection of the points verifying it, compared to Theorem~\ref{prop:apos_disc} (c.f. Corollary~\ref{cor:apos_circuit}). 

As the case $a(\eta)=0$ is completely understood by Theorem~\ref{prop:apos_disc}, we focus mainly on the case $a(\eta)>0$ and $b(\eta)<0$. 
Consider the Newton polytope $H$ of $p_{{\eta},H}(x_1,x_3)$ in \eqref{eq:pH} for $a(\eta)\neq 0$. This polytope is the convex hull of the set $A_H$ of  exponent points, which we label as follows (see left panel of Figure~\ref{fig:newtonpoly}):
\begin{equation}\label{eq:barycenter}
\begin{aligned}
\alpha_1 & := (4, 2 ), & \alpha_2 &:=(2, 2 ), & \alpha_3&:= (0,  1 ), & \alpha_4 &:= (4, 1 ), & \alpha_5 &:=(2, 0 ),
& \alpha_6 &:= (0, 0 ), \\
m &:= (2, 1 ), & b_1 &:= (3,  2 ), & b_2 &:= (1,0), & i_1 &:= (3, 1 ), & i_2  &:= (1,  1 ). 
\end{aligned}
\end{equation}
 
 Note that $A_H$  is very well structured: ${m}$ is the barycenter of the two triangles given by the vertices ${\alpha_1}, {\alpha_3}, {\alpha_5}$ and ${\alpha_2}, {\alpha_4}, {\alpha_6}$;
${b_1}$ and ${b_2}$ are the midpoints of the two edges of $H$ given by ${\alpha_1}, {\alpha_2}$ and ${\alpha_5}, {\alpha_6}$ respectively; ${i_1}$ and ${i_2}$ are in the interior of $H$; and finally ${m}$ is the midpoint of both ${b_1}, {b_2}$ and ${i_1}, {i_2}$. 
We exploit this structure to decompose  $p_{{\eta},H}(x_1,x_3)$ into the sum of $4$ circuit polynomials with associated simplices with vertices $\{\alpha_1, {\alpha_3}, \alpha_5\}$,  $\{\alpha_2, \alpha_4, \alpha_6\}$, $\{b_1,b_2\}$ and $\{i_1,i_2\}$. 
 Afterwards we invoke Theorem~\ref{Theorem:MainCircuitNonnegativity} to derive conditions on the coefficients of   $p_{{\eta},H}(x_1,x_3)$  that guarantee the nonnegativity of this polynomial over $\R^2_{>0}$. 
This leads to the following theorem.

\begin{theorem} \label{prop:apos_circuit}
	Assume $a(\eta)\geq  0$ and $b(\eta)<0$. If
	\begin{align} 	\label{Inequality:SufficientInequality3}
	-b(\eta) \ \leq \ &   3\big(K_1K_4\k_{6}^2\k_{9}^2a(\eta)\big)^{\frac{1}{3}}  \Big(K_{1}^{\frac{1}{3}} +   K_{4}^{\frac{1}{3}}\Big) 
	   + 4 \left(K_{1}K_{4}\k_{3}\k_{6}\k_{9}\k_{12}\right)^{\frac{1}{2}}  +  2\left(K_{2}K_{3}\k_{3}\k_{12}a({\eta})\right)^{\frac{1}{2}},  
	\end{align}
	then $p_{\eta,H}$ is nonnegative over $\R^2_{>0}$, and hence $\eta$  does not enable multistationarity. 
\end{theorem}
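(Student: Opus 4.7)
The plan is to write $p_{\eta,H}$ as a sum of four circuit polynomials that each have $m = (2,1)$ as their unique interior exponent, so that applying Theorem~\ref{Theorem:MainCircuitNonnegativity} to the four summands simultaneously certifies the nonnegativity of $p_{\eta,H}$ on $\R^2_{>0}$. As listed in \eqref{eq:barycenter}, $m$ is the barycenter (with coordinates $(\tfrac13,\tfrac13,\tfrac13)$) of each of the two triangles $T_1 := \{\alpha_1,\alpha_3,\alpha_5\}$ and $T_2 := \{\alpha_2,\alpha_4,\alpha_6\}$, and the midpoint of each of the two segments $E_1 := \{b_1,b_2\}$ and $E_2 := \{i_1,i_2\}$; moreover, every other exponent of $p_{\eta,H}$ belongs to exactly one of these four simplices. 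Hence for any split $c_m = t_1 + t_2 + t_3 + t_4$ of the single negative coefficient $c_m := K_1K_2K_3\k_3\k_6\k_{12}\,b(\eta)$, one obtains a decomposition $p_{\eta,H} = p_1 + p_2 + p_3 + p_4$ in which $p_i$ is a circuit polynomial supported on simplex $i$ with $t_i$ as the coefficient at $m$.

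I would next compute the four circuit numbers from Definition~\ref{Definition:CircuitPolynomial}: the triangles give $\Theta_i = 3(c_{\alpha}c_{\beta}c_{\gamma})^{1/3}$ for $i=1,2$, while the segments give $\Theta_3 = 2\sqrt{c_{b_1}c_{b_2}}$ and $\Theta_4 = 2\sqrt{c_{i_1}c_{i_2}}$. Reading the outer coefficients directly off \eqref{eq:pH} and tracking the exponents of the $K_j$, $\k_\ell$ and $a(\eta)$, a short simplification shows that the factor $K_1K_2K_3\k_3\k_6\k_{12}$ appearing in $c_m$ also divides each $\Theta_i$, and that the sum of the four quotients is exactly the right-hand side of \eqref{Inequality:SufficientInequality3}: the contributions from $T_1$ and $T_2$ combine into $3(K_1K_4\k_6^2\k_9^2\,a(\eta))^{1/3}\bigl(K_1^{1/3}+K_4^{1/3}\bigr)$, while $E_1$ and $E_2$ contribute $2(K_2K_3\k_3\k_{12}\,a(\eta))^{1/2}$ and $4(K_1K_4\k_3\k_6\k_9\k_{12})^{1/2}$, respectively.

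To finish, choose the proportional split $t_i := c_m\,\Theta_i\,/\,(\Theta_1+\Theta_2+\Theta_3+\Theta_4)$, so each $t_i$ is negative and the pointwise requirement $-t_i \leq \Theta_i$ of Theorem~\ref{Theorem:MainCircuitNonnegativity} holds for every $i$ if and only if $-c_m \leq \sum_{j=1}^{4}\Theta_j$, which is precisely the hypothesis \eqref{Inequality:SufficientInequality3}. Hence every $p_i$ is nonnegative on $\R^2_{>0}$ and so is $p_{\eta,H}$, and Proposition~\ref{prop:summary}(ii) yields that $\eta$ does not enable multistationarity. The degenerate case $a(\eta)=0$ needs a brief separate treatment: in that case $c_{\alpha_1}=c_{\alpha_2}=c_{b_1}=0$, the face $H$ drops to the segment $[\alpha_3,\alpha_4]$, and only the circuit on $E_2$ survives; one takes $t_1=t_2=t_3=0$ and $t_4=c_m$, and writes $p_{\eta,H}$ as that single circuit polynomial plus the strictly positive residue $c_{\alpha_3}x_3 + c_{\alpha_4}x_1^4 x_3$.

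I expect the principal obstacle to be the bookkeeping in the second step rather than anything conceptual: one must verify that the monomial-to-simplex assignment is bijective off $m$, confirm the barycentric coordinates at $m$ inside each of the four simplices, and execute the exponent arithmetic showing that the four quotients $\Theta_i/(K_1K_2K_3\k_3\k_6\k_{12})$ collapse cleanly into the three factored terms displayed in \eqref{Inequality:SufficientInequality3}, with neither extraneous factors nor omissions.
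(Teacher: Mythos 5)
Your proposal is correct and, for the main case $a(\eta)>0$, is essentially the paper's own proof: the same SONC decomposition into the two triangles $\{\alpha_1,\alpha_3,\alpha_5\}$, $\{\alpha_2,\alpha_4,\alpha_6\}$ and the two segments $\{b_1,b_2\}$, $\{i_1,i_2\}$, the same four circuit numbers, and the same reduction of $-c_{\eta,m}\leq\sum_i\Theta_i$ to \eqref{Inequality:SufficientInequality3} after dividing out the common factor $K_1K_2K_3\k_3\k_6\k_{12}$ (your explicit proportional split of $c_m$ just instantiates the existence claim the paper makes). The one genuine divergence is the degenerate case $a(\eta)=0$: the paper notes that \eqref{Inequality:SufficientInequality3} collapses to $K_1+K_4-K_2-K_3\leq 4\sqrt{K_1K_4}$ and then verifies by a symbolic {\tt IsEmpty} computation that this forces $g(K)\leq 0$, so that nonnegativity follows from Theorem~\ref{prop:apos_disc}(ii); you instead observe that $p_{\eta,H}$ in \eqref{eq:pH2} is the single surviving circuit on $\{i_1,i_2\}$ plus a positive residue, and that $-c_m\leq\Theta_4$ is exactly the reduced inequality. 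Your route here is self-contained and avoids the computer-algebra step, at the (negligible) cost of checking that the reduced hypothesis matches the single circuit number; both arguments are valid.
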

\begin{proof}
Assume $a(\eta)>  0$.  
	We write $p_{{\eta},H}(x)$ as the sum of four circuit polynomials. Let $p_{{\eta},1}$ be a circuit polynomial which has the exponent ${m}$ as inner term and $2$-dimensional simplex ${\alpha_1}, {\alpha_3}, {\alpha_5}$ as follows,
	\begin{align*}
	p_{{\eta},1}(x_1,x_3) = c_{{\eta},{\alpha_1}} x^{{\alpha_1}} +  c_{{\eta},{\alpha_2}}x^{{\alpha_2}} +  c_{{\eta},{\alpha_3}} x^{{\alpha_3}} + \bar{c}_{{\eta},{1}}x^{{m}}
	\end{align*}
	where $c_{{\eta},{\alpha_i}}$ is exactly the coefficient of $x^{\alpha_i}$ in $p_{{\eta},H}(x)$, and $\bar{c}_{{\eta},{1}}$ is  in $\R$. Similarly, define the circuit polynomials $p_{{\eta},2}, p_{{\eta},3}, p_{{\eta},4}$ with exponent ${m}$ as inner term with $2$-dimensional simplex ${\alpha_2}, {\alpha_4}, {\alpha_6}$, and $1$-dimensional simplices ${b_1},{b_2}$ and ${i_1},{i_2}$ respectively. Let $\bar{c}_{{\eta},{i}}$ be the coefficient of $x^{{m}}$ in the respective polynomial $p_{{\eta},i}$. The Newton polytopes of these circuit polynomials are illustrated in the right panel of Figure~\ref{fig:circuits}.

	\begin{figure}[t!]
\begin{minipage}[h]{0.45\textwidth}
\scalebox{0.5}{\centering
\begin{tikzpicture}
\foreach \x in {0,1,2,...,4}
\foreach \y in {0,1,2} 
\draw[fill=none,draw=none] (5/2*\x,5/2*\y) circle (1.5pt) coordinate (m-\x-\y);

\draw[line width=1.5pt,fill=gray,ultra nearly transparent] (m-4-2) -- (m-2-2) -- (m-0-1) -- (m-0-0) -- (m-2-0) -- (m-4-1) -- cycle;
\draw  (m-4-2) -- (m-2-2) -- (m-0-1) -- (m-0-0) -- (m-2-0) -- (m-4-1) -- cycle;

\node[inner sep=3pt,circle,draw,fill,black,label={$\alpha_2$}] at (m-2-2) {};
\node[inner sep=3pt,circle,draw,fill,black,label={$\alpha_3$}] at (m-0-1) {};
\node[inner sep=3pt,circle,draw,fill,black,label={right:$\alpha_4$}] at (m-4-1) {};
\node[inner sep=3pt,circle,draw,fill,black,label={$\alpha_5$}] at (m-2-0) {};
\node[inner sep=3pt,circle,draw,fill,black,label={left:$\alpha_6$}] at (m-0-0) {};
\node[inner sep=3pt,circle,draw,fill,black,label={$\alpha_1$}, ] at (m-4-2) {};

\node[inner sep=3pt,circle,draw,fill,black,label={$b_1$}] at (m-3-2) {};
\node[inner sep=3pt,circle,draw,fill,black,label={$b_2$}] at (m-1-0) {};

\node[inner sep=3pt,circle,draw,fill,black,label={$i_2$}] at (m-1-1) {};
\node[inner sep=3pt,circle,draw,fill,black,label={$i_1$}] at (m-3-1) {};

\node[inner sep=3pt,circle,draw,fill,red,label={$m$}] at (m-2-1) {};
\end{tikzpicture}

%

\end{minipage}\qquad
\begin{minipage}[h]{0.45\textwidth}
		\scalebox{0.5}{\centering
\begin{tikzpicture}
\foreach \x in {0,1,2,...,4}
\foreach \y in {0,1,2} 
\draw[fill=none,draw=none] (5/2*\x,5/2*\y) circle (1.5pt) coordinate (m-\x-\y);

\draw[line width=1.5pt,fill=gray,ultra nearly transparent] (m-4-2) -- (m-2-2) -- (m-0-1) -- (m-0-0) -- (m-2-0) -- (m-4-1) -- cycle;
\draw  (m-4-2) -- (m-2-2) -- (m-0-1) -- (m-0-0) -- (m-2-0) -- (m-4-1) -- cycle;

\draw[thick,red,fill=red,ultra nearly transparent] (m-4-2) -- (m-0-1) -- (m-2-0) -- cycle;
\draw[thick,red] (m-4-2) -- (m-0-1) -- (m-2-0) -- cycle;
\draw[thick,orange,fill=orange,very nearly transparent] (m-2-2) -- (m-4-1) -- (m-0-0) -- cycle;
\draw[thick,orange] (m-2-2) -- (m-4-1) -- (m-0-0) -- cycle;
\draw[line width=2pt,blue,dashed] (m-3-2) -- (m-1-0);
\draw[line width=2pt,green!50!black,dashed] (m-3-1) -- (m-1-1);

\node[inner sep=3pt,circle,draw,fill,red,label={$m$}] at (m-2-1) {};
\node[inner sep=3pt,circle,draw,fill,black,label={$\alpha_2$}] at (m-2-2) {};
\node[inner sep=3pt,circle,draw,fill,black,label={$\alpha_3$}] at (m-0-1) {};
\node[inner sep=3pt,circle,draw,fill,black,label={right:$\alpha_4$}] at (m-4-1) {};
\node[inner sep=3pt,circle,draw,fill,black,label={$\alpha_5$}] at (m-2-0) {};
\node[inner sep=3pt,circle,draw,fill,black,label={left:$\alpha_6$}] at (m-0-0) {};
\node[inner sep=3pt,circle,draw,fill,black,label={$\alpha_1$}, ] at (m-4-2) {};

\node[inner sep=3pt,circle,draw,fill,black,label={$b_1$}] at (m-3-2) {};
\node[inner sep=3pt,circle,draw,fill,black,label={$b_2$}] at (m-1-0) {};

\node[inner sep=3pt,circle,draw,fill,black,label={$i_2$}] at (m-1-1) {};
\node[inner sep=3pt,circle,draw,fill,black,label={$i_1$}] at (m-3-1) {};
\end{tikzpicture}

%
\end{minipage}

		\caption{(Left) An illustration of $H$, where ${\alpha_j}, {b_j}, {i_j}$ are  as in \eqref{eq:barycenter}. Right panel: The circuits of the SONC decomposition, consisting of two 2-dimensional circuits with vertices $\alpha_1,\alpha_3,\alpha_5$ and $\alpha_2,\alpha_4,\alpha_6$ respectively, and two 1-dimensional circuits, with vertices $b_1,b_2$ and $i_1,i_2$ respectively.}
		\label{fig:circuits}\label{fig:newtonpoly}
	\end{figure}
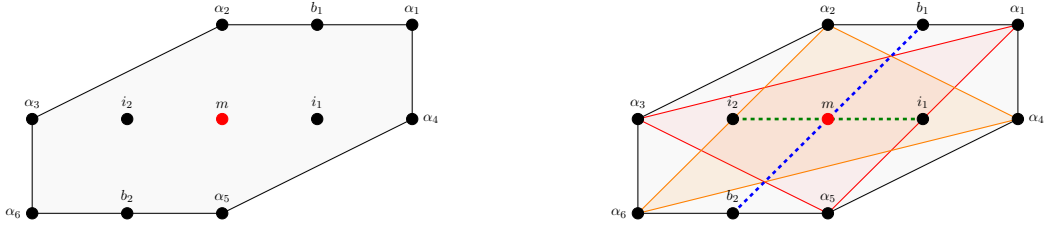
	
	The circuit number corresponding to each of the circuit polynomials are:
	\begin{align*}
	\Theta_{p_{{\eta},1}} &= 3(c_{{\eta},{\alpha_1}} c_{{\eta},{\alpha_3}} c_{{\eta},{\alpha_5}})^{\frac{1}{3}}, &
	\Theta_{p_{{\eta},2}} &= 3(c_{{\eta},{\alpha_2}} c_{{\eta},{\alpha_4}} c_{{\eta},{\alpha_6}})^{\frac{1}{3}}\\
	\Theta_{p_{{\eta},3}} &= 2(c_{{\eta},{b_1}} c_{{\eta},{b_2}} )^{\frac{1}{2}} &
	\Theta_{p_{{\eta},4}}  &= 2(c_{{\eta},{i_1}} c_{{\eta},{i_2}} )^{\frac{1}{2}}.
	\end{align*}
	Now assume that the following inequality is satisfied for $c_{{\eta},{m}}$, the coefficient of $x^m$ in $p_{\eta,H}$:
	\begin{align}
	\label{Inequality:SufficientInequality}
	-c_{{\eta},{m}} \leq \Theta_{p_{{\eta},1}} + \Theta_{p_{{\eta},2}} + \Theta_{p_{{\eta},3}} + \Theta_{p_{{\eta},4}}.
	\end{align}
	Then one can find $\bar{c}_{{\eta},{1}}, \bar{c}_{{\eta},{2}}, \bar{c}_{{\eta},{3}}, \bar{c}_{{\eta},{4}} \in \R$ such that $\sum \bar{c}_{{\eta},{i}} = c_{{\eta},{m}}$ and for all $i$, $-\bar{c}_{{\eta},{i}} \leq \Theta_{p_{\eta,i}}$. Theorem~\ref{Theorem:MainCircuitNonnegativity} implies that each $p_{{\eta},i}$ is nonnegative over $\R^2_{>0}$. As $p_{\eta,H}=
	p_{{\eta},1}+p_{{\eta},2}+p_{{\eta},3}+p_{{\eta},4}$, $p_{{\eta},H}$ also is nonnegative. In terms of the entries of $\eta$,  \eqref{Inequality:SufficientInequality} becomes
	{\small \begin{align*}
		-K_{1}K_{2}K_{3}\k_{3}\k_{6}\k_{12}b(\eta) \ \leq \ &  3K_{1}K_{2}K_{3}\k_{3}\k_6 \k_{12}\left(K_1 K_{4}^2\k_{6}^2\k_{9}^2a(\eta)\right)^{1/3}  \ + \ 3\left(K_{1}^5K_{2}^3K_{3}^3K_{4}\k_{3}^3\k_{6}^5\k_{9}^2\k_{12}^3a(\eta)\right)^{1/3} \\ & \ + \ 4\left(K_{1}^3K_{2}^2K_{3}^2K_{4}\k_{3}^3\k_{6}^3\k_{9}\k_{12}^3\right)^{1/2}    \ + \ 2\left(K_{1}^2K_{2}^3K_{3}^3\k_{3}^3\k_{6}^2\k_{12}^3a(\eta)\right)^{1/2}, \end{align*}}%
	which after factoring out terms and simplifying gives the inequality in the statement. 
	
	When $a(\eta)=0$, inequality~\eqref{Inequality:SufficientInequality3} reduces to $K_1+K_4-K_2-K_3 \leq 4\sqrt{K_1K_4}$. 
	We verify using  the function {\tt IsEmpty} in {\tt Maple 2019} that whenever this holds, then $g$ in Theorem~\ref{prop:apos_disc} is negative, implying that $p_{\eta,H}(x)$ is nonnegative over $\R^2_{>0}$.
	\end{proof}

\begin{remark} 
	The SONC decomposition of $p_{{\eta}}$ into $p_{\eta,1}, p_{\eta,2}, p_{\eta,3}, p_{\eta,4}$ in the proof of Theorem~\ref{prop:apos_circuit} is not unique. Other sufficient conditions may be derived using other covers of $H$, see e.g., \cite[page 20]{DresslerIlimandeWoldd:ConstrainedOptViaCircuitPoly}. Two main reasons underlie the choice of this particular cover. First, it uses the least possible number of circuits while using every  positive point only once. Hence we use all the possible positive weight and avoid introducing new parameters for nondisjoint circuits. Second, as ${m}$ is the barycenter of each chosen circuit, the derived circuit numbers have simple expressions.
\end{remark}

\begin{example} 
	To illustrate the use of inequality \eqref{Inequality:SufficientInequality3} to certify monostationarity, consider  $\eta =(2,0.5,0.5,2,2,1,1,1)$.
	Then,  \eqref{Inequality:SufficientInequality3} holds since the right hand side is  $\approx 24.72$, while the left hand side is 2.
By Theorem~\ref{prop:apos_circuit},  $\eta$ does not enable multistationarity.
	Indeed, $p_{\eta,H}({x_1,x_3}) \geq 0 $ for all ${x} \in \R^2_{\geq 0}$, as it also can be seen by rewriting the polynomial as:
	\begin{align*}
	p_{{\eta,H}}(x_1,x_3)   & =   x_{2}^4x_{3} + 4 x_{1}^4x_{3}   + \tfrac{1}{2}x_{1}^3x_{2}x_{3}^2 + 8x_{1}^3x_{2}x_{3}+ x_{1}^2x_{2}^2x_{3}^2\\
	& \qquad + 4x_{1}^2x_{2}^2  + 4x_{1}x_{2}^3x_{3} + x_{1}x_{2}^3 + x_{1}^4x_{3}^2 + x_{2}^4 + (x_{1}^2x_{3}-x_{2}^2)^2.
	\end{align*}
\end{example}

\begin{example}  
	We fix the parameters  $(K_2,K_3, \k_3,\k_6,\k_9,\k_{12})=(1,1,2,1,1,1)$ as in Example~\ref{ex:apos1}. 
	Figure~\ref{fig:regions} shows a  comparison of   the two necessary conditions for multistationarity from Theorem~\ref{prop:apos_circuit} and Theorem~\ref{prop:apos_disc}. 
	For this choice of parameters, inequality  \eqref{Inequality:SufficientInequality3} becomes
	\begin{align}
	\label{Inequality:K1K4plane}
	0 \leq 3 \left(K_1 K_4^2\right)^{\frac{1}{3}} \ + \ 4\sqrt{2} \left(K_1K_4\right)^{\frac{1}{2}} \ + \ 3 \left(K_1^2K_4\right)^{\frac{1}{3}} - K_1 - K_4  \ + \ 2\sqrt{2} \ + \ 4.
	\end{align}
Figure~\ref{fig:regions} hints at that the sufficient condition for monostationarity of Theorem~\ref{prop:apos_circuit} includes a cone pointed at zero. To investigate this further, consider the line $s K_1 =K_4$ for $s \in (0,+\infty)$. Then the right hand side of \eqref{Inequality:K1K4plane}  becomes
	\begin{align}
	\label{PolynomialK1}
	\big(3s^{\frac{2}{3}}+4\sqrt{2}s^{\frac{1}{2}}+3s^{\frac{1}{3}}-s-1\big) K_1 + \big(2\sqrt{2} + 4 \big).
	\end{align}
The positive semiline belongs to the monostationarity region if \eqref{PolynomialK1} is positive for all $K_1>0$. As  \eqref{PolynomialK1} is linear in $K_1$ with positive constant term, it is positive for all  $K_1>0$ if and only if the leading coefficient is positive. This holds if and only if $s$ lies in the interval $\approx (1/197.995,197.995).$
\end{example}

\begin{figure}[t!]
	\begin{center}
		\begin{minipage}[h]{0.4\textwidth}
			\includegraphics[scale=0.23]{./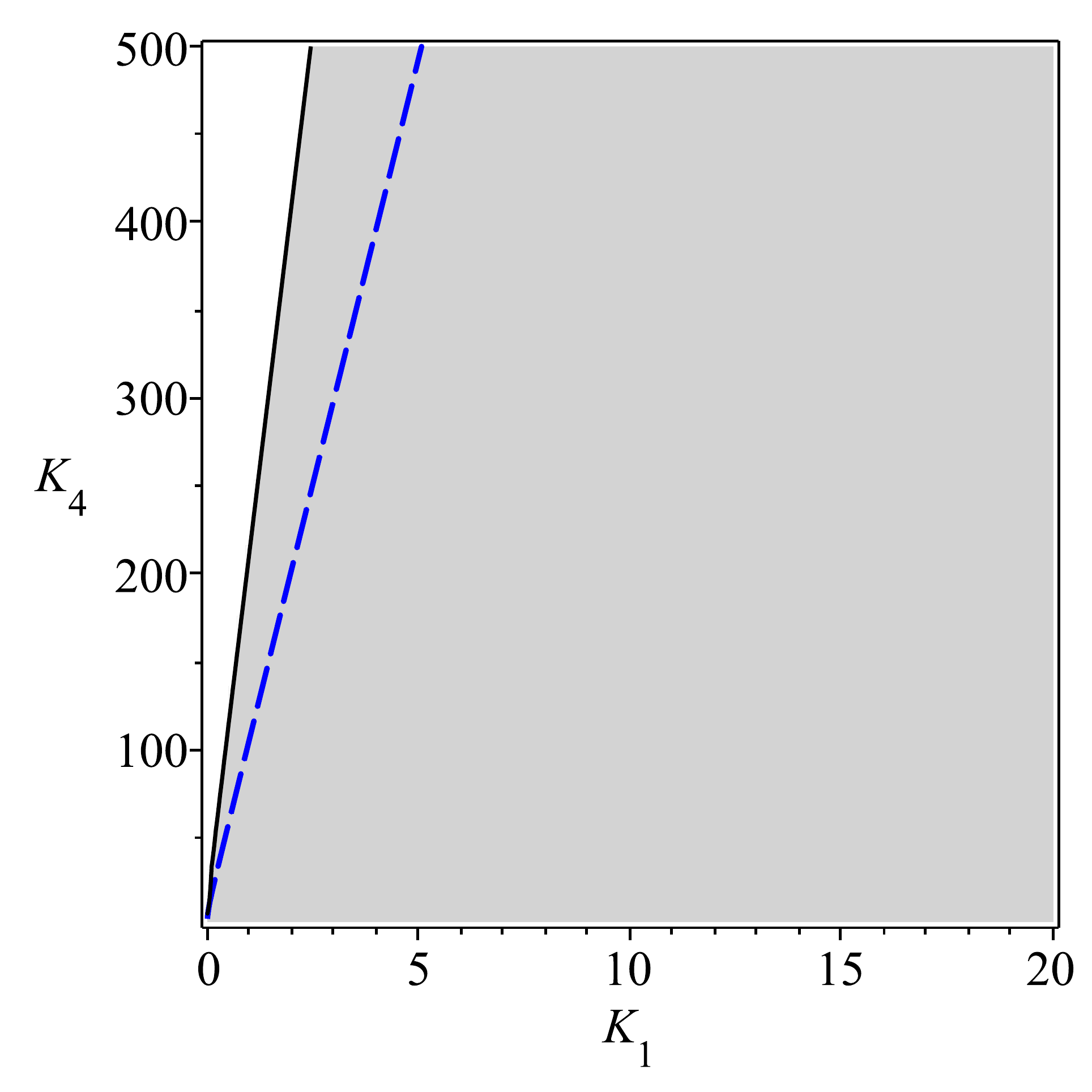}
		\end{minipage}
		\qquad
		\begin{minipage}[h]{0.4\textwidth}
			\includegraphics[scale=0.23]{./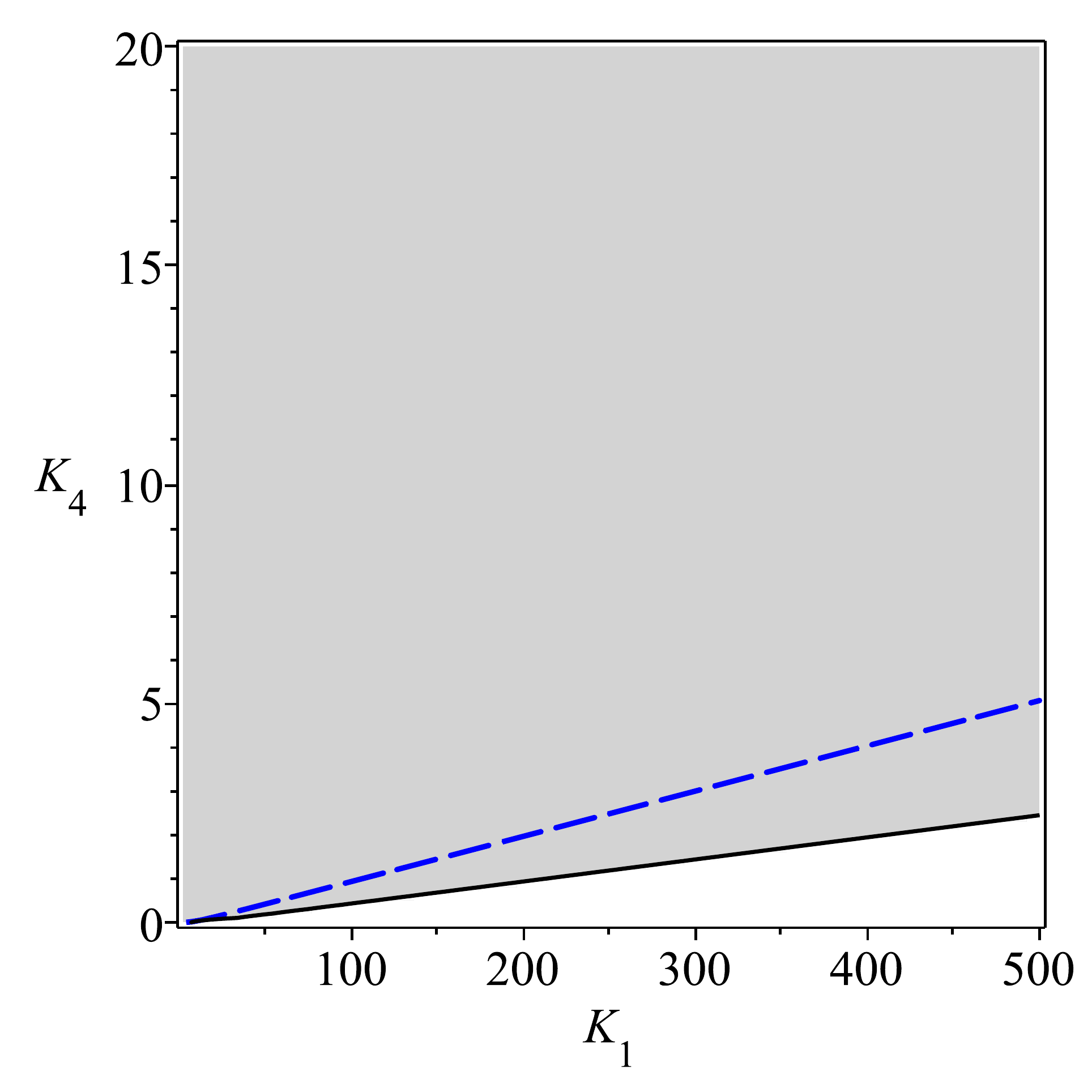}
		\end{minipage}
	\end{center}
	\caption{{\small For $(K_2,K_3, \k_3,\k_6,\k_9,\k_{12})=(1,1,2,1,1,1)$, the region between blue dashed lines is the region where we can certify monostationarity using Theorem~\ref{prop:apos_disc}. The region given between full lines is the region where we can certify monostationarity using Theorem~\ref{prop:apos_circuit}. The two panels focus on either $K_1$ large or $K_1$ small. }} \label{fig:regions}
\end{figure}

The conclusions in the example above extend to any choice of fixed parameters 
 $K_2,K_3, \k_3,\k_6$, $\k_9,\k_{12}$. In particular, in the $(K_1,K_4)$ plane, the region of monostationarity includes a cone pointed at zero that includes the line $K_1=K_4$. This is the content of the next corollary. This result will be critical to obtain a parametric description of the regions of mono- and multistationarity  in Section~\ref{sec:enabled}.

\begin{corollary}\label{cor:apos_circuit}
	Assume $\eta':=(K_2,K_3, \k_3,\k_6,\k_9,\k_{12})$ fixed such that $a(\overline{\k})\geq 0$ and consider the line $K_4=s K_1$ in $\R^2_{>0}$ with coordinates $K_1,K_4$.
	There exist $0<s_1(\eta')<\tfrac{14- \sqrt{192}}{2} $ and $\tfrac{14 +  \sqrt{192}}{2}<s_2(\eta')$ such that:
	\begin{itemize}
		\item[(i)]  For any $s\in [s_1(\eta'),s_2(\eta')]$, the points in the line $K_4=s K_1$ satisfy inequality 
 \eqref{Inequality:SufficientInequality3}. 
		\item[(ii)] If $s\notin [s_1(\eta'),s_2(\eta')]$, then there exists $K_1'$ such that \eqref{Inequality:SufficientInequality3} holds if and only if $K_1\leq K_1'$.
		\item[(iii)] If $\k_3\k_{12}$ increases, while $K_2,K_3,\k_6,\k_9$ remain fixed, then $s_1(\eta')$ decreases to zero and $s_2(\eta')$ increases to $+\infty$. 
	\end{itemize}
	In particular, if $K_1=K_4$ and $a(\overline{\k})\geq 0$, multistationarity is not enabled.
\end{corollary}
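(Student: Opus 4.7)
The plan is to substitute $K_4 = s K_1$ into inequality~\eqref{Inequality:SufficientInequality3} and observe that, after grouping terms by powers of $K_1$, the inequality becomes
\[
  A(s)\,K_1 + C(\eta') \ \geq \ 0,
\]
where
\[
A(s) \ := \ 3 s^{1/3}\bigl(1+s^{1/3}\bigr)\bigl(\k_6^{2}\k_9^{2}\,a(\overline{\k})\bigr)^{1/3} + 4\sqrt{s\,\k_3\k_6\k_9\k_{12}} - (1+s)\,\k_6\k_9,
\]
and $C(\eta') := 2\sqrt{K_2K_3\k_3\k_{12}\,a(\overline{\k})} + (K_2+K_3)\,\k_3\k_{12} > 0$. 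From this affine-in-$K_1$ structure, parts (i) and (ii) follow immediately: when $A(s) \geq 0$ the inequality holds for every $K_1 > 0$, and when $A(s) < 0$ it holds iff $K_1 \leq K_1' := -C(\eta')/A(s)$. Everything thus reduces to understanding the set $\{s > 0 : A(s) \geq 0\}$.

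To analyze this set, I would set $t = s^{1/6}$ so that
\[
P(t) \ := \ A(t^6) \ = \ -\k_6\k_9\,t^6 + 3(\k_6^2\k_9^2 a)^{1/3} t^4 + 4\sqrt{\k_3\k_6\k_9\k_{12}}\,t^3 + 3(\k_6^2\k_9^2 a)^{1/3} t^2 - \k_6\k_9
\]
is a genuine polynomial in $t$. Its nonzero coefficients have sign pattern $-,+,+,+,-$, so Descartes' rule of signs gives at most two positive roots. On the other hand, $P(0) < 0$, $P(t) \to -\infty$ as $t \to \infty$, while $P(1) = -2\k_6\k_9 + 6(\k_6^2\k_9^2 a)^{1/3} + 4\sqrt{\k_3\k_{12}\k_6\k_9} \geq 2\k_6\k_9 > 0$ (using $a \geq 0$, hence $\k_3\k_{12} \geq \k_6\k_9$). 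The intermediate value theorem then forces at least two positive roots, hence exactly two, say $t_1 < t_2$, with $P > 0$ on $(t_1, t_2)$. Setting $s_i := t_i^6$ completes (i) and (ii). For the explicit bounds I would specialize to $a(\overline{\k}) = 0$, where $A(s)$ collapses to $\k_6\k_9\,(4\sqrt{s} - 1 - s)$, whose positive zeros are $\sqrt{s} = 2 \pm \sqrt{3}$, i.e., $s = 7 \pm 4\sqrt{3} = (14 \mp \sqrt{192})/2$. For $a > 0$, the two positive summands of $A(s)$ are strictly larger than their $a = 0$ counterparts at every $s > 0$, so the zeros of $A$ lie strictly outside $[7 - 4\sqrt{3},\,7 + 4\sqrt{3}]$, yielding the claimed strict bounds on $s_1(\eta')$ and $s_2(\eta')$.

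For (iii), I would note that with $K_2,K_3,\k_6,\k_9$ held fixed, both positive summands of $A(s)$ are strictly monotonically increasing functions of the product $\k_3\k_{12}$. Hence $A(s)$ is too, and the superlevel set $\{s : A(s) \geq 0\}$ expands as $\k_3\k_{12}$ grows, so $s_1(\eta')$ decreases and $s_2(\eta')$ increases. For the limits, fix any $s > 0$: the term $4\sqrt{s\,\k_3\k_{12}\k_6\k_9}$ tends to $+\infty$ as $\k_3\k_{12} \to \infty$, so $A(s) \to +\infty$, forcing $s \in [s_1, s_2]$ for $\k_3\k_{12}$ sufficiently large; this gives $s_1 \to 0$ and $s_2 \to +\infty$. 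Finally, the last sentence of the corollary is immediate: $K_1 = K_4$ is the case $s = 1$, and $s_1(\eta') < 7 - 4\sqrt{3} < 1 < 7 + 4\sqrt{3} < s_2(\eta')$ places $s = 1$ in the interval where $A \geq 0$, so \eqref{Inequality:SufficientInequality3} holds for every $K_1 > 0$ and Theorem~\ref{prop:apos_circuit} precludes multistationarity. The main delicate point is the Descartes-plus-IVT count together with carefully propagating strict versus non-strict inequalities across the boundary $a(\overline{\k}) = 0$; the remainder is routine algebraic manipulation.
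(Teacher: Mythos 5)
Your proof is correct and follows essentially the same route as the paper: substitute $K_4=sK_1$ to reduce \eqref{Inequality:SufficientInequality3} to an affine condition $A(s)K_1+C(\eta')\geq 0$ in $K_1$, set $t=s^{1/6}$ (the paper's $r$) so that $A$ becomes a degree-$6$ polynomial, and combine Descartes' rule of signs with positivity at $s=1$ to conclude there are exactly two positive roots bracketing the interval where the line satisfies the inequality. Your derivation of the explicit bounds $7\mp 4\sqrt{3}$ by comparison with the $a=0$ specialization is equivalent to the paper's inequality $1+s\leq 4\sqrt{s}$, and your limit argument for (iii) supplies a detail the paper leaves implicit.
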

\begin{proof}
	As $\eta'$ is fixed, inequality \eqref{Inequality:SufficientInequality3} is a relation on   $K_1$ and $K_4$. We rewrite it as:
{\small \begin{align*}
		0 \ \leq \ &  - (K_1+K_4)\k_{{6}}\k_{{9}} + 3\big(K_1K_4\k_{6}^2\k_{9}^2a(\eta)\big)^{\frac{1}{3}}  \Big(K_{4}^{\frac{1}{3}} +   K_{1}^{\frac{1}{3}}\Big) 
	 \ + \  4 \left(K_{1}K_{4}\k_{3}\k_{6}\k_{9}\k_{12}\right)^{\frac{1}{2}} 	\\ &  +  2\left(K_{2}K_{3}\k_{3}\k_{12}a({\eta})\right)^{\frac{1}{2}} +(K_{{2}}+K_{{3}})\k_{{3}}\k_{{12}}.   \nonumber
		\end{align*}}%
	When $K_4=sK_1$, this inequality becomes
	{\small \begin{align} \label{eq:suff4}
		0  \ \leq \ & \Big( - (1+s)\k_{{6}}\k_{{9}} + 3 (s\k_{6}^2\k_{9}^2a({\eta}))^{\frac{1}{3}}  (s^{\frac{1}{3}}+1)\ + \ 4 (s \k_{3}\k_{6}\k_{9}\k_{12})^{\frac{1}{2}} \Big) K_1 \\ &   \ + \ 2(K_{2}K_{3}\k_{3}\k_{12}a({\eta}))^{\frac{1}{2}}+(K_{{2}}+K_{{3}})\k_{{3}}\k_{{12}}.   \nonumber
		\end{align} }%
	First, note that since by assumption $\k_3\k_{12}\geq \k_6\k_9$, we have:
	\[ (1+s)\k_{{6}}\k_{{9}} = (1+ s) \big( \k_{{6}}^2\k_{{9}}^2\big)^{\frac{1}{2}} \leq  (1+s) ( \k_{3}\k_{6}\k_{9}\k_{12} )^{\frac{1}{2}}.  \]
	Hence, if $(1+s) ( \k_{3}\k_{6}\k_{9}\k_{12} )^{\frac{1}{2}}\leq  4 (s \k_{3}\k_{6}\k_{9}\k_{12})^{\frac{1}{2}}$, then \eqref{eq:suff4} holds for all $K_1>0$. This inequality simplifies to $1+s \leq  4  \sqrt{s}$, which holds if and only if  
	$s\in \big( \tfrac{14- \sqrt{192}}{2},  \tfrac{14+ \sqrt{192}}{2}\big)$.

	Now,  inequality \eqref{eq:suff4} holds for all $K_1>0$ if and only if the coefficient of $K_1$ is nonnegative. 
	We set $r^6 = s$, and the coefficient of $K_1$ becomes 
	\[
	h(r):= - (1+r^6)\k_{{6}}\k_{{9}} \, +\,  3 r^2(\k_{6}^2\k_{9}^2a({\eta}))^{\frac{1}{3}} (1+r^2)
	\, + \, 4 r^3 (\k_{3}\k_{6}\k_{9}\k_{12})^{\frac{1}{2}}.
	\] 
	This is a degree $6$ polynomial in $r$ with negative leading and independent term and the other coefficients are nonnegative, with at least one positive. Since 
	the right hand side of \eqref{eq:suff4} evaluated at $s=1$ is strictly positive, $h(1)>0$ and $h$ has exactly two distinct positive roots $r_1$ and $r_2$. These give rise to two values $s_1(\eta')=r_1^6, s_2(\eta')=r_2^6$, satisfying 	
	$s_1(\eta')<\tfrac{14- \sqrt{192}}{2} $ and $\tfrac{14 +  \sqrt{192}}{2}<s_2(\eta')$ for any $\eta'$, and such that \eqref{eq:suff4} holds for any $s\in [s_1(\eta'),s_2(\eta')]$. 
	This proves (i). 
	
	If $s\notin [s_1(\eta'),s_2(\eta')]$, then $h(\sqrt[6]{s})$ is negative, and hence  inequality \eqref{eq:suff4} only holds for $K_1\leq K_1'$ for $K_1'>0$ making the right-hand side of \eqref{eq:suff4} zero. This concludes the proof of (ii).
	
	Finally, (iii) follows from the fact that $a(\overline{\k})$ increases with the product $\k_3\k_{12}$, and hence the positive terms of $h(r)$ also increase. 
\end{proof}

\section{\bf Regions of Multistationarity }\label{sec:enabled}
In the previous section we gave two inequalities in the kinetic parameters that guarantee monostationarity for all choices of total amounts. Furthermore, when $K_2,K_3,\k_3,\k_6,\k_9,\k_{12}$ are fixed, Corollary~\ref{cor:apos_circuit} (see also Figure~\ref{fig:regions}) certifies monostationarity  for a cone pointed at zero and containing the line $K_1=K_4$, and leaves two regions, along the $K_1$- and $K_4$-axes, undecided. 
Now, we will show that 
if $K_4$ also is fixed, then multistationarity is enabled for $K_1$ large enough, and, symmetrically, if $K_1$ is fixed, then $K_4$ large enough yields multistationarity. 
We start by proving this fact using the Newton polytope of $p_{\eta,H}(x)$, but now viewed as a polynomial in $K_1,x_1,x_3$.
Afterwards, we give an explicit parametric description of the regions of mono- and multistationarity.

\subsection{Multistationarity can be enabled when $b(\eta)<0$. }\label{sec:multi:greenline}
Consider
 \[ \eta'=(K_2,K_3,K_4,\k_3,\k_6,\k_9,\k_{12}),\]  
and recall that we write $a(\overline{\k})=\k_3\k_{12}-\k_6\k_9$. 
Let $q_{\eta'}(K_1,x_1,x_3)$ be the polynomial $p_{\eta,H}(x_1,x_3)$ viewed as a  polynomial in $K_1,x_1,x_3$.  
Under the hypothesis $a(\overline{\k})\geq 0$ (which is independent of $K_1$), the  coefficient 
of $K_1^2  x_1^2  x_3$ is negative and equals $- K_2K_3 \k_3\k_6^2\k_9\k_{12}$. The Newton polytope of $q_{\eta'}(K_1,x_1,x_3)$ depends on whether $a(\eta)=0$ or $a(\eta)>0$, but in both cases the point $(2 ,  2 , 1)$ is a vertex.

\begin{proposition}\label{prop:multi:greenline}
Consider $\eta'=(K_2,K_3,K_4,\k_3,\k_6,\k_9,\k_{12})$ and  let $\mathcal{N}^o$  be the interior of the outer normal cone of $\newton{q_{\eta'}}$ at $(2, 2 , 1)$.
If $K_1$ belongs to the set
\[  \bigcup_{w\in \mathcal{N}^o}  \Big\{ y \st y>z_0^{w_1}, \textrm{ with }z_0\textrm{ the largest root of  }q_{\eta'}(z^{w_1},z^{w_2}, z^{w_3}) \Big\},  \] 
then $p_{\eta,H}$ attains negative values over $\R^2_{>0}$ and $\eta$ enables multistationarity. Moreover, this set is nonempty. 

Analogously, by symmetry, given $K_1,K_2,K_3,\k_3,\k_6,\k_9,\k_{12}$, after applying $\sigma$ from Remark~\ref{rk:symmetry} to $q_{\eta'}(K_1,x_1,x_3)$, we obtain a set of values of $K_4$ that enable multistationarity. 
\end{proposition}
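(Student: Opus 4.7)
The plan is to deduce the proposition from the vertex-sign statement in Proposition~\ref{prop:newton} together with the explicit ``perturbation in a normal direction'' recipe of Remark~\ref{rk:outer_normal}, applied to $q_{\eta'}$ viewed as a polynomial in the three variables $(K_1,x_1,x_3)$. The hypothesis $a(\overline{\k})\geq 0$ singles out the monomial $K_1^2 x_1^2 x_3$, which arises only through the expansion of $b(\eta)$ inside the term $K_1 K_2 K_3 \k_3 \k_6 \k_{12}\, b(\eta)\, x_1^2 x_3$ of \eqref{eq:pH}, as the unique carrier of a negative coefficient, namely $-K_2 K_3 \k_3 \k_6^2 \k_9 \k_{12}$. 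I would first verify this by re-expanding \eqref{eq:pH} and collecting all monomials of the form $K_1^i x_1^j x_3^k$, a short bookkeeping step already hinted at in the paragraph preceding the statement.

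Next, I would check that $(2,2,1)$ is genuinely a vertex of $\newton{q_{\eta'}}$ by listing the remaining exponents and exhibiting one explicit direction $w^*\in\R^3$ with $w^*_1>0$ at which the linear functional $\langle w^*,\cdot\rangle$ attains a strict maximum at $(2,2,1)$; the candidate $w^*=(3,1,2)$ is easily checked by inspection, giving value $10$ there and strictly less at every other exponent. This certifies that $\mathcal{N}^o$ is a nonempty open cone and contains at least one vector whose first coordinate is positive. For any $w\in\mathcal{N}^o$, evaluating $q_{\eta'}(z^{w_1},z^{w_2},z^{w_3})$ and invoking Remark~\ref{rk:outer_normal} yields
\[
q_{\eta'}(z^{w_1},z^{w_2},z^{w_3}) \,=\, -K_2 K_3 \k_3 \k_6^2 \k_9 \k_{12}\, z^{2w_1+2w_2+w_3} \,+\, \text{strictly lower-order terms in } z,
\]
so this univariate function is negative for all sufficiently large $z$ and therefore admits a largest positive zero $z_0$. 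For every $z>z_0$ the expression stays strictly negative; setting $K_1=z^{w_1}$, $x_1=z^{w_2}$, $x_3=z^{w_3}$ produces a point in $\R^2_{>0}$ at which $p_{\eta,H}$ is negative, and Proposition~\ref{prop:summary}(ii) then yields that $\eta$ enables multistationarity.

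Taking the union over all $w\in\mathcal{N}^o$ produces the set described in the statement, and the existence of $w^*$ with $w^*_1>0$ gives nonemptiness immediately: for this $w^*$, the half-line $\{y : y>z_0^{w^*_1}\}$ is nonempty. The symmetric assertion for $K_4$ follows at once by applying the involution $\sigma$ of Remark~\ref{rk:symmetry} to $q_{\eta'}$. The only step I expect to require any real care is the vertex/normal-cone verification at the beginning, as it involves sorting through roughly a dozen exponents of $q_{\eta'}$ to confirm both that $(2,2,1)$ is a vertex and that a usable $w^*$ with positive first coordinate exists; once one concrete admissible $w$ is pinned down, the rest is a direct translation of the Newton polytope machinery developed in Section~\ref{sec:newton}.
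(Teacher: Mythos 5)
Your overall strategy is the same as the paper's: identify $(2,2,1)$ as a vertex of $\newton{q_{\eta'}}$ carrying the unique negative coefficient $-K_2K_3\k_3\k_6^2\k_9\k_{12}$, substitute $(z^{w_1},z^{w_2},z^{w_3})$ for $w$ in the outer normal cone, and use Remark~\ref{rk:outer_normal} to get negativity of $p_{\eta,H}$ for $z$ beyond the largest root. However, there is a gap in how you handle the union over all of $\mathcal{N}^o$. You verify that the \emph{single} direction $w^*=(3,1,2)$ has positive first coordinate and use this only for nonemptiness; but the statement asserts that \emph{every} $K_1$ in $\bigcup_{w\in\mathcal{N}^o}\{y \st y>z_0^{w_1}\}$ enables multistationarity, and the translation of ``$z>z_0$'' into ``$K_1=z^{w_1}>z_0^{w_1}$'' is only valid when $w_1>0$. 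If some $w\in\mathcal{N}^o$ had $w_1<0$, the map $z\mapsto z^{w_1}$ would be decreasing and $\{y\st y>z_0^{w_1}\}$ would correspond to $z<z_0$, a regime where negativity of $q_{\eta'}(z^{w_1},z^{w_2},z^{w_3})$ is not guaranteed; if $w_1=0$ the condition is meaningless. So your argument, as written, does not exclude that the union contains values of $K_1$ for which nothing has been proved.

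The paper closes exactly this point: after running the same normal-cone substitution, it states ``all that remains is to show that $w_1$ is positive,'' computes the generators of $\mathcal{N}^o$ explicitly --- $(2,1,0)$, $(1,0,1)$, $(2,1,2)$ when $a(\eta)>0$ and $(2,1,0)$, $(1,0,1)$, $(0,0,1)$ when $a(\eta)=0$ (note the polytope, hence the cone, differs in the two cases, which your single check of $w^*=(3,1,2)$ also does not address) --- and observes that any $w$ in the open cone is a combination $\lambda_1v_1+\lambda_2v_2+\lambda_3v_3$ with all $\lambda_i>0$, whence $w_1>0$ in both cases. Adding this computation to your argument would complete the proof; without it, the step from ``negativity for $z>z_0$'' to ``negativity for $K_1>z_0^{w_1}$'' is unjustified for a general $w\in\mathcal{N}^o$.
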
 
\begin{proof}
As   $(2 ,  2 , 1)$ is a vertex of $\newton{q_{\eta'}}$, 
 there exist $K_1,x_1,x_3>0$ such that $q_{\eta'}(K_1,x_1,x_3)<0$ by Proposition~\ref{prop:newton}.
By Remark~\ref{rk:outer_normal},  for $w\in  \mathcal{N}^o$, we consider the univariate function
$u_{\eta',w}(z)=q_{\eta'}(z^{w_1},z^{w_2}, z^{w_3})$, which is a generalized polynomial with real exponents and negative leading term. 
Then $u_{\eta',w}(z)<0$ for all $z>z_0$, where $z_0$ is the largest root of $u_{\eta',w}$. 
With $\eta = (z^{w_1},K_2,K_3,K_4,\k_3,\k_6,\k_9,\k_{12})$, we have $p_{\eta,H}(z^{w_2}, z^{w_3})= u_{\eta',w}(z)<0$. Hence, for any $K_1=z^{w_1}$ with $z>z_0$, $p_{\eta,H}$ attains negative values. 
All that remains is to show that $w_1$ is positive, to rewrite this condition as $K_1>z_0^{w_1}$ as in the statement. 

The outer normal cone $ \mathcal{N}^o$ of $\newton{q_{\eta'}}$ at $(2, 2 , 1)$  is generated by the vectors
\begin{equation}\label{eq:outervectors} 
\begin{split}
{v_1} &:= (2, 1 , 0), \quad 
{v_2} := (1,0 , 1 ), \quad
{v_3} := (2,1 , 2), \qquad \textrm{if}\quad a(\eta)>0, \\
{v_1} &:= (2, 1 , 0), \quad 
{v_2} := (1,0 , 1 ), \quad
{v_3} := (0,0 ,1), \qquad \textrm{if}\quad a(\eta)=0. 
\end{split}
\end{equation}
As any vector $\mathcal{N}^o$ is of the form 
$w=\lambda_1  {v_1}+ \lambda_2 {v_2}+ \lambda_3 {v_3}$
with $\lambda_i > 0$, we have $w_1>0$.  This concludes the proof.

Computations can be found in the supplementary file \emph{SupplInfo.mw}. 
\end{proof}

\begin{figure}[t!]
\includegraphics[scale=0.3]{./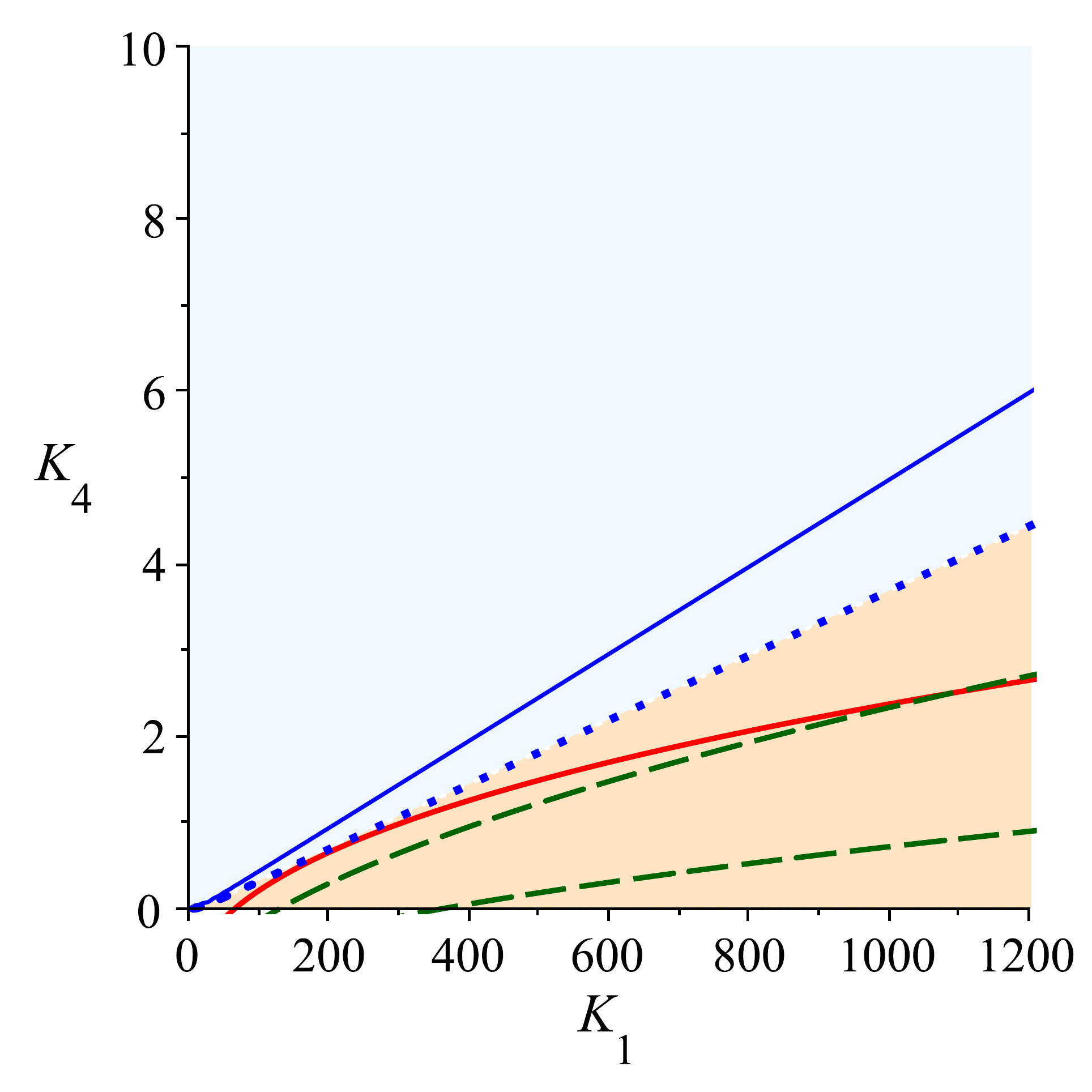}
\caption{{\small With $(K_2,K_3, \k_3,\k_6,\k_9,\k_{12})=(1,1,2,1,1,1)$, the figure shows 
a dotted blue line separating the regions of monostationary (above the line, blue) and of multistationarity (below the line, orange), found from Theorem~\ref{thm:parametric}. Above the solid blue line in the monostationarity region, the condition in Theorem~\ref{prop:apos_circuit} is satisfied. Below the solid red line in  the multistationarity region, multistationarity is enabled by means of Proposition~\ref{prop:multi:greenline} with $w=(3,1,2)\in \mathcal{N}^o$; similarly, the green dashed lines correspond to $w=(\tfrac{1}{2},2,1)$ and $w=(\tfrac{1}{3},3,2)$. 
}}\label{fig:multi:greenlines}
\end{figure}

\begin{example}\label{ex:lifting2}
Proposition~\ref{prop:multi:greenline} was invoked to select a parameter point $\eta$ enabling multistationarity in  Example ~\ref{ex:lifting}. 
Let  $\eta'=(K_2,K_3,K_4,\k_3,\k_6,\k_9,\k_{12})= (1,1,1,2,1,1,1)$, such that $a(\overline{\k})>0$. Consider the vector $(3, 1 , 2)= \tfrac{1}{2} {v_1} +  {v_2} + \tfrac{1}{2} {v_3}\in \mathcal{N}^o$ (c.f. \eqref{eq:outervectors}).
Then
\[ q_{\eta'}(z^{3},z, z^{2}) =  -z^7(-2 z^3 +11z^2 + 15z + 12), \]
whose largest root is  $\approx 6.75$.
Hence, by considering $K_1= 7^3=343$, multistationarity is enabled. 
Furthermore, this also gives that  $(x_1,x_3)= (7^{w_2}, 7^{w_3})=(7,49)$, satisfies  $p_{\eta,H}(x_1,x_3)<0$. 

\smallskip
Figure~\ref{fig:multi:greenlines} shows part of the region of Proposition~\ref{prop:multi:greenline} defined by the polynomial $q_{\eta'}(z^{3},z, z^{2}) $  (solid red line), together with the regions defined by other choices of $w$ (dashed lines in green). 
Obtaining an explicit description of the region in Proposition~\ref{prop:multi:greenline}, in terms of algebraic inequalities in the parameters has not been possible. However, in what follows we provide an explicit parametric description of the region of multistationarity (giving rise to the dotted blue line).
\end{example}

 \subsection{\textbf{Parametrization of the region of multistationarity.}}\label{sec:param}
 
Let $\eta'=(K_2,K_3,\k_3,\k_6,\k_9,\k_{12})$ for any $\eta$. Assume $a(\overline{\k})\geq 0$. 
We provide now two functions $\psi,\phi$   in $(s,\eta')$ and $(s,K_1,\eta')$ respectively, and a function $\xi(\eta')$  such that
$\eta=(K_1,K_2,K_3,K_4,\k_3,\k_6,\k_9,\k_{12})$ enables multistationarity if and only if
\begin{align}
 K_1& = \psi(s,\eta'), & K_4 & > \phi(s, \, \psi(s,\eta')\, ,\eta') , && \textrm{ for }s\in (0,\xi(\eta')),\qquad \textrm{\it or}  \label{eq:cond1} \\ 
 K_4& = \psi(s,\sigma(\eta')), &  K_1 &>\phi(s, \, \psi(s,\sigma(\eta')\, ),\sigma(\eta')), && \textrm{ for }s\in (0,\xi(\sigma(\eta')).\label{eq:cond2}
\end{align}
Note that if $K_2,K_3,\k_3,\k_6,\k_9,\k_{12}$ are fixed, then  Proposition~\ref{prop:multi:greenline} and Corollary~\ref{cor:apos_circuit}, together with the fact that $b(\eta)>0$ for $K_1,K_4$ small, indicate that there are two branches of multistationarity along the two axes: one with $K_1$ large and $K_4$ small, and one with $K_4$ large and $K_1$ small. 
These are the two branches giving rise to the two conditions \eqref{eq:cond1} and \eqref{eq:cond2}. 
By the symmetry of the system, we describe the $K_4$-branch   \eqref{eq:cond1}, and the other branch results from  applying $\sigma$. 
We specify the  nature of these branches further in the following lemma. 

\begin{lemma}\label{lemma:branches}
Assume that $\eta=(K_1^*,K_2,K_3,K_4^*,\k_3,\k_6,\k_9,\k_{12})$ enables multistationarity and $a(\overline{\k})\geq 0$. 
 Then either for all $K_4\geq K_4^*$ and $K_1\leq K_1^*$ (if $K_1^*<K_4^*$) or for all $K_4\leq K_4^*$ and $K_1\geq K_1^*$ (if $K_1^*>K_4^*$), the parameter point $\eta'=(K_1,K_2,K_3,K_4,\k_3,\k_6,\k_9,\k_{12})$ also enables multistationarity. 
\end{lemma}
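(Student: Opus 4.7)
The plan is to first reduce to the case $K_1^*<K_4^*$ by exploiting the $\sigma$-symmetry of Remark~\ref{rk:symmetry}, which interchanges $K_1\leftrightarrow K_4$ together with the paired parameters $(K_2,K_3),(\k_3,\k_{12}),(\k_6,\k_9)$: Case B follows from Case A by applying $\sigma$ before and after the same argument, so I focus on Case A. Assuming $K_1^*<K_4^*$, fix $(x_1^*,x_3^*)\in\R^2_{>0}$ with $p_{\eta^*,H}(x_1^*,x_3^*)<0$. The goal is to produce, for each $(K_1,K_4)$ in the quadrant $Q=\{K_1\le K_1^*,\ K_4\ge K_4^*\}$, some $(x_1,x_3)\in\R^2_{>0}$ at which $p_{\eta,H}$ is negative.

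The main idea is to view $p_{\eta,H}(x_1,x_3)$ as a polynomial $\tilde q(K_1,K_4,x_1,x_3)$ in four variables with $K_2,K_3,\k_3,\k_6,\k_9,\k_{12}$ fixed and $a(\overline{\k})\ge 0$. By~\eqref{eq:pH}, the only negative monomials of $\tilde q$ arise from expanding the $b(\eta)$-contribution and correspond to the exponent vectors $(1,1,2,1)$ and $(2,0,2,1)$ in the ordering $(K_1,K_4,x_1,x_3)$. A direct combinatorial check shows that $(1,1,2,1)$ is a vertex of $\newton{\tilde q}$: any putative convex combination recovering $(1,1,2,1)$ must use only positive exponent vectors with $K_4$-coordinate equal to $1$, and solving the resulting linear system forces a negative coefficient, ruling out such a combination.

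Then, by Remark~\ref{rk:outer_normal}, I would pick a vector $w\in\R^4$ in the interior of the outer normal cone of $\newton{\tilde q}$ at $(1,1,2,1)$ with $w_1<0$ and $w_2>0$. The explicit choice $w=(-1,2,-1,0)$ works, as one verifies that $w\cdot(1,1,2,1)=-1$ is strictly greater than $w\cdot v$ for every other exponent $v$ of $\tilde q$. This yields, for any base point $(K_1^0,K_4^0,x_1^0,x_3^0)\in\R^4_{>0}$ and $t$ sufficiently large, a multistationary parameter point $(K_1^0t^{-1},K_4^0t^2)$ with witness $(x_1^0t^{-1},x_3^0)$; equivalently, along each curve of constant $K_1\sqrt{K_4}$, multistationarity is enabled for all sufficiently small $K_1$, covering an asymptotic cone with $K_1\to 0$ and $K_4\to\infty$.

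The final step is to bridge the known multistationarity at $\eta^*$ with this asymptotic multistationarity so as to cover the entire quadrant $Q$. For each target $(K_1,K_4)\in Q$, the plan is to construct a witness by scaling $(x_1^*,x_3^*)$ along the direction $w$ and verifying monomial-by-monomial that every positive contribution to $p_{\eta,H}$ at the scaled point remains dominated by the negative $b(\eta)$-contribution; alternatively, one can combine the openness of the multistationarity region (the corollary following Proposition~\ref{prop:summary}) with the connectedness of $Q$ and the asymptotic multistationarity to preclude any interior monostationary point via a contradiction argument. The main obstacle is precisely this uniform dominance check (or, equivalently, the connectedness argument): the positive monomials of $\tilde q$ scale at different rates along $w$, so a careful case split is needed to ensure that no positive monomial overshoots the negative $b(\eta)$-contribution at any point of $Q$.
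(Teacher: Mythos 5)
Your reduction to the case $K_1^*<K_4^*$ via $\sigma$ is fine, and your Newton-polytope computation is essentially correct: in the variables $(K_1,K_4,x_1,x_3)$ the exponent $(1,1,2,1)$ coming from the $-K_1K_4\k_6\k_9$ part of the $b(\eta)$-term is indeed a vertex of $\newton{\tilde q}$ (all exponents have $K_4$-coordinate in $\{0,1\}$, so a convex combination must use only $K_4=1$ points, and the resulting system has no nonnegative solution), and $w=(-1,2,-1,0)$ does lie in the interior of its outer normal cone. But this only yields multistationarity in an asymptotic region where $K_1$ is sufficiently small and $K_4$ sufficiently large along the curves $K_1\sqrt{K_4}=\mathrm{const}$; it says nothing about the points of the quadrant $Q=\{K_1\le K_1^*,\ K_4\ge K_4^*\}$ near $(K_1^*,K_4^*)$ itself. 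The step you flag as the "main obstacle" is therefore not a technicality but the entire content of the lemma, and neither of your two proposed bridges closes it: the "uniform dominance check" is not carried out, and the fallback via openness of the multistationarity region plus connectedness of $Q$ is logically insufficient --- an open subset of a connected set containing one point and an asymptotic end need not be the whole set, so nothing precludes a monostationary "hole" inside $Q$.

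The paper's proof is much shorter and rests on two ingredients your proposal does not use. First, with the witness $(x_1,x_3)$ and all parameters except $K_4$ frozen, $p_{\eta,H}(x_1,x_3)$ is \emph{affine-linear} in $K_4$, say $q(K_4)=c_1K_4+c_0$ with $q(K_4^*)<0$. Second, Corollary~\ref{cor:apos_circuit} gives $q(K_1^*)\ge 0$, since $p_{\eta,H}\ge 0$ on $\R^2_{>0}$ whenever $K_4=K_1$ and $a(\overline{\k})\ge 0$. Together these force $c_1\neq 0$ and determine its sign: if $c_1<0$ then $q<0$ on $[K_4^*,\infty)$ and necessarily $K_4^*>K_1^*$; if $c_1>0$ then $c_0<0$, $q<0$ on $(0,K_4^*]$, and $K_4^*<K_1^*$. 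The statement for varying $K_1$ then follows by the same linearity argument in $K_1$ via $\sigma$. Note also that this mechanism is what produces the dichotomy in the statement (which branch of the quadrant you get is tied to the sign of $K_4^*-K_1^*$), something your asymptotic argument does not address. I recommend replacing the Newton-polytope route by this linearity-plus-Corollary~\ref{cor:apos_circuit} argument; the polytope computation you did is closer in spirit to Proposition~\ref{prop:multi:greenline}, which serves a different purpose (showing the asymptotic branches are nonempty).
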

\begin{proof}
As  $\eta$ enables multistationarity, there exist $x_1,x_3>0$ such that  $p_{\eta,H}(x_1,x_3)<0$. We fix these values of $x_1,x_3$, and let $\eta''=(K_1^*,K_2,K_3,\k_3,\k_6,\k_9,\k_{12})$ obtained from $\eta$ in the statement. 
The crucial observation is that $p_{\eta,H}$, with $\eta'',x_1,x_3$ fixed, is simply a 
linear polynomial $q(K_4)=c_1K_4+c_0$ in $K_4$, which satisfies $q(K_4^*)<0$. 
By Corollary~\ref{cor:apos_circuit}, $q(K_1^*)\geq 0$ (as $p_{\eta,H}(x_1,x_3)\geq 0$ if $K_4=K_1$), and hence  $c_1\neq 0$. 
If $c_1<0$, then $q(K_4)<0$ for any $K_4\geq K_4^*$, and this implies $K_4^*>K_1^*$ must hold. Similarly, if $c_1>0$, then necessarily $c_0<0$, and hence  $q(K_4)<0$ for any $K_4\leq K_4^*$, implying $K_4^*<K_1^*$. As $q(K_4)<0$ implies $\eta'$ enables multistationarity, the inequalities  in the statement regarding $K_4$ hold. The inequalities for $K_1$ follow by symmetry. 
\end{proof}

 Based on Lemma~\ref{lemma:branches}, we define the $K_4$-branch of multistationarity to consist of the set of parameters 
 $\eta=(K_1^*,K_2,K_3,K_4^*,\k_3,\k_6,\k_9,\k_{12})$ enabling multistationarity  and such that 
 $K_4^*>K_1^*$. Any point in this branch satisfies that $(K_1^*,K_2,K_3,K_4,\k_3,\k_6,\k_9,\k_{12})$ also enables multistationarity for all $K_4\geq K_4^*$. For fixed parameters $K_1^*,K_2,K_3,\k_3,\k_6,\k_9,\k_{12}$, we wish to determine the \emph{infimum} value $K_4^*$ that satisfies this property, that is, the value $K_4^*$ such that for any $K_4>K_4^*$ multistationarity is enabled. 

 In the next theorem we identify this value parametrically: we give  functions $\psi(s,\eta')$ and $\phi(s,K_1,\eta')$, for $s$ in an interval of the form $(0,\xi(\eta'))$, such that for any $K_4>\phi(s,\psi(s,\eta'),\eta')$, the point $(\psi(s,\eta'),K_2,K_3,K_4,\k_3,\k_6,\k_9,\k_{12})$ enables multistationarity, but for $K_4\leq \phi(s,\psi(s,\eta'),\eta')$, multistationarity is not enabled.
For fixed $\eta'$, the pair $(\psi(s,\eta'), \phi(s,\psi(s,\eta'),\eta'))$ describes a curve in the $(K_1,K_4)$-plane separating the region of monostationarity and multistationarity along the $K_4$-branch. 
The $K_1$-branch of multistationarity is defined analogously.

Specifically, we define the following functions in $s$, $K_1$ and $\eta'=(K_2,K_3,\k_3,\k_6,\k_9,\k_{12})\in \R^6_{>0}$: 
{\small
\begin{align*}
\alpha_{1}(s,\eta') &=- \Big( K_{2} ( K_{2}+K_{3} )\k_{3}\k_{9}\k_{12}\, s
+K_{3}\k_{12} ( 2\,K_{2}a(\overline{\k})+(K_2+K_{3})\k_{3}\k_{12} ) \\ & \quad +\sqrt{K_{2}K_{3}\k_{3}
 \k_{12}a(\overline{\k})} ( 2K_{2}\k_{9}\, s+K_{2}\k_{12}+3\,K_{3}\k_{12} )  \Big) K_{2}\k_{3}^{2}s^{3}, 
\\
\beta_{1}(s,\eta') &= \k_6\Big( -K_{2}^{2}\k_{3}^{2}\k_{9}^{2}\, s^{4}+K_{2}\k_{3}^{2}\k_{9}\k_{12} ( 3K_{2}-K_{3} )\, s^{3}+2K_{2}K_{3}\k_{3}\k_{12} ( 4\k_{3}\k_{12}-\k_{9}\k_{6} )\,  s^{2}
\\ & \quad -K_{3}\k_{3}\k_{6}\k_{12}^{2} ( K_{2}-3K_{3} ) \, s -K_{3}^{2}\k_{6}^{2}\k_{12}^{2} \\ & \quad +2\,\sqrt{K_{2}K_{3}\k_{3}\k_{12}a(\overline{\k})}s  \big( K_{2}\k_{3}\k_{9}\, s^{2}+ 2( K_{2}+K_{3})\k_{3}\k_{12}\, s+K_{3}\k_{6}\k_{12} \big)  \Big),
\end{align*}}
 and
 {\small 
\begin{align*}
\alpha_{4}(s,K_1,\eta') &= K_3\k_{12}   \Big( -2\,\sqrt{K_2K_3\k_3\k_{12} a(\overline{\k})}( K_{{2}}\k_{{3}}s+K_1\k_6) s  + 
 \\ & \qquad 
K_2\k_3 ( K_1 \k_6\k_9-(K_2+K_3)\k_3\k_{12}) s^2 - 2 K_1K_2\k_3\k_6\k_{12} s - K_1K_3 \k_6^2 \k_{12}\Big),      \\
\beta_{4}(s,\eta') &= K_2 \k_3\k_9 s^2 \Big(2\sqrt {K_{2}K_{3}\k_{3} \k_{12} a(\overline{\k}) } \, s +\, ( K_{2}\k_{3}\k_{9}\, s^{2}+2K_{3}\k_{3}\k_{12}\, s -K_{3}\k_{6}\k_{12} ) \Big).
\end{align*}}%
We let now 
\begin{align}
\psi (s,\eta')= \frac{\alpha_{1}(s,\eta')}{\beta_{1}(s,\eta') },\qquad \phi(s,K_1,\eta')= \frac{\alpha_{4}(s,K_1,\eta')}{\beta_{4}(s,\eta') }, \label{eq:paramphi}
\end{align}
and let $\xi(\eta')$ be the first positive root of the polynomial $\beta_{1}(s,\eta')$ with variable $s$ and $\eta'$ fixed.

\begin{theorem}\label{thm:parametric}
Let $\eta=(K_1,K_2,K_3,K_4,\k_3,\k_6,\k_9,\k_{12})\in \R^8_{>0}$  such that $\k_3\k_{12}-\k_6\k_{12}\geq 0$, and denote $\eta'=(K_2,K_3,\k_3,\k_6,\k_9,\k_{12})$. Recall the map $\sigma$ from Remark~\ref{rk:symmetry}.
Multistationarity is enabled if and only if $K_1,K_4$ are as in one of the following cases:
 \[ K_1 = \psi(s,\eta'),\quad\textrm{and}\quad K_4 > \phi(s, \, \psi(s,\eta')\, ,\eta'), \qquad\textrm{with }\quad s\in (0,\xi(\eta')).\]
or 
\[ K_4 =\psi(s,\sigma(\eta')),\quad\textrm{and}\quad K_1 >\phi(s, \, \psi(s,\sigma(\eta')\, ),\sigma(\eta')), \qquad\textrm{with }\quad s\in (0,\xi(\sigma(\eta')).\]
The first case describes the $K_4$-branch, and $s=x_1$, while the second case describes the $K_1$-branch, and $s=x_2$. 
Furthermore, for any $\eta'$, $\psi$ increases for $s$ in the considered interval and the image is $\R_{>0}$. 
\end{theorem}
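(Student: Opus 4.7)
The plan is to identify the boundary of the multistationarity region in the $(K_1,K_4)$-plane as the locus where $p_{\eta,H}$ attains the value $0$ at an interior critical point of $\R^2_{>0}$, and then to solve the defining system explicitly. Inspection of \eqref{eq:pH} shows that $p_{\eta,H}$ has only strictly positive contributions when restricted to the axes $x_1=0$ or $x_3=0$ and is coercive at infinity, so its minimum over $\R^2_{>0}$ is attained at an interior critical point. Combined with Proposition~\ref{prop:summary}(ii), this means that the boundary between the mono- and multistationarity regions consists of those $\eta$ for which there exists $(x_1^*,x_3^*)\in\R^2_{>0}$ satisfying
$p_{\eta,H}(x_1^*,x_3^*)=0$, $\partial_{x_1}p_{\eta,H}(x_1^*,x_3^*)=0$, $\partial_{x_3}p_{\eta,H}(x_1^*,x_3^*)=0$. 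These three equations in the four unknowns $(K_1,K_4,x_1,x_3)$ cut out a one-dimensional curve, which I would parametrize by $s:=x_1$.

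Two structural observations drive the elimination. First, $p_{\eta,H}$ is \emph{linear in $K_4$}, so $p_{\eta,H}=A\,K_4+B$ with $A,B$ polynomial in $(x_1,x_3,K_1)$ and the entries of $\eta'$. Second, $p_{\eta,H}$ is \emph{quadratic in $x_3$}, so $p_{\eta,H}=P_2 x_3^2+P_1 x_3+P_0$ with each $P_i$ linear in $K_4$. The conditions $p_{\eta,H}(s,x_3^*)=0$ and $\partial_{x_3}p_{\eta,H}(s,x_3^*)=0$ jointly say that $p_{\eta,H}(s,\cdot)$ has a positive double root at $x_3^*=-P_1/(2P_2)$, i.e.\ $P_1^2=4P_2P_0$. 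This discriminant relation is a quadratic in $K_4$; selecting the root for which $x_3^*>0$ and for which $A(x_1^*,x_3^*)<0$ (so that the critical zero is a local minimum and, by Lemma~\ref{lemma:branches}, the active branch is the $K_4$-branch) produces the radical $\sqrt{K_2K_3\k_3\k_{12}\,a(\overline{\k})}$ that appears throughout $\alpha_1,\beta_1,\alpha_4,\beta_4$. Substituting into the remaining equation $\partial_{x_1}p_{\eta,H}(s,x_3^*)=0$ and clearing denominators yields a relation that is linear in $K_1$; solving gives $K_1=\alpha_1(s,\eta')/\beta_1(s,\eta')=\psi(s,\eta')$, and back-substituting into the expression for $K_4$ gives $K_4=\alpha_4/\beta_4=\phi(s,K_1,\eta')$.

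Sufficiency of $K_4>\phi(s,\psi(s,\eta'),\eta')$ follows from the linearity of $p_{\eta,H}$ in $K_4$ combined with Lemma~\ref{lemma:branches}: on the $K_4$-branch $A(x_1^*,x_3^*)<0$, so increasing $K_4$ past the boundary value pushes $p_{\eta,H}(x_1^*,x_3^*)$ strictly below zero. Conversely, any multistationary $\eta$ with $K_4>K_1$ is connected by decreasing $K_4$ (with $K_1$ fixed) to a unique boundary point, which is captured by the parametrization. Monotonicity of $\psi(\cdot,\eta')$ and surjectivity of its image onto $\R_{>0}$ are checked by direct derivative and limit analysis of the explicit rational expression: as $s\to 0^+$, $\alpha_1$ vanishes to order $s^3$ while $\beta_1\to -K_3^2\k_6^3\k_{12}^2<0$, so $\psi\to 0^+$; as $s\to\xi(\eta')^-$, $\beta_1\to 0$ while $\alpha_1$ stays bounded away from zero, so $\psi\to+\infty$. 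The $K_1$-branch is obtained by applying the involution $\sigma$ of Remark~\ref{rk:symmetry}; together with Corollary~\ref{cor:apos_circuit}, which excludes multistationarity along and near the diagonal $K_1=K_4$, the two branches exhaust the multistationarity region.

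The hard part will be the explicit algebraic elimination underlying the second paragraph: the polynomials $A,B,P_0,P_1,P_2$ are sizeable, so matching the resulting rational expressions against the precise $\alpha_i,\beta_i$ in the statement, selecting the correct square-root branch, and verifying positivity of $x_3^*(s)$ throughout $s\in(0,\xi(\eta'))$ are substantial but mechanical computations, well-suited to computer algebra, consistent with the symbolic workflow the paper uses elsewhere.
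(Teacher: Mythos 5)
Your overall strategy coincides with the paper's: exploit the linearity of $p_{\eta,H}$ in $K_4$, describe the boundary of the $K_4$-branch as the locus where the threshold value of $K_4$ is critical with respect to $(x_1,x_3)$, eliminate $x_3$ first to obtain $\phi$, and use the fact that the remaining condition is linear in $K_1$ to obtain $\psi$, parametrized by $s=x_1$. (Your system $p_{\eta,H}=\partial_{x_1}p_{\eta,H}=\partial_{x_3}p_{\eta,H}=0$ is equivalent to the paper's minimization of $-c_0/c_1$: on the zero locus, the critical points of the implicit function $K_4(x_1,x_3)$ are exactly the points where the $(x_1,x_3)$-gradient of $p_{\eta,H}$ vanishes, since $\partial_{K_4}p_{\eta,H}=c_1\neq 0$ there.) There are, however, two genuine gaps.

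First, your opening premise --- that the minimum of $p_{\eta,H}$ over $\R^2_{>0}$ is attained at an interior critical point --- fails when $a(\overline{\k})=0$, a case included in the hypothesis: there $p_{\eta,H}=\k_6K_1q_\eta(x_1)x_3$ is linear in $x_3$, has no interior critical point, and the relevant infimum is only approached as $x_3\to+\infty$. The paper handles this by setting $x_{3,\mathrm{min}}=+\infty$ and working with infima; your argument needs the same patch, and the boundary in that case is governed by $q_\eta=q_\eta'=0$ rather than by your three-equation system.

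Second, and more seriously, you defer to ``mechanical computation'' precisely the part of the argument that is not mechanical: the monotonicity of $\psi$ on $(0,\xi(\eta'))$ (which is part of the statement) and, what the necessity direction really needs, the fact that the critical point(s) of $\phi(\cdot,K_1,\eta')$ in the admissible interval $(0,\gamma)$ where $c_1<0$ yield the \emph{global} minimum of the threshold. Your limit analysis of $\alpha_1/\beta_1$ at $s=0$ and $s=\xi(\eta')$ gives surjectivity of $\psi$ by the intermediate value theorem, but not injectivity; and a priori $\phi'$ could have several zeros in $(0,\gamma)$, or $\phi$ could approach its infimum at an endpoint of that interval. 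The paper resolves this with a Descartes'-rule analysis of the quartic numerator of $\phi'$ split according to the sign of $a_1$ (equivalently $K_1\gtrless K_{1,\mathrm{bound}}$), by showing that $\beta_1$ has exactly one positive root on each side of $\gamma$ via the symbolic certificate $\beta_1(\gamma,\eta')>0$, and by the auxiliary emptiness check that $c_0>0$ whenever $c_1<0$ (which is what guarantees the threshold $-c_0/c_1$ is positive at all). These are targeted sign and root-counting arguments, not routine elimination; without them the parametrized curve is only known to contain critical points of the threshold, not to be exactly its graph.
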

\begin{proof}
We consider $\eta'$ fixed and study the $K_4$-branch. The proof relies on several symbolic computations that can be found in the accompanying supplementary file \emph{SupplInfo.mw}. 
 Recall from the proof of Lemma~\ref{lemma:branches} that $p_{\eta,H}(x_1,x_3)$ is linear in $K_4$. If written as $c_1 K_4 + c_0$ we have 
\begin{align*}
c_1 &=K_2\k_3\k_9 x_1^2 \Big(K_2\k_3 a(\overline{\k})x_1^{2} x_3^{2}+K_1\k_6 \big( K_2\k_3\k_9x_1^{2}+2\,K_3\k_3\k_{12}x_1-K_3\k_6\k_{12}\big) x_3+K_1^{2}K_3\k_6^{2}\k_{12}\Big) \\
c_0 &= K_3\k_{12} \Big( K_{2}\k_{3}a(\overline{\k}) ( K_{2}\k_{3}x_{1}+K_{1}\k_{6} )x_{1}^{2}x_{3}^{2}-K_{1}\k_{6} \big( 
K_{2}\k_3( K_{1}\k_{6}\k_{9} -(K_{2}+K_{3})\k_{3}\k_{12} ) x_{1}^{2} \\ & 
- K_1 \k_{6}\k_{12}(2K_{2}\k_{3}x_{1}+ K_{3}\k_{6}) \big) x_{3}+K_{1}^{2}K_{3}\k_{6}^{2}\k_{12} (K_{2}\k_{3} x_{1}+K_{1}\k_{6} ) 
 \Big).
\end{align*}
In order to understand the $K_4$-branch, we consider the case $c_1<0$ (see the proof of Lemma~\ref{lemma:branches}). 
For fixed $x_1,x_3,K_1$, this implies that the coefficient of $x_3$ in $c_1$  is negative, which in turn implies that $x_1$ is smaller than the positive root of $K_2\k_3\k_9x_1^{2}+2K_3\k_3\k_{12}x_1-K_3\k_6\k_{12}$, namely, smaller than
\[ x_{1,{\rm bound}} := \tfrac{-K_3\k_3\k_{12}+\sqrt {K_{3}\k_{{3}}\k_{{12}} ( K_{2}\k_6\k_9+K_{3}\k_3\k_{12}) }}{K_{{2}}\k_{{3}}\k_{{9}}}.\]
Under the assumption $x_1<x_{1,{\rm bound}}$, and $a(\overline{\k})\geq 0$, using the function {\tt IsEmpty} in {\tt Maple 2019}, we find that  $c_0>0$. Hence for $\eta$ in the $K_4$-branch, if $p_{\eta,H}(x_1,x_3)<0$, then necessarily $c_1<0$ and $c_0>0$.
Furthermore, in this case $p_{\eta,H}(x_1,x_3)=0$ holds if and only if $K_4=\tfrac{-c_0}{c_1}>0$, and $p_{\eta,H}(x_1,x_3)<0$ holds if $K_4>\tfrac{-c_0}{c_1}$.
It follows that the boundary of the $K_4$-branch is determined by minimizing $\tfrac{-c_0}{c_1}$ with respect to  $x_1,x_3>0$ subject to $c_1<0$.  For $a(\overline{\k})>0$, we find the minimum value of  $\tfrac{-c_0}{c_1}$, and for $a(\overline{\k})=0$, we find its infimum value.

For a fixed $x_1>0$, we consider first $\tfrac{-c_0}{c_1}$ as a function of $x_3$ in the region where $c_1<0$. 
When $a(\overline{\k})>0$, the derivative has a unique positive zero at
\[ x_{3,{\rm min}} := \tfrac{K_1\k_6 \sqrt{K_2K_3\k_3\k_{12} a(\overline{\k})}}{K_2\k_3a(\overline{\k})x_1},\]
which defines a minimum. 
We evaluate $\tfrac{-c_0}{c_1}$ at $x_{3,{\rm min}}$, which now becomes the function $\phi(x_1,K_1,\eta')$ in \eqref{eq:paramphi}.
When $a(\overline{\k})=0$, $\tfrac{-c_0}{c_1}$ is strictly decreasing, and hence the infimum value it attains is the limit as $x_3$ goes to $+\infty$, which is $\phi(x_1,K_1,\eta')$ again. It makes sense then to set $x_{3,{\rm min}}=+\infty$ in this case.  
Hence $\phi(x_1,K_1,\eta')$ gives, for fixed $\eta'$, $K_1$, and $x_1$ such that $c_1<0$, the minimal/infimum value of $\tfrac{-c_0}{c_1}$ seen as a function of $x_3$. 

We notice that the denominator of  $\phi(x_1,K_1,\eta')$ (which is a multiple of $c_1(x_1,x_{3,{\rm min}})$ when $a(\overline{\k})>0$), is a polynomial in $x_1$ of the form  $x_1^2$ times a quadratic polynomial. The latter has positive leading term and negative independent term. Hence it has a unique positive root $\gamma$ (which we can compute), and this denominator is negative  if and only if $x_1\in (0,\gamma)$.  When $a(\overline{\k})=0$, we have  $\gamma=x_{1,{\rm bound}}$.

In particular  $\phi$ is continuous and differentiable in $(0,\gamma)$. The function $\phi$ is a rational function in $x_1$ of the following form:
\begin{equation*}
\phi(x_1,K_1,\eta')= \tfrac{a_1 x_1^2 - a_2 x_1 - a_3}{x_1^2 ( b_1x_1^2 + b_2 x_1 - b_3)},
\end{equation*}
where $a_2,a_3,b_1,b_2,b_3$ depend on $\eta',K_1$ and are positive under the current hypotheses, and $a_1$, which also depends on $K_1,\eta'$ is
\[ 
a_1 := - K_2K_{{3}}\k_3\k_{{12}} \Big( -K_1\k_6\k_9 + (K_{{2}}+K_{{3}})\k_{{3}}\k_{{12}})
+ 2\sqrt {K_2 K_3 \k_3\k_{12}a(\overline{\k})}\, \Big).
\]
In order to minimize $\phi$ in $(0,\gamma)$, we find the derivative of $\phi$ with respect to $x_1$:
\[
\phi'(x_1,K_1,\eta')= \tfrac{-2a_1b_1x_1^4 + (3a_2b_1-a_1b_2) x_1^3 + (2a_2b_2 + 4a_3b_1)x_1^2 + (3a_3b_2 - a_2b_3) x_1 - 2a_3b_3}{x_1^3 ( b_1x_1^2 + b_2 x_1 - b_3)^2}.
\]
The extreme values of $\phi'$ are determined by the zeroes of its numerator. 
This numerator is a polynomial $u(x_1)$ in $x_1$ with negative independent term and positive degree $2$ term. 
If $a_1\leq 0$, then the leading and degree $3$ coefficients of $u(x_1)$ are nonnegative. By Descartes' rule of signs, it follows that $\phi'=0$ has exactly one positive root, which, in case it belongs to $(0,\gamma)$, gives  rise to a minimum of $\phi$, as the independent term of the numerator of $\phi'$ is 
negative. 

If $a_1>0$,  then the leading term of $u(x_1)$ is negative, and by the Descartes' rule of signs,  $\phi'=0$ at most two positive roots, in which case the first positive root will be a minimum of $\phi$ if it belongs to $(0,\gamma)$ as above. 
Note that $a_1>0$ if and only if 
\begin{equation*} 
K_1 > K_{1,{\rm bound}},\qquad \textrm{where}\quad  K_{1,{\rm bound}}:= \tfrac{(K_2+K_3)\k_3\k_{12} + 2 \sqrt{K_2 K_3 \k_3\k_{12}a(\overline{\k}) }  }{\k_6\k_9}.
\end{equation*}

\begin{figure}[t!]
\begin{tikzpicture}[scale=0.8]
\draw[-] (-1,0) -- (7,0);
\draw[-] (0,-1) -- (0,6);
\draw[-,dashed,gray] (2,-1) -- (2,6);
\draw[-,dashed,gray] (4,-1) -- (4,6);
\node (a) at (1.6,-0.3) {{\scriptsize $\xi(\eta')$}};
\node (a) at (3,-0.3) {{\scriptsize $\gamma$}};
\draw[fill] (3,0) circle (2pt);
\draw[fill] (2,0) circle (2pt);
\draw[fill] (4,0) circle (2pt);
\node (a) at (4.5,-0.3) {{\scriptsize $\overline{\xi}(\eta')$}};
\node (a) at (7.4,0) {{\small $s$}};
\node (a) at (-0.3,6) {{\small $\psi$}};
\node (a) at (-0.8,2.2) {{\scriptsize $K_{1,{\rm bound}}$}};
\draw[-,dashed,red] (0,2.2) -- (7,2.2);
\draw[-,blue, line width=0.8pt] (0,0) .. controls (1,0.3) and (2,0.5) .. (1.9,6);
\draw[-,blue, line width=0.8pt] (4.1,6) .. controls (4,3) and (4.5,2.5) .. (7,2.3);
\end{tikzpicture}
\caption{{\small Cartoon depiction of the function $\psi(s,\eta')$ for a fixed $\eta'$, with $\xi(\eta'), \overline{\xi}(\eta')$, $\gamma$ and $K_{1,{\rm bound}}$ as given in the proof of Theorem~\ref{thm:parametric}.}} \label{fig:psi}
\end{figure}
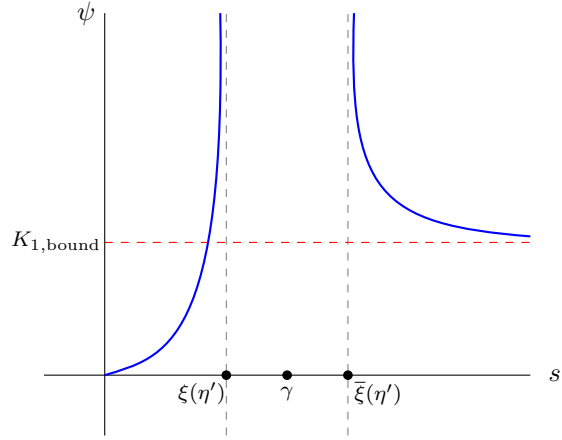

The next step is thus to confirm that the only positive root in the case $a_1\leq 0$ is smaller than $\gamma$, and that there is such a (simple) positive root in the case $a_1>0$.
To this end, we observe that the numerator of $\phi'$ is linear in $K_1$. By solving the numerator for $K_1$, we obtain that any extreme value satisfies
\[ K_1 = \psi(x_1,\eta'),\]
with $\psi$ as in \eqref{eq:paramphi}.
The denominator $\beta_1(x_1,\eta')$  has degree $4$ in $x_1$, negative leading and independent terms, and the coefficient of $x_1^2$ is positive. By Descartes' rule of signs, $\beta_1(x_1,\eta')$ has at most two positive roots. 
Using the function {\tt IsEmpty} in {\tt Maple 2019}, we find that $\beta_1(\gamma,\eta')>0$. This implies that $\beta_{1}(x_1,\eta')$ has exactly one simple positive root $\xi(\eta')$ in the interval $(0,\gamma)$ and one simple positive root $\overline{\xi}(\eta')$ in $(\gamma,+\infty)$. 
The numerator $\alpha_1(x_1,\eta')$ of $\psi$ has degree $4$ in $x_1$, is negative for  $x_1>0$, and vanishes at $x_1=0$. 
Hence,  $\psi(x_1,\eta')$ is positive in the intervals $(0,\xi(\eta'))$ and $(\overline{\xi}(\eta'),+\infty)$. 
It tends to infinity when $x_1$ tends to $\xi(\eta')$ from the left and also   to $\overline{\xi}(\eta')$ from the right. Furthermore, $\psi$ vanishes at $x_1=0$ and tends to $K_{1,{\rm bound}}$ when $x_1$ tends to infinity. In particular, the image of $\psi$ over the interval $(0,\xi(\eta'))$ is $\R_{>0}$, and the image over the interval  $(\overline{\xi}(\eta'),+\infty)$ is $(K_{1,{\rm bound}},+\infty)$. See Figure \ref{fig:psi}.  The image  of  $(\xi(\eta'),\overline{\xi}(\eta'))$ by $\psi$ belongs to $\R_{<0}$. 

The anti-images of a given $K_1$ by $\psi$ are the zeroes of $\phi'=0$. By comparing the image of $\psi$ to the discussion on the sign of $a_1$ and the positive roots of $\phi'$ above, we conclude that  $\psi$ is strictly increasing in  $(0,\xi(\eta'))$, and each  $x_1$ in this interval such that  $K_1=\psi(x_1,\eta')$ is a simple root of $\phi'=0$. 
In particular, $\phi$ attains its minimum at the anti-image of $K_1$ by $\psi$ in the  interval $(0,\gamma)$. 

To summarize, we have shown that given $K_1>0$, and $\overline{x}_1\in (0,\xi(\eta'))$ such that $K_1=\psi(\overline{x}_1,\eta')$, 
$K_4$ gives rise to a parameter point enabling multistationarity in the $K_4$-branch if and only if $K_4$ is larger than $\tfrac{-c_0}{c_1}$ evaluated at $x_{3,{\rm min}}$  and $\overline{x}_1$, where we already know that $c_1<0$ as $\xi(\eta')<\gamma$. 
This gives that $\eta$ enables multistationarity in the $K_4$-branch if and only if there exists $x_1\in (0,\xi(\eta'))$ such that $K_1=\psi(x_1,\eta')$ and $K_4>\phi (x_1,\psi(x_1,\eta'),\eta')$. 
This concludes the proof of (i); (ii) follows by symmetry using Remark~\ref{rk:symmetry}.
\end{proof}

Figure~\ref{fig:multi:greenlines} shows the $K_1$-branch of the multistationarity region given in Theorem~\ref{thm:parametric} when $(K_2,K_3,\k_3,\k_6,\k_9,\k_{12})= (1,1,2,1,1,1)$. 
 
Using implicitacion via for example Gr\"obner bases, one could theoretically determine an implicit equation for the curve 
$(\psi(s,\eta'), \phi(s,\psi(s,\eta'),\eta'))$ in the $(K_1,K_4)$-plane for a fixed $\eta'$. 
Such a computation has not been possible for arbitrary $\eta'$ due to the computational cost. For $\eta'$ fixed, as in Figure~\ref{fig:multi:greenlines}, we obtain a polynomial in $K_1,K_4$ whose zero set includes the dotted blue curve   in Figure~\ref{fig:multi:greenlines} given by the parametrization, as well as additional components. 

\begin{remark}
Theorem ~\ref{thm:parametric} provide a means to verify whether a given $\eta$ enables multistationarity: First, decide  whether Theorem~\ref{prop:apos_circuit} is informative. If not, and $K_4>K_1$, then determine $s\in (0,\xi(\eta'))$ such that $K_1 =\psi(s,\eta')$ for $s\in (0,\xi(\eta'))$, and decide whether $K_4 >\phi(s,\psi(s,\eta'),\eta')$. If $K_1>K_4$, use the expressions for the $K_1$-branch. 

For example, let $\eta= (3,1,1,700,2,1,1,1)$.  Inequality \eqref{Inequality:SufficientInequality3} in Theorem~\ref{prop:apos_circuit} does not hold. As $K_4>K_1$, we consider the $K_4$-branch. 
We solve 
$3= \psi(s,\eta')$ for $s\in (0,\xi(\eta'))$ and obtain $s\approx 0.174$, which gives $\phi(s,\psi(s,\eta'),\eta') \approx 818.17$. 
As $700<818.17$, the given parameter point does not enable multistationarity. 
It follows as well that the parameter point $ (3,1,1,900,2,1,1,1)$ enables multistationarity. 
\end{remark}

\section{\bf Connectivity}\label{sec:connected}

In this section we show that the open set $X\subseteq \R_{> 0}^8$  of parameter points that enable multistationarity is  connected.
As any $\eta\in \R^8_{>0}$ either enables or precludes multistationarity, the set $\R^8_{>0}\setminus X$ consists of the parameter points that preclude multistationarity.  

We consider $X$ as a topological subspace of $\R^8_{>0}$ with the Euclidean topology. We start by highlighting in the next lemma a path connected subset of $X$. Let $Y\subseteq \R_{> 0}^8$ consist of the parameter points $\eta$ such that $a(\eta)<0$.

\begin{lemma}\label{lem:connect}
The following subsets of $\R^4$ are path connected:
\[ A_{<0}=\{ \overline{\k}=(\k_3,\k_6,\k_9,\k_{12})\in \R^4_{>0} \st a(\overline{\k})<0\},\quad  A_{\geq 0}=\{ \overline{\k}=(\k_3,\k_6,\k_9,\k_{12})\in \R^4_{>0} \st a(\overline{\k})\geq 0\}. \]
Additionally, 
$Y$ is path connected.
\end{lemma}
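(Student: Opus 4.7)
The plan is to linearize the defining inequality via the componentwise logarithm. The map $\log\colon \R^4_{>0}\to \R^4$, $(\kappa_3,\kappa_6,\kappa_9,\kappa_{12})\mapsto (\log\kappa_3,\log\kappa_6,\log\kappa_9,\log\kappa_{12})$, is a homeomorphism. Under this homeomorphism, the condition $a(\overline{\kappa})<0$, that is $\kappa_3\kappa_{12}<\kappa_6\kappa_9$, becomes the linear inequality $\log\kappa_3+\log\kappa_{12}<\log\kappa_6+\log\kappa_9$, which defines an open half-space in $\R^4$; likewise $a(\overline{\kappa})\geq 0$ transforms into the closed half-space obtained by reversing the inequality. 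Half-spaces are convex, hence path connected via line segments, and path connectivity is preserved under homeomorphisms. This yields the path connectivity of both $A_{<0}$ and $A_{\geq 0}$.

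Concretely, given two points $\overline{\kappa}^{(0)},\overline{\kappa}^{(1)}\in A_{<0}$, the corresponding path in $A_{<0}$ is the geometric interpolation
\[
\overline{\kappa}(t) = \left((\kappa_3^{(0)})^{1-t}(\kappa_3^{(1)})^t,\; (\kappa_6^{(0)})^{1-t}(\kappa_6^{(1)})^t,\; (\kappa_9^{(0)})^{1-t}(\kappa_9^{(1)})^t,\; (\kappa_{12}^{(0)})^{1-t}(\kappa_{12}^{(1)})^t\right),\quad t\in[0,1].
\]
Setting $A=\kappa_3^{(0)}\kappa_{12}^{(0)}$, $B=\kappa_6^{(0)}\kappa_9^{(0)}$, $C=\kappa_3^{(1)}\kappa_{12}^{(1)}$, $D=\kappa_6^{(1)}\kappa_9^{(1)}$, the assumptions $A<B$ and $C<D$ give $(A/B)^{1-t}(C/D)^t<1$ for every $t\in[0,1]$, i.e. $a(\overline{\kappa}(t))<0$. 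The same interpolation, with the opposite inequality, proves path connectivity of $A_{\geq 0}$.

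Finally, the polynomial $a(\eta)$ depends only on $\overline{\kappa}=(\kappa_3,\kappa_6,\kappa_9,\kappa_{12})$, so
\[
Y \;=\; \R^4_{>0}\times A_{<0},
\]
where the first factor carries the coordinates $(K_1,K_2,K_3,K_4)$. Since $\R^4_{>0}$ is convex and $A_{<0}$ is path connected by the above, the product $Y$ is path connected: connect the $K$-coordinates by a line segment and the $\overline{\kappa}$-coordinates by the geometric path, then concatenate. The main (and essentially only) observation is that passing to logarithmic coordinates converts the binomial inequality into a linear one; after that, everything reduces to convex-combination arguments.
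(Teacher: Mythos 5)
Your proof is correct, and it takes a genuinely different route from the paper's. The paper considers the continuous map $h\colon \R^4_{>0}\to\R^2_{>0}$, $(\k_3,\k_6,\k_9,\k_{12})\mapsto(\k_3\k_{12},\k_6\k_9)$, notes that its fibers are path connected, and identifies $A_{<0}$ and $A_{\geq 0}$ as preimages of the path connected sets $\{x_1<x_2\}$ and $\{x_1\geq x_2\}$; path connectivity is then inherited by lifting paths through the fibers. You instead apply the componentwise logarithm, which is a homeomorphism of $\R^4_{>0}$ onto $\R^4$ carrying $A_{<0}$ and $A_{\geq 0}$ onto an open and a closed half-space respectively, so convexity does all the work. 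Your approach buys two things: it produces completely explicit paths (the geometric interpolation, which you verify directly via $(A/B)^{1-t}(C/D)^t<1$, resp.\ $\geq 1$), and it avoids the slightly delicate general principle implicit in the paper's argument that a preimage of a path connected set under a map with path connected fibers is path connected --- a statement that needs some justification in general (here it is harmless because $h$ admits continuous sections), whereas "a convex set is path connected" needs none. The treatment of $Y$ as the product $\R^4_{>0}\times A_{<0}$ coincides with the paper's. Both arguments are short and correct; yours is the more self-contained.
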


\begin{proof}
Consider the continuous map $h\colon \R^4_{>0}\rightarrow \R^2_{>0}$ sending $(\k_3,\k_6,\k_9,\k_{12})$
to $(\k_3\k_{12},\k_6\k_9)$. The fibers of this map are path connected.
As $A_{<0}$ and $A_{\geq 0}$ are respectively the preimages by $h$ of the path connected subsets $\{ x\in \R^2_{>0} \st x_1<x_2\}$ and $\{ x\in \R^2_{>0} \st x_1\geq x_2\}$ of $\R^2_{>0}$, they are also path connected.
$Y$ is also path connected as it is homeomorphic to $\R^4_{>0} \times A_{<0}$.
\end{proof}

By Proposition~\ref{prop:summary}, multistationarity is enabled whenever $a(\eta)<0$. Therefore, $Y_{<0}$  is a subset of $X$. To show that $X$ is path connected it is enough to show that there exists a path from any point  in $X$  to a point in $Y_{<0}.$

\begin{theorem}	\label{thm:connected}
$X$ and $\R^8_{>0}\setminus X$  are path connected.
\end{theorem}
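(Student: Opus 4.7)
The plan is to establish each path-connectivity claim separately by constructing an explicit path from an arbitrary point of the set to a path-connected core subset.

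For $\R^8_{>0}\setminus X$, the core subset will be the diagonal $Z:=\{\eta\in\R^8_{>0}:K_1=K_4,\ a(\overline{\k})\geq 0\}$: it lies in $\R^8_{>0}\setminus X$ by Corollary~\ref{cor:apos_circuit}(iii), and it is path connected because it is homeomorphic to $\R_{>0}\times\R^2_{>0}\times A_{\geq 0}$ and $A_{\geq 0}$ is path connected by Lemma~\ref{lem:connect}. Every $\eta\in \R^8_{>0}\setminus X$ must satisfy $a(\eta)\geq 0$ by Proposition~\ref{prop:summary}(iv), so I propose the straight-line path that moves the larger of $K_1$ and $K_4$ down to the smaller, ending in $Z$. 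To show this path stays in $\R^8_{>0}\setminus X$ (say when $K_1<K_4$), I would argue by contradiction: if some intermediate $\eta(t^*)$ with $K_1<K_4(t^*)<K_4$ were in $X$, then since $a(\eta(t^*))=a(\eta)\geq 0$, Lemma~\ref{lemma:branches} applied to $\eta(t^*)$ would place the original $\eta$ (obtained by raising $K_4(t^*)$ up to $K_4$ while keeping $K_1$ fixed) in $X$, a contradiction. The case $K_1>K_4$ is symmetric, so every point of $\R^8_{>0}\setminus X$ path-connects to $Z$.

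For $X$, the core subset will be $Y=\{\eta:a(\eta)<0\}\subseteq X$, which is path connected by Lemma~\ref{lem:connect} and Proposition~\ref{prop:summary}(iv). Given $\eta\in X$ with $a(\eta)\geq 0$, Corollary~\ref{cor:apos_circuit}(iii) rules out $K_1=K_4$; after applying the involution $\sigma$ of Remark~\ref{rk:symmetry} if necessary, I may assume $K_4>K_1$, so that $\eta$ lies on the $K_4$-branch of Theorem~\ref{thm:parametric}. First I will fix a continuous path $\eta'(t)=(K_2(t),K_3(t),\overline{\k}(t))$, $t\in[0,1]$, in $\R^2_{>0}\times A_{\geq 0}$ joining $\eta'(0)=(K_2,K_3,\overline{\k})$ to some $\eta'(1)$ with $a(\overline{\k}(1))=0$; such a path exists by Lemma~\ref{lem:connect}. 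Theorem~\ref{thm:parametric} guarantees that $\psi(\cdot,\eta'(t))$ is a continuous monotone bijection from $(0,\xi(\eta'(t)))$ onto $\R_{>0}$, so there is a unique continuous function $s(t)\in(0,\xi(\eta'(t)))$ with $\psi(s(t),\eta'(t))=K_1$, and compactness of $[0,1]$ will give $M:=\max_{t\in[0,1]}\phi(s(t),K_1,\eta'(t))<\infty$. The path in $X$ is then assembled in two stages: first raise $K_4$ to $\max(K_4,M+1)$ with everything else fixed (staying in $X$ by Lemma~\ref{lemma:branches}), and then deform $\eta'$ along $\eta'(t)$ while holding $K_1$ and $K_4=\max(K_4,M+1)$ fixed. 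Each intermediate point will satisfy $K_1=\psi(s(t),\eta'(t))$ and $K_4>M\geq \phi(s(t),K_1,\eta'(t))$, placing it in $X$ by Theorem~\ref{thm:parametric}. The endpoint has $a=0$ and lies in $X$, so the openness of $X$ allows an infinitesimal decrease of, say, $\k_3$ that drops $a$ below zero while remaining in $X$, landing in $Y$.

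The hard part will be the deformation in the second stage for $X$. Naively moving $\overline{\k}$ toward $A_{<0}$ could destroy the negativity of $p_{\eta,H}$ at any preselected witness point, so a single fixed witness for multistationarity cannot be tracked along the path. Theorem~\ref{thm:parametric} circumvents this by characterizing the $K_4$-branch through the scalar threshold $\phi(s,K_1,\eta')$; compactness of the auxiliary path bounds this threshold uniformly by $M$, and Lemma~\ref{lemma:branches} then permits raising $K_4$ above $M$ before the deformation of $\eta'$, so that the $K_4$-branch condition is preserved throughout.
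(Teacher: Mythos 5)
Your treatment of $\R^8_{>0}\setminus X$ is essentially identical to the paper's: both reduce to the diagonal $K_1=K_4$, $a(\overline{\k})\geq 0$ (path connected via Lemma~\ref{lem:connect}, contained in the complement by the final clause of Corollary~\ref{cor:apos_circuit} --- note that this is the ``in particular'' statement, not item~(iii)) and both use the contrapositive of Lemma~\ref{lemma:branches} to keep the straight-line path in the complement. For $X$, however, you take a genuinely different route. The paper never invokes Theorem~\ref{thm:parametric} here: it fixes a witness $(z_1,z_3)$ with $p_{\eta,H}(z_1,z_3)<0$, regards $p_{\eta,H}$ as a polynomial in $(x_1,x_3,\k_{12})$, and uses the outer normal cone at an edge of the Newton polytope joining the two negative vertices to produce the explicit family $\eta(s)$ with $\k_{12}\mapsto \k_{12}'s^{-2}$ and rescaled witness $(z_1s^{-1},z_3s)$; a Descartes count on the resulting cubic in $s$ certifies that the whole ray $s\geq 1$ stays in $X$, and $a(\eta(s))$ eventually becomes negative. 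This is self-contained, fully explicit, and addresses head-on the obstruction you correctly identify (a fixed witness cannot be tracked). Your alternative instead leans on the full strength of Theorem~\ref{thm:parametric}: push $K_4$ up by Lemma~\ref{lemma:branches}, then deform $\eta'$ to the locus $a=0$ while the branch condition $K_4>\phi$ is maintained uniformly by compactness, and finally use openness of $X$ to cross into $a<0$. The logic is sound and every ingredient you cite does what you claim, so this constitutes a valid proof; what it buys is the avoidance of any new polynomial computation, at the cost of resting on the heaviest result in the paper. The one place you should add detail is the continuity of $t\mapsto s(t)$ and of the auxiliary quantities: you need that $\xi(\eta')$ varies continuously (it is a simple root of $\beta_1$, so this holds), that $\psi$ and $\phi$ are jointly continuous along the path (their denominators are nonzero on $(0,\xi(\eta'))$), and then a short monotonicity argument ruling out $s(t_n)\to 0$ or $s(t_n)\to\xi(\eta'(t))$ (where $\psi$ tends to $0$ and $+\infty$ respectively, contradicting $\psi(s(t_n),\eta'(t_n))=K_1$). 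As written this is asserted rather than proved, but it is a fillable gap, not a flaw in the approach.
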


 \begin{proof}
 We start by showing that $X$ is path connected. 
	Let $\eta =(K_1,K_2,K_3,K_4,\k_3,\k_{6},\k_{9},\k_{12}')\in X$ such that $a(\eta)\geq 0$. 
By Lemma~\ref{lem:connect}, it is enough to show that there exists a path in $X$ that connects $\eta$ to a point $\eta^*\in Y_{<0}$. 
	As $\eta\in X$ and $a(\eta)\geq 0$, we can choose $z_1,z_3 >0$ such that $p_{\eta,H}(z_1,z_3)<0$ (c.f. Proposition~\ref{prop:summary}).  
We let $\eta'=(K_1,K_2,K_3,K_4,\k_3,\k_{6},\k_{9})$ and let $\overline{p}_{\eta',H}(x_1,x_3,\k_{12})$ denote $p_{\eta,H}$ seen as a polynomial in $x_1,x_3,\k_{12}$. 
The vertices of the Newton polytope of $\overline{p}_{\eta',H}$ are (c.f. Figure~\ref{fig:connected}):
$\ \big\{  (0, 1, 2 ),	(2, 2, 1 ), 	(2, 2, 2 ),	(1, 0, 2 ),	(2, 0, 1 ),	(0, 0, 2 ), 	(3, 2, 2 ),$ $	(4, 2, 1 ),
	(4, 1, 0 ),
	(4, 2, 0 )
	\big\}.
$
The coefficients of the vertices $(2, 2, 1)$ and $(4, 2, 0)$ are negative. These two vertices lie on the one dimensional face $F$ given by the intersection of the supporting hyperplanes $x_3-2=0$ and $-x_1-2 \k_{12}+4=0$. Therefore, the outer normal cone at $F$ is generated by the vectors $v_1:=(0,1,0)$ and $v_2:=(-1,0,-2).$ Following Remark~\ref{rk:outer_normal}, we consider  $w:=v_1+v_2=(-1,1,-2)$ 
and evaluate $\overline{p}_{\eta',H}$ at $(z_1 s^{-1},z_3 s, \k'_{12} s^{-2})$. The denominator is positive  and the numerator is
{\small \begin{align*}
\begin{split}
q(s):=& - K_{2}\k_{3} \k_{6} \k_{9}  z_{1}^{2} z_{3}^{2}(K_{2} K_{4} \k_{3} \k_{9} z_{1}^{2}  + K_{1}  K_{3} \k_{6} \k_{12}' ) s^3 +
\k_{6} z_{3} ( K_{2}^2  \k_{3}^2  \k_{9} (K_{1}  K_{4}   \k_{9} z_{1}^4 
-   K_{3}  \k'_{12} z_{1}^3 z_{3} )\\ & - K_{1} K_{2} K_{3} \k_{3} \k_{6} \k_{9} \k'_{12}(K_{1}+ K_{4})  z_{1}^2 
+ K_{1}^2 K_{3}^2 \k_{6}^2 \k_{12}^{\prime 2} ) s^2
 + \big(  K_{2} K_{4}\k_{3} \k_{9} \k'_{12}  z_{1}^2 (K_{2} \k_{3}^2  z_{1}^2 z_{3}^2\\ & + 2 K_{1}  K_{3}\k_{3} \k_{6}  z_{1} z_{3} + K_{1}^2 K_{3}  \k_{6}^2 ) 
+ K_{1}  K_{3}  \k_6\k_{12}^{\prime 2} (K_{2} \k_{3}^2   z_{1}^2 z_{3}^2 + 2 K_{1} K_{2} \k_{3} \k_{6}  z_{1} z_{3} + K_{1}^2 K_{3} \k_{6}^2)\big) s\\ &
+ K_2K_3\k_3 \k_{12}^{\prime 2} z_1 \Big(K_{2} \k_{3}^2   z_{1}^2 z_{3}^2+ K_{1} \k_{3} \k_{6} ( K_{2} +K_3) z_{1} z_{3} 
 + K_{1}^2  K_{3} \k_{6}^2\Big).
\end{split}
\end{align*}}
\begin{figure}[t!]
\includegraphics[scale=0.3]{./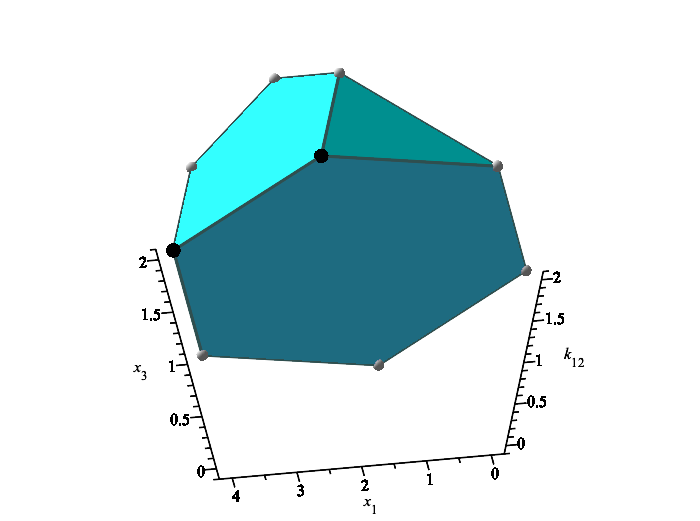}
\caption{{\small Newton Polytope of $\overline{p}_{\eta',H}$ as a polynomial in $x_1,x_3,\k_{12}$. In black we show two negative vertices. }} \label{fig:connected}
\end{figure}

The polynomial $q$  has degree 3 in $s$, its leading coefficient is negative and the coefficients of degree $0$ and $1$ are positive. By Descartes' rule of signs, $q$ has exactly one positive root. For $s=1$,  $q(1)= \overline{p}_{\eta,H}(z_1,z_3,\k'_{12})$ is negative, from where it follows that $q(s)<0$ for all $s\geq 1.$  Hence,   
$\eta(s) =(K_1,K_2,K_3,K_4,\k_3,\k_{6},\k_{9},\k'_{12}s^{-2})\in X$ for all $s\geq 1$.
As $s$ increases, $\k'_{12} s^{-2}$ decreases and hence $a(\eta(s))$  decreases. For  $s>\sqrt{\frac{\k_3 \k'_{12}}{\k_6 \k_9}}$, we have $a(\eta(s))<0$  and hence $a(\eta)\in Y_{<0}$. This provides the desired path, which proves 
 the first part of the statement.

\smallskip
To study $Z:=\R^8_{>0}\setminus X$, note that the set of points $\eta$ with $K_1=K_4$ and $a(\overline{\k})\geq 0$ is path connected by Lemma~\ref{lem:connect}, and is  further a subset of $Z$ by Corollary~\ref{cor:apos_circuit}. 
By Lemma~\ref{lemma:branches}, in $Z$ there are paths joining any $\eta=(K_1,K_2,K_3,K_4,\k_3,\k_6,\k_9,\k_{12})$ in $Z$ to  $\eta'=(K_1,K_2,K_3,K_1,\k_3,\k_6,\k_9,\k_{12})$. Hence $Z:=\R^8_{>0}\setminus X$ is path connected. This concludes the proof of the theorem.
 \end{proof}

\begin{remark}
According to Theorem~\ref{thm:connected}, the region $X$ of parameters $\eta$ that enable multistationarity is connected in $\R_{>0}^8.$ For this system, the preimage of $X$ by $\pi$, that is, the set of parameters $\k\in \R^{12}_{>0}$ that enable multistationarity, is also path connected in $\R_{>0}^{12}$. To see this, it is enough to study the map $(\k_1,\k_2,\k_3) \mapsto (\frac{\k_2 +\k_3}{\k_1},\k_3)$. The fiber of this map of each point in the image is one dimensional 
and connected. The map $\pi$ comprises four disjoint copies of such a map, and hence the fiber by $\pi$ of a point in the image is four dimensional and connected. Therefore, the preimage of $X$ by $\pi$ is path-connected. 
\end{remark}

\subsection*{Acknowledgements} EF and NK acknowledge funding from the Independent Research Fund of Denmark. The project was started while NK was at MPI, MIS Leipzig and further developed while NK was at the University of Copenhagen. TdW and OY acknowledge the funding from the DFG grant WO 2206/1-1. Bernd Sturmfels is gratefully acknowledged for useful discussions and for bringing the authors together. Alicia Dickenstein and Carsten Wiuf are thanked for comments on the manuscript.


\begin{thebibliography}{10}

\bibitem{dickenstein:regions}
F.~Bihan, A.~Dickenstein, and Giaroli M.
\newblock Lower bounds for positive roots and regions of multistationarity in
  chemical reaction networks.
\newblock {\em J. Algebra}, 542:367--411, 2020.

\bibitem{xiao:regularchain}
C.~Chen, J.~H. Davenport, M.~Moreno~Maza, B.~Xia, and R.~Xiao.
\newblock Computing with semi-algebraic sets represented by triangular
  decomposition.
\newblock In {\em Proceedings of the 2011 International Symposium on Symbolic
  and Algebraic Computation (ISSAC 2011)}, pages 75--82. ACM Press, 2011.

\bibitem{cohen}
P.~Cohen.
\newblock The structure and regulation of protein phosphatases.
\newblock {\em Annu. Rev. Biochem.}, 58:453--508, Jan 1989.

\bibitem{conradi-feliu-mincheva}
C.~Conradi, E.~Feliu, and M.~Mincheva.
\newblock On the existence of hopf bifurcations in the sequential and
  distributive double phosphorylation cycle.
\newblock {\em Mathematical Biosciences and Enginnering}, 1(17):494--513, 2020.

\bibitem{FeliuPlos}
C.~Conradi, E.~Feliu, M.~Mincheva, and C.~Wiuf.
\newblock Identifying parameter regions for multistationarity.
\newblock {\em PLoS Comput. Biol.}, 13(10):e1005751, 2017.

\bibitem{conradi-erk}
C.~Conradi and D.~Flockerzi.
\newblock {Multistationarity in mass action networks with applications to ERK
  activation}.
\newblock {\em J. Math. Biol.}, 65(1):107--156, 2012.

\bibitem{conradi-PNAS}
C.~Conradi, D.~Flockerzi, J.~Raisch, and J.~Stelling.
\newblock Subnetwork analysis reveals dynamic features of complex (bio)chemical
  networks.
\newblock {\em Proc. Nat. Acad. Sci.}, 104(49):19175--80, 2007.

\bibitem{conradi-mincheva}
C.~Conradi and M.~Mincheva.
\newblock Catalytic constants enable the emergence of bistability in dual
  phosphorylation.
\newblock {\em J. R. S. Interface}, 11(95), 2014.

\bibitem{shiu:hopf}
C.~Conradi, M.~Mincheva, and A.~Shiu.
\newblock Emergence of oscillations in a mixed-mechanism phosphorylation
  system.
\newblock {\em Bull. Math. Biol.}, 81(6):1829--1852, 2019.

\bibitem{conradi-shiu-review}
C.~Conradi and A.~Shiu.
\newblock Dynamics of post-translational modification systems: recent progress
  and future directions.
\newblock {\em Biophys. J.}, 114(3):507--515, 2018.

\bibitem{craciun2008}
G.~Craciun, J.~W. Helton, and R.~J. Williams.
\newblock Homotopy methods for counting reaction network equilibria.
\newblock {\em Mathematical biosciences}, 216(2):140--149, 2008.

\bibitem{control}
P.~Donnell, M.~Banaji, A.~Marginean, and C.~Pantea.
\newblock Control: an open source framework for the analysis of chemical
  reaction networks.
\newblock {\em Bioinformatics}, 30(11), 2014.

\bibitem{DresslerIlimandeWoldd:ConstrainedOptViaCircuitPoly}
M.~Dressler, S.~Iliman, and T.~de~Wolff.
\newblock An approach to constrained polynomial optimization via nonnegative
  circuit polynomials and geometric programming.
\newblock {\em J. Symb. Comput.}, 91, 2016.

\bibitem{Dressler:Iliman:deWolff:Positivstellensatz}
M.~Dressler, S.~Iliman, and T.~de~Wolff.
\newblock A {P}ositivstellensatz for {S}ums of {N}onnegative {C}ircuit
  {P}olynomials.
\newblock {\em SIAM J. Appl. Algebra Geom.}, 1(1):536--555, 2017.

\bibitem{crnttoolbox}
P.~Ellison, M.~Feinberg, H.~Ji, and D.~Knight.
\newblock Chemical reaction network toolbox, version 2.2.
\newblock Available online at \url{http://www.crnt.osu.edu/CRNTWin}, 2012.

\bibitem{Feinbergss}
M.~Feinberg.
\newblock The existence and uniqueness of steady states for a class of chemical
  reaction networks.
\newblock {\em Arch. Rational Mech. Anal.}, 132(4):311--370, 1995.

\bibitem{feliu_newinj}
E.~Feliu.
\newblock Injectivity, multiple zeros, and multistationarity in reaction
  networks.
\newblock {\em Proceedings of the Royal Society A}, doi:10.1098/rspa.2014.0530,
  2014.

\bibitem{Feliu:royal}
E.~Feliu and C.~Wiuf.
\newblock Enzyme-sharing as a cause of multi-stationarity in signalling
  systems.
\newblock {\em J. R. S. Interface}, 9(71):1224--32, 2012.

\bibitem{fwptm}
E.~Feliu and C.~Wiuf.
\newblock Variable elimination in post-translational modification reaction
  networks with mass-action kinetics.
\newblock {\em J. Math. Biol.}, 66(1):281--310, 2013.

\bibitem{feng:allosteric}
S.~Feng, M.~S\'aez, C.~Wiuf, E.~Feliu, and O.S. Soyer.
\newblock {{C}ore signalling motif displaying multistability through
  multi-state enzymes}.
\newblock {\em J R S Interface}, 13(123), 2016.

\bibitem{FHC14}
D.~Flockerzi, K.~Holstein, and C.~Conradi.
\newblock {N-site Phosphorylation Systems with 2N-1 Steady States}.
\newblock {\em Bull. Math. Biol.}, 76(8):1892--1916, 2014.

\bibitem{rendall-2site}
J.~Hell and A.~D. Rendall.
\newblock A proof of bistability for the dual futile cycle.
\newblock {\em Nonlinear Anal. Real World Appl.}, 24:175--189, 2015.

\bibitem{rendall-MAPK}
J.~Hell and A.~D. Rendall.
\newblock Dynamical features of the map kinase cascade.
\newblock In Cham Springer, editor, {\em Modeling Cellular Systems}, volume~11.
  2017.

\bibitem{Huang-Ferrell}
C.~Y. Huang and J.~E. Ferrell.
\newblock {{U}ltrasensitivity in the mitogen-activated protein kinase cascade}.
\newblock {\em Proc. Natl. Acad. Sci. U.S.A.}, 93:10078--10083, 1996.

\bibitem{Iliman:deWolff:Circuits}
S.~Iliman and T.~de~Wolff.
\newblock Amoebas, nonnegative polynomials and sums of squares supported on
  circuits.
\newblock {\em Res. Math. Sci.}, 3(9), 2016.

\bibitem{Kurpisz:deWolff:NewDependenciesPolyOpt}
A.~Kurpisz and T.~de~Wolff.
\newblock New dependencies of hierarchies in polynomial optimization.
\newblock In J.H. Davenport, D.~Wang, M.~Kauers, and R.J. Bradford, editors,
  {\em Proceedings of the 2019 on International Symposium on Symbolic and
  Algebraic Computation, {ISSAC} 2019, Beijing, China, July 15-18, 2019.},
  pages 251--258. {ACM}, 2019.

\bibitem{laurent1999}
M.~Laurent and N.~Kellershohn.
\newblock Multistability: a major means of differentiation and evolution in
  biological systems.
\newblock {\em Trends Biochem. Sciences}, 24(11):418--422, 1999.

\bibitem{Markevich-mapk}
N.~I. Markevich, J.~B. Hoek, and B.~N. Kholodenko.
\newblock {{S}ignaling switches and bistability arising from multisite
  phosphorylation in protein kinase cascades}.
\newblock {\em J. Cell Biol.}, 164:353--359, 2004.

\bibitem{Motzkin:AMGMIneq}
T.S. Motzkin.
\newblock The arithmetic-geometric inequality.
\newblock In {\em Inequalities: Proceedings, Volume 1}, chapter~10, pages
  203--224. Academic Press, 1967.

\bibitem{ozbudak2004}
E.~M. Ozbudak, M.~Thattai, H.~N. Lim, B.~I. Shraiman, and A.~Van~Oudenaarden.
\newblock Multistability in the lactose utilization network of escherichia
  coli.
\newblock {\em Nature}, 427(6976):737--740, 2004.

\bibitem{pantea-jac}
C.~Pantea, H.~Koeppl, and G.~Craciun.
\newblock Global injectivity and multiple equilibria in uni- and bi-molecular
  reaction networks.
\newblock {\em Discrete Contin. Dyn. Syst. Ser. B}, 17(6):2153--2170, 2012.

\bibitem{Dickenstein-MESSI}
M.~P\'{e}rez~Mill\'{a}n and A.~Dickenstein.
\newblock The structure of {MESSI} biological systems.
\newblock {\em SIAM J. Appl. Dyn. Syst.}, 17:1650--1682, 2018.

\bibitem{PerezMillan}
M.~P{\'e}rez~Mill{\'a}n, A.~Dickenstein, A.~Shiu, and C.~Conradi.
\newblock Chemical reaction systems with toric steady states.
\newblock {\em Bull. Math. Biol.}, 74:1027--1065, 2012.

\bibitem{qiao:oscillations}
L.~Qiao, R.~B. Nachbar, I.~G. Kevrekidis, and S.~Y. Shvartsman.
\newblock {{B}istability and oscillations in the {H}uang-{F}errell model of
  {M}{A}{P}{K} signaling}.
\newblock {\em PLoS Comput. Biol.}, 3(9):1819--1826, 2007.

\bibitem{Reznick:AGI}
B.~Reznick.
\newblock Forms derived from the arithmetic-geometric inequality.
\newblock {\em Math. Ann.}, 283(3):431--464, 1989.

\bibitem{TG-rational}
M.~Thomson and J.~Gunawardena.
\newblock {{T}he rational parameterization theorem for multisite
  post-translational modification systems}.
\newblock {\em J. Theor. Biol.}, 261:626--636, 2009.

\bibitem{TG-Nature}
M.~Thomson and J.~Gunawardena.
\newblock {{U}nlimited multistability in multisite phosphorylation systems}.
\newblock {\em Nature}, 460:274--277, 2009.

\bibitem{torres:stability}
A.~Torres and E.~Feliu.
\newblock Detecting parameter regions for bistability in reaction networks.
\newblock {\em arXiv}, 1909.13608, 2019.

\bibitem{volpert}
A.~I. Vol'pert.
\newblock Differential equations on graphs.
\newblock {\em Math. {USSR}-{S}b}, 17:571–582, 1972.

\bibitem{Wang:2008dc}
L.~Wang and E.~D. Sontag.
\newblock {On the number of steady states in a multiple futile cycle}.
\newblock {\em J. Math. Biol.}, 57(1):29--52, 2008.

\bibitem{wiuf-feliu}
C.~Wiuf and E.~Feliu.
\newblock Power-law kinetics and determinant criteria for the preclusion of
  multistationarity in networks of interacting species.
\newblock {\em SIAM J. Appl. Dyn. Syst.}, 12:1685--1721, 2013.

\bibitem{Xiong:2003jt}
W.~Xiong and J.~E. Ferrell~Jr.
\newblock {A positive-feedback-based bistable 'memory module' that governs a
  cell fate decision}.
\newblock {\em Nature}, 426(6965):460--465, 2003.

\bibitem{ziegler:polytopes}
G.~M. Ziegler.
\newblock {\em Lectures on polytopes}.
\newblock Graduate texts in mathematics, 152. Springer-Verlag, New York, 1995.

\end{thebibliography}

 \end{document}